\definecolor{darkgreen}{rgb}{0,0.4,0}
\definecolor{BrickRed}{rgb}{0.65,0.08,0}
\def\input@path{{./}{../../pics/}}  
\newcommand{\exzk}[2]{[z^{#1}]_{#2}}
\newcommand{\exzkn}{\exzk{n}{\zeta_k}}
\newcommand{\coefb}{C_7}
\newcommand{\coefd}{C'_7}
\newcommand{\walksym}{\omega}
\newcommand{\walk}[1]{\walksym_{#1}}
\newcommand{\stepset}{\mathcal{S}}
\newcommand{\LandauO}{\mathcal{O}}
\newcommand{\Dc}{\mathcal{D}}
\newcommand{\C}{\mathbb{C}}
\newcommand{\N}{\mathbb{N}}
\newcommand{\Q}{\mathbb{Q}}
\newcommand{\Z}{{\mathbb Z}}
\newtheorem{theo}{Theorem}[section]
\newtheorem{lemma}[theo]{Lemma}
\newtheorem{prop}[theo]{Proposition}
\newtheorem{definition}[theo]{Definition}
\newenvironment{example}[1][]{\refstepcounter{theo} \medskip \noindent \textbf{\textit{Example \thetheo #1:}} }{ \hfill $_{\blacksquare} $\\}
\newenvironment{examplenoend}[1][]{\refstepcounter{theo} \medskip \noindent \textbf{\textit{Example \thetheo #1:}} }{}
\providecommand{\keywords}[1]{\textbf{\textit{Keywords: }} #1}
\begin{document}
\baselineskip \the\baselineskip plus .1pt
\author{Cyril Banderier  \and  Michael Wallner}
\title{The kernel method for\\ lattice paths below a line of rational slope}  
\date{}
\maketitle

\begin{center}
\noindent \url{http://lipn.fr/~banderier/},  CNRS \& Univ. Paris Nord, France \\
\url{http://dmg.tuwien.ac.at/mwallner/}, TU Wien, Austria
\end{center}

 \vspace{3mm} 
\begin{tabular}{cc}
\begin{tabular}{c} \includegraphics[width=3cm]{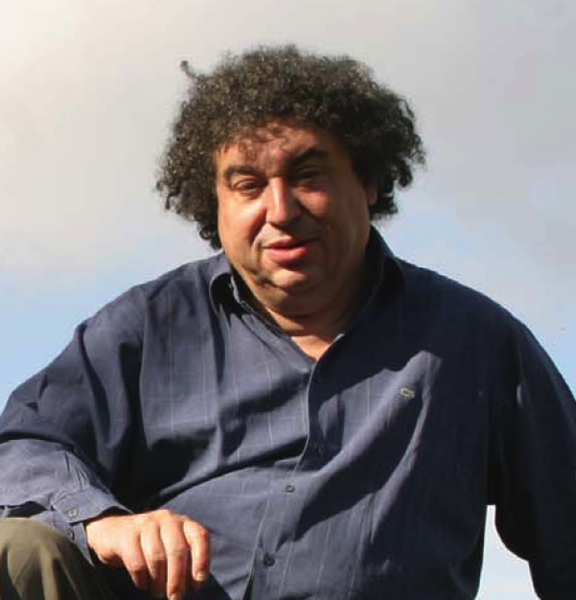} \end{tabular} &
\begin{tabular} {c}
\SixFlowerAltPetal\SixFlowerAltPetal\SixFlowerAltPetal\\
 \multicolumn{1}{p{11cm}}{{\em  \quad We dedicate this article to the memory of Philippe Flajolet, who was and will remain a guide and a wonderful source of inspiration for so many of us. \quad}} \\
 \SixFlowerAltPetal\SixFlowerAltPetal\SixFlowerAltPetal
\end{tabular}
\end{tabular}
 \, \vspace{3mm}

{\it \small
This article corresponds to the article accepted for publication in the Developments in Mathematics Series (Springer), associated with the 8th International Conference on Lattice Path Combinatorics and Applications.
This work considerably extends our preliminary version ``Lattice paths of slope~2/5'' which appeared in 
the Proceedings of the ANALCO15 San Diego Conference.}

\begin{abstract}
We analyse some enumerative and asymptotic properties of lattice paths below a line of rational slope.
We illustrate our approach with Dyck paths under a line of slope~$2/5$.
This answers Knuth's problem \#4 from his ``Flajolet lecture'' during the conference ``Analysis of Algorithms'' (AofA'2014) in Paris in June 2014.
Our approach extends the work of Banderier and Flajolet for asymptotics and enumeration of directed lattice paths to 
the case of generating functions involving several dominant singularities, and has applications 
to a full class of problems involving some ``periodicities''.

A key ingredient in the proof is the generalization of an old trick by Knuth himself 
(for enumerating permutations sortable by a stack),
promoted by Flajolet and others as the ``kernel method''.
All the corresponding generating functions are algebraic,
and they offer some new combinatorial identities, 
which can also be tackled in the {\em A=B} spirit of Wilf--Zeilberger--Petkov{\v s}ek.

We show how to obtain similar results for any rational slope. 
An interesting case is e.g.~Dyck paths below the slope~$2/3$
(this corresponds to the so-called Duchon's club model), for which we solve a conjecture
related to the asymptotics of the area below such lattice paths.
Our work also gives access to 
lattice paths below an irrational slope (e.g.~Dyck paths below $y=x/\sqrt{2}$),
a problem that we study in a companion article.
\end{abstract}

\keywords{lattice paths, generating function, analytic combinatorics, singularity analysis, kernel method, generalized Dyck paths, algebraic function, rational Catalan combinatorics, periodic support, Bizley formula, Grossman formula} 
\pagebreak
\section{Introduction}
\label{sec:intro}

For the enumeration of simple lattice paths (allowing just the jumps $-1$, $0$, and $+1$), many methods are often used, 
like e.g.~the Lagrange inversion, determinant techniques, continued fractions, orthogonal polynomials, bijective proofs, and a lot is known in such cases~\cite{Flajolet80,Kr15,Mohanty79,Narayana79}. 
These nice methods do not apply to more complex cases of more generic jumps (or, if one adds a spacial boundary, like a line of rational slope).
It is then possible to use some ad hoc factorization due to Gessel~\cite{Gessel80}, 
or context-free grammars to enumerate such lattice paths~\cite{LabelleYeh90,Merlini96,Duchon00}.
One drawback of the grammar approach is that it leads to heavy case-by-case computations (resultants of equations of huge degree).
In this article, we show how to proceed for the enumeration and the asymptotics in these harder cases:
our techniques are relying on the ``kernel method'' 
which (contrary to the context-free grammar approach) offers access to the true simple {\em generic} structure of the final generating functions 
and the {\em universality} of their asymptotics via singularity analysis. 

Let us start with the history of what Philippe Flajolet named  the ``kernel method'': 
It has been part of the folklore of combinatorialists for some time and its simplest application deals with functional equations 
(with apparently more unknowns than equations!) of the form
\[
K(z,u)F(z,u)= p(z,u)+q(z,u) G(z),
\]
where the functions $p, q$, and $K$ are given and where $F, G$ are the unknown generating functions we want to determine.
$K(z,u)$ is a polynomial in $u$ which we call the ``kernel'' as
we ``test'' this functional equation on functions $u(z)$ canceling this kernel\footnote{The ``kernel method'' that we mention here for functional equations in combinatorics has nothing to do with what is known as the ``kernel method'' or ``kernel trick'' in statistics or  machine learning. Also, there is no integral directly related to our kernel. For sure, in our case the word kernel was chosen as its zeros  will play a key role, and also, in one sense,  as this kernel has in its core the full description of the problem, and its resolution.}.
The simplest case is when there is only one branch, $u_1(z)$, such that $K(z,u_1(z))=0$ and $u_1(0)=0$;
in that case, a single substitution gives a closed-form solution for $G$:
namely, $G(z)=-p(z,u_1(z))/q(z,u_1(z))$.

Such an approach was introduced in 1969 by Knuth to enumerate permutations sortable by a stack, 
see the detailed solution to Exercise 2.2.1--4 in {\em The Art of Computer Programming}
(\!\cite[pp.~536--537]{Kn69} and also Ex.~2.2.1.11 therein),
which presents a ``new method for solving the ballot problem'', 
for which the kernel $K$  is a quadratic polynomial (this specific case involves just one branch~$u_1(z)$).

In combinatorics exist many applications of this method for solving variants of the above functional equation:
one is known as the ``quadratic method'' in map enumeration,
as initially developed in 1965 by Brown during his collaboration with Tutte (see Section 2.9.1 from~\cite{Brown65}, 
and~\cite{bfss01} for the analysis of about a dozen families of maps).  
During nearly 30 years, the kernel method was dealing only with ``quadratic cases''  like the ones of Brown for maps or 
of Knuth for a vast amount of examples involving trees, polyominoes, walks~\cite{Prodinger03},    
or more exotic applications like e.g.~the one mentioned by Odlyzko in his wonderful survey on asymptotic methods in enumeration~\cite{FanGrahamOdlyzko95}.
Then, in 1998, the initial approach by Knuth was generalized by a group of four people, all of them being in contact and benefiting from mutual insights: 
Banderier in his memoir~\cite{Ba98} solved some problems related to generating trees and walks, this later lead to the article with Flajolet~\cite{BaFl02} 
and to the solution of some conjectures due to Pinzani in the article with Bousquet-M\'elou et al.~\cite{hexa}.
At the same time, Petkov{\v s}ek analysed linear multivariate recurrences in~\cite{Pet98}, a work later extended in~\cite{BoPe00}.
All these articles contributed to turn the original approach by Knuth into a method working when the equation has more unknowns (and the kernel has more roots).
This solves equations of the type
\[
K(z,u)F(z,u)= \sum_{i=1}^m p_i(z,u) G_i(z),
\]
where $K$ and the $p_i$'s are known polynomials, and where $F$ and the $G_i$'s are unknown functions.

A few years later, Bousquet-M\'elou and Jehanne~\cite{BoJe06} solved the case of algebraic equations in $F$ of arbitrary degree:
$$P(z,u,F(z,u),G_1(z),\dots,G_m(z))=0.$$

The kernel method thus plays a key role in many combinatorial problems. A few examples are directed lattice paths and their asymptotics~\cite{BaFl02, BousquetMelou08}, additive parameters like area~\cite{BaGi06, Schwerdtfeger14}, 
generating trees~\cite{hexa}, pattern avoiding permutations~\cite{Mansour}, prudent walks~\cite{Duchi05,BacherBeaton14}, 
urn models~\cite{SchulteGeers15}, statistics in posets~\cite{Fusy}, and many other nice combinatorial structures...

\smallskip
Independently, in probability theory, in the '70s, Malyshev developed an approach now sometimes called the ``iterated kernel method'' in order to analyse
nearest neighbour random walks in queuing theory. These lead to the following type of equations:
\begin{align*}
	K(t,x,y) F(t,x,y) = p_0(t,x,y)  + p_1(t,x,y)  F(x,0) + p_2(t,x,y)  F(0,y),
\end{align*}
where $K$ and the $p_i$'s are known polynomials, while $F$ is the unknown  function we are looking for. This approach culminated in the book~\cite{Fayolle99}, which
was later revisited in the 2000s (e.g. in~\cite{raschel}), also with a more combinatorial point of view in~\cite{BousquetMelouMishna10}. 
It is still the subject of vivid activities, including the extension to higher dimensions~\cite{BoBoKaMe14}.
Moreover, the kernel method also gives the transient solution of some birth-death queuing processes~\cite{Mohanty07}.

\smallskip
Also independently, in statistical mechanics, several authors developed other incarnations of the kernel method.
 E.g., the WKB limit of the Bethe Ansatz (also called thermodynamical Bethe Ansatz) often leads to algebraic equations and to what is called the algebraic Bethe Ansatz~\cite{Gaudin83}.
The kernel method is also used in the study of the Ising model of bicoloured maps 
(see Theorem~8.4.5 in~\cite{Eynard16}, and pushing further this method led Eynard to his ``topological recurrence''),
and in many articles on enumeration related to directed animals, polymers, walks~\cite{vanRensburgRechnitzer01, vanRensburg05, vanRensburgPrellbergRechnitzer08}.

\bigskip

\pagebreak
After this short history of the kernel method, we want to show how to use it to derive explicit counting formulae and asymptotics for directed lattice paths below a line of rational slope.
In the article by Banderier \& Flajolet~\cite{BaFl02}, the class of directed lattice paths in $\Z^2$ was investigated thoroughly by means of analytic combinatorics (see \cite{flaj09}).
Our work is an extension of this article in mainly five ways:
\begin{itemize}
\item[1.] Our work involves lattice paths having a ``periodic support'',
the comment in~\cite[Section~3.3]{BaFl02} was incomplete for this more cumbersome case,
indeed there are then several dominant singularities, and we had to revisit in more detail the structural properties of the roots associated to the kernel method
in order  to understand the contribution of each of these singularities.
It is pleasant that this new understanding gives a tool 
to deal with the asymptotics of many other lattice path enumeration problems.
\item[2.] We get new explicit formulae for the generating functions of walks with starting and ending at altitude other than 0, and links with complete symmetric homogeneous polynomials.
\item[3.] We give new closed forms for the coefficients of these generating functions.
\item[4.] We have an application to some harder parameters (like the area below a lattice path).
\item[5.] We extend the results to walks below a line of \textit{arbitrary rational} slope, 
             paving the way for our forthcoming article on walks below a line of arbitrary \textit{irrational} slope~\cite{BanderierWallner16}.
\end{itemize}

Let us give a definition of the lattice paths we consider:
 \begin{definition}[Jumps and lattice paths] \label{def:LP}
 A \emph{step set} $\stepset \subset \Z^2$ is a finite set of vectors $\{ (x_1,y_1), \ldots, (x_m,y_m)\}$. 
An $n$-step \emph{lattice path} or \emph{walk} is a sequence of vectors $(v_1,\ldots,v_n)$, such that $v_j$ is in $\stepset$. 
Geometrically, it may be interpreted as a sequence of points $\walksym =(\walk{0},\walk{1},\ldots,\walk{n})$ where $\walk{i} \in \Z^2,~\walk{0} = (0,0)$ (or another starting point)
and $\walk{i}-\walk{i-1} = v_i$ for $i=1,\ldots,n$.
The elements of $\stepset$ are called \emph{steps} or \emph{jumps}. 
The \emph{length} $|\walksym|$ of a lattice path is its number $n$ of jumps. 
 \end{definition}
The lattice paths can have different additional constraints shown in Table~\ref{fig-4types}.
  \begin{table*}[t]
 \small
 \begin{center}\renewcommand{\tabcolsep}{3pt}
 \begin{tabular}{|c|c|c|}
 \hline
 & ending anywhere & ending at 0\\
 \hline
 \begin{tabular}{c} unconstrained \\ (on~$\Z$) \end{tabular}
 & \begin{tabular}{c} 

 {\includegraphics[width=63mm]{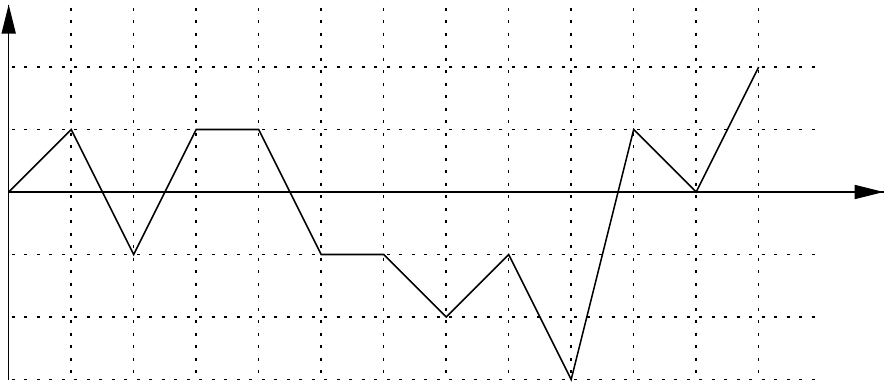}} 
\\ 
 walk/path ($\cal W$) 
 \end{tabular}
 & \begin{tabular}{c} 

 {\includegraphics[width=63mm]{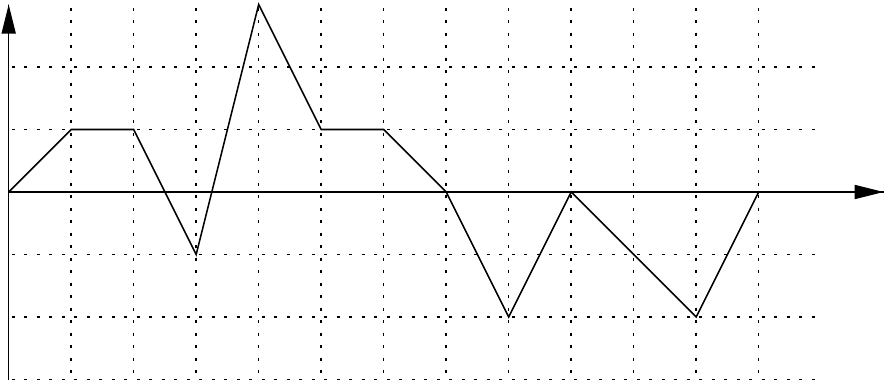}} 
\\
 bridge ($\cal B$)
 \end{tabular} \\
 \hline
 \begin{tabular}{c}constrained\\ (on $\N$) \end{tabular}
 & \begin{tabular}{c} 
 \includegraphics[width=63mm]{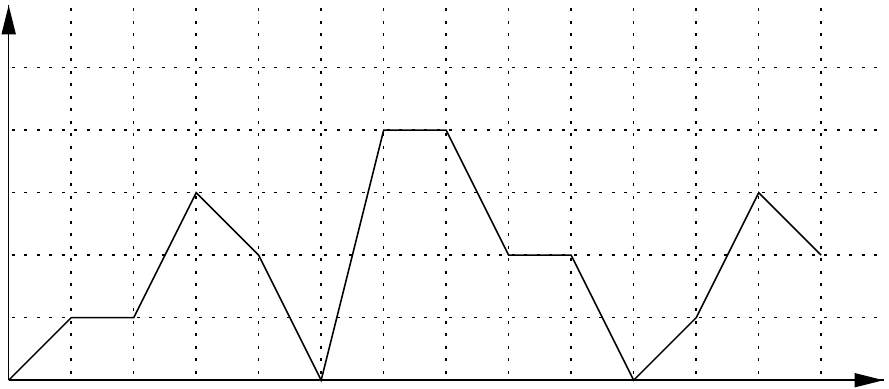} 
\\ 
 meander ($\cal M$)\\ 
 \end{tabular}
 & \begin{tabular}{c} 
 {\includegraphics[width=63mm]{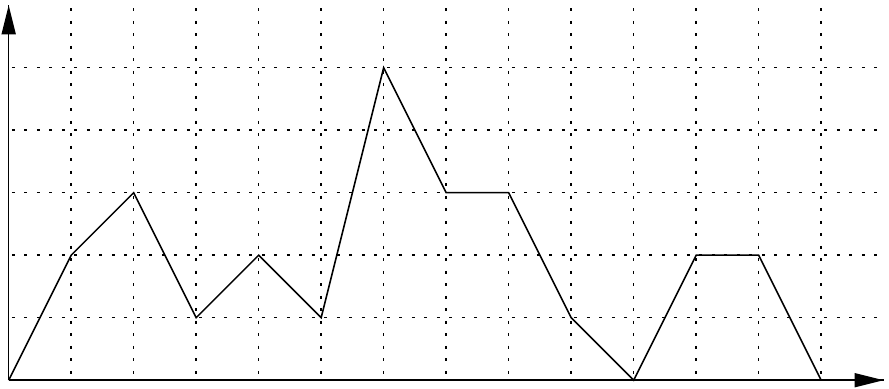}} 
\\ 
 excursion ($\cal E$)\\ 
 \end{tabular}\\
 \hline
 \end{tabular}
 \end{center}
 \caption{\label{fig-4types} 
 The four types of paths: walks, bridges, meanders, and excursions.
We refer to these walks as the Banderier--Flajolet model,  in contrast to the model in which we will consider lattice paths below a rational slope boundary.}
 \end{table*}
 
We restrict our attention to \emph{directed paths} which are defined by the fact that, for each jump $(x,y) \in \stepset$, one must have $x \geq  0$.  
The next definition allows to merge the probabilistic point of view (random walks) and the combinatorial point of view (lattice paths):
 \begin{definition}[Weighted lattice paths]
 For a given step set $\stepset = \{s_1,\ldots,s_m\}$, we define the respective \emph{system of weights}\index{lattice path!weights} 
as $ \{w_1,\ldots,w_m\}$ where $w_j >0$ is the weight associated to step $s_j$ for $j=1,\ldots,m$. 
The \emph{weight of a path} is defined as the product of the weights of its individual steps. 
 \end{definition}

\pagebreak
\textbf{Plan of this article.} 
\begin{itemize}
\item 
First, in Section~\ref{sec:imaginary}, we recall the fundamental results for lattice paths below a line of slope~$\alpha$ (where $\alpha$ is an integer or the inverse of an integer), and the links with trees.
\item 
Then, in Section~\ref{sec:KnuthProblem}, we give Knuth's open problem on lattice paths below a line of slope~$2/5$. 
\item 
In Section~\ref{sec:bijection}, we give a bijection between lattice paths below any line of rational slope,
and lattice paths from the Banderier--Flajolet model.
\item 
In Section~\ref{sec:funceq}, the needed bivariate generating function is defined and the governing functional equation is derived and solved:
here the ``kernel method'' plays the most significant role in order to obtain the generating function 
(as typical for many combinatorial objects which are recursively defined with a ``catalytic parameter''). 
\item 
In Section~\ref{sec:asymptotics}, we tackle some questions on asymptotics, thus answering the question of Knuth.
\item 
In Section~\ref{sec:nakatoku}, we comment on links with previous results of Nakamigawa and Tokushige, which motivated Knuth's problem, and we explain why some cases lead to particularly striking new closed-form formulae.
\item 
In Section~\ref{sec:duchon}, we analyse what happens for the Duchon's club model (lattice paths below a line of slope~$2/3$), and we extend our formulae to general rational slopes.
\end{itemize}

\pagebreak

\section[Trees, frac.~trees, imag.~trees]{Trees, fractional trees, imaginary trees}
\label{sec:imaginary}

Due to their fundamental role in computer science trees were the subject of many investigations,
and there exist many alternative representations of this key data structure. One of the most useful ones
is an encoding by ``traversing'' the  tree via a depth-first traversal (or via a breadth-first traversal). This directly gives a lattice path 
associated to the original tree.
In fact,  what are called ``simple families of ordered trees'' 
(rooted ordered trees  in which each node has a degree prescribed to be in a given set) are in bijection 
with lattice paths. The reason is the famous \emph{{\L}ukasiewicz correspondence} between trees and lattice paths, see Figure~\ref{fig:LukaBijection}.

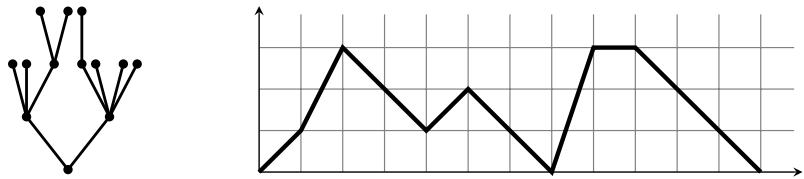
\begin{figure}[!ht] 
	\centering
	\scalebox{0.7}{\newcommand{\scaleA}{0.26}      
\newcommand{\scaleB}{0.26} 
\newcommand{\scaleC}{0.26} 

\newcommand{\pathlinewidth}{1.5}
\newcommand{\gridlinewidth}{thin}

\tikzset{dot/.style={circle,fill=black,inner sep=0,minimum size=5pt}}
                     
\begin{tikzpicture}[>=stealth]
			
	\draw[-, line width=\pathlinewidth] (-1*\scaleC,0) -- (-4*\scaleA,1){};
	\draw[-, line width=\pathlinewidth] (-1*\scaleC,0) -- (2*\scaleA,1){};
	
	\draw[-, line width=\pathlinewidth] (-4*\scaleA,1) -- (-5*\scaleB,2){};
	\draw[-, line width=\pathlinewidth] (-4*\scaleA,1) -- (-4*\scaleB,2){};	
	\draw[-, line width=\pathlinewidth] (-4*\scaleA,1) -- (-2*\scaleB,2){};
	
	\draw[-, line width=\pathlinewidth] (-2*\scaleB,2) -- (-3*\scaleC,3){};
	\draw[-, line width=\pathlinewidth] (-2*\scaleB,2) -- (-1*\scaleC,3){};	
		
	\draw[-, line width=\pathlinewidth] (2*\scaleA,1) -- (0*\scaleB,2){};
	\draw[-, line width=\pathlinewidth] (2*\scaleA,1) -- (1*\scaleB,2){};	
	\draw[-, line width=\pathlinewidth] (2*\scaleA,1) -- (3*\scaleB,2){};
	\draw[-, line width=\pathlinewidth] (2*\scaleA,1) -- (4*\scaleB,2){};
	
	\draw[-, line width=\pathlinewidth] (0*\scaleB,2) -- (0*\scaleC,3){};
	
	\node [dot] at  (-1*\scaleC,0) {};
	\node [dot] at  (-4*\scaleA,1) {};
	\node [dot] at  (2*\scaleA,1) {};
	\node [dot] at  (-5*\scaleB,2) {};
	\node [dot] at  (-4*\scaleB,2) {};
	\node [dot] at  (-2*\scaleB,2) {};
	\node [dot] at  (-3*\scaleC,3) {};
	\node [dot] at  (-1*\scaleC,3) {};
	\node [dot] at  (0*\scaleB,2) {};
	\node [dot] at  (1*\scaleB,2) {};
	\node [dot] at  (3*\scaleB,2) {};
	\node [dot] at  (4*\scaleB,2) {};
	\node [dot] at  (0*\scaleC,3) {};
	
\end{tikzpicture}} \qquad
	\scalebox{1.1}{\newcommand{\scale}{0.5}      
\newcommand{\pathlinewidth}{1.5}
\newcommand{\gridlinewidth}{thin}     
                
\begin{tikzpicture}[>=stealth]
	
	\draw[step=.5cm,gray,\gridlinewidth] (0,0) grid (6.4,1.9);
	
	\draw[->] (0,0) -- (0,2) {};
	\draw[->] (0,0) -- (6.5,0) {};
	
	\draw[-, line width=\pathlinewidth] (0,0)
	\foreach \x/\y in  {1/1, 2/3, 3/2, 4/1, 5/2, 6/1, 7/0, 8/3, 9/3, 10/2, 11/1, 12/0}
	{	 -- (\scale*\x,\scale*\y) };
	
\end{tikzpicture}}
	\caption{The {\L}ukasiewicz bijection between trees and lattice paths:  A little fly is traveling along the full contour of the tree 
	starting from the root. Whenever it meets a new node, one draws a new jump of size ``arity of the node $-1$'' in the lattice path. Without loss of generality, one can always remove the very last jump (as it will always be a ``$-1$'') and thus we get an excursion which is in bijection with the initial tree. 
It is straightforward to reverse this bijection. Additionally, note that any deterministic traversal of the tree offers such a bijection, so it could be a depth-first traversal, but also {e.g.}~a breadth-first traversal.} 
	\label{fig:LukaBijection}
\end{figure}

Basic manipulations on lattice paths also show that \emph{Dyck paths} (paths with jumps North and East, see Figure~\ref{traversal})
below the line $y= \alpha x$
 ($\alpha$ being here a positive integer), or below the line $y= x/\alpha $, are in bijection with trees (of arity $\alpha$, i.e.,~every node has exactly $0$ or $\alpha$ children). 
 
\begin{figure}[!hb]
\scalebox{0.95}{
\begin{tabular}{ccc}
       \begin{tabular}{c}
	\scalebox{0.92}{\newcommand{\scaleA}{0.26}      
\newcommand{\scaleB}{0.26} 
\newcommand{\scaleC}{0.26} 

\newcommand{\pathlinewidth}{1.5}
\newcommand{\gridlinewidth}{thin}

\tikzset{dot/.style={circle,fill=black,inner sep=0,minimum size=5pt}}
                     
\begin{tikzpicture}[>=stealth]
			
	\draw[-, line width=\pathlinewidth] (0*\scaleC,0) -- (-6*\scaleA,1){};
	\draw[-, line width=\pathlinewidth] (0*\scaleC,0) -- (0*\scaleA,1){};
	\draw[-, line width=\pathlinewidth] (0*\scaleC,0) -- (6*\scaleA,1){};
	
	\draw[-, line width=\pathlinewidth] (0*\scaleA,1) -- (-2*\scaleB,2){};
	\draw[-, line width=\pathlinewidth] (0*\scaleA,1) -- (0*\scaleB,2){};	
	\draw[-, line width=\pathlinewidth] (0*\scaleA,1) -- (2*\scaleB,2){};
	
	\draw[-, line width=\pathlinewidth] (2*\scaleB,2) -- (0*\scaleC,3){};
	\draw[-, line width=\pathlinewidth] (2*\scaleB,2) -- (2*\scaleC,3){};	
	\draw[-, line width=\pathlinewidth] (2*\scaleB,2) -- (4*\scaleC,3){};	
		
	\draw[-, line width=\pathlinewidth] (6*\scaleA,1) -- (4*\scaleB,2){};
	\draw[-, line width=\pathlinewidth] (6*\scaleA,1) -- (6*\scaleB,2){};	
	\draw[-, line width=\pathlinewidth] (6*\scaleA,1) -- (8*\scaleB,2){};
	
	\node [dot] at  (0*\scaleC,0) {};
	\node [dot] at  (-6*\scaleA,1) {};
	\node [dot] at  (0*\scaleA,1) {};
	\node [dot] at  (6*\scaleA,1) {};
	\node [dot] at  (-2*\scaleB,2) {};
	\node [dot] at  (0*\scaleB,2) {};
	\node [dot] at  (2*\scaleB,2) {};
	\node [dot] at  (0*\scaleC,3) {};
	\node [dot] at  (2*\scaleC,3) {};
	\node [dot] at  (4*\scaleC,3) {};
	\node [dot] at  (4*\scaleB,2) {};
	\node [dot] at  (6*\scaleB,2) {};
	\node [dot] at  (8*\scaleB,2) {};
	
\end{tikzpicture}} 
       \\ \qquad \\
       	\includegraphics[width=60mm]{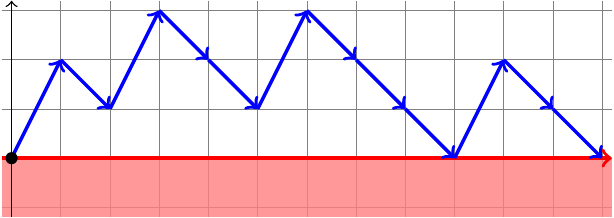} 
	\end{tabular} &
	\begin{tabular}{c}
	\includegraphics[width=30mm]{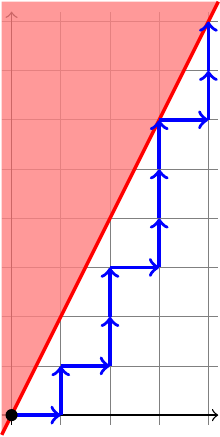} 
	\end{tabular}& 
	\begin{tabular}{c}
	\includegraphics[width=50mm]{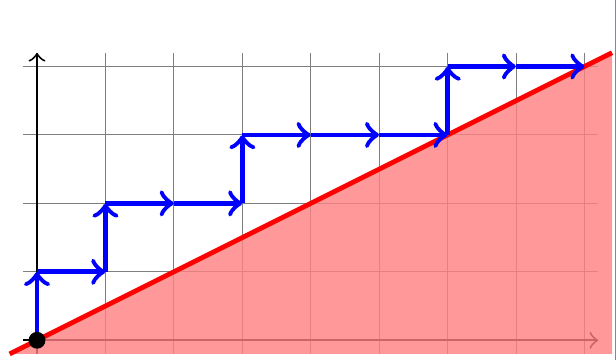} \\ \qquad \\
	\includegraphics[width=50mm]{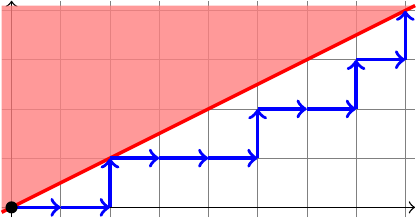} 
	\end{tabular} 
       \end{tabular}
       }
	\caption{Examples of combinatorial structures which are in bijection: ternary trees, excursions of directed lattice paths with jumps $+2$ and $-1$, 
	Dyck paths of North-East steps below the line $y = 2x$, Dyck paths above the line $y=\frac{1}{2} x$, and 
	Dyck paths below the line $y=\frac{1}{2}x$.
	\label{traversal}}
\end{figure}

\pagebreak

The generating function $F(z)=\sum f_n z^n$, where $f_n$ counts the number of trees with $n$ nodes (internal and external ones) satisfies the functional equation
 $F(z) =z \phi( F(z))\,,$
 where $\phi$ encodes the allowed arities. Thus, we get 
binary trees: $\phi(F)=1+F^2$, unary-binary trees: $\phi(F)=1+F+F^2$, 
$t$-ary trees:  $\phi(F)=1+F^t$, general trees: $\phi(F)=1/(1-F)$. 
See~\cite{flaj09} for more on this approach, also extendible to unordered trees (i.e.,~the order of the children is not taken into account).

Because of the bijection with lattice paths, the enumeration of ordered trees solves the question of lattice paths below a line of integer slope.
In the simplest case of classical Dyck paths, many tools were developed.    
In 1886, Delannoy was the first to promote a systematic way to enumerate lattice paths,
using recurrences and an array representation (see~\cite{BanderierSchwer05} for more on this). 
Then, the Bertrand ballot problem~\cite{Bertrand87} (already previously considered by Whitworth) and the ruin problem (as studied along centuries by Fermat, Pascal, the Bernoullis,  Huygens, de Moivre,  Lagrange, Laplace, Amp\`ere and Rouch\'e)
 were a strong motor for the birth of the combinatorics of lattice paths,
 one famous solution being the one by Andr\'e~\cite{Andre87} via a bijective proof  involving ``good minus bad'' paths. 
Aebly~\cite{Aebly23} and Mirimanoff~\cite{Mirimanoff23} gave a geometric variant of this bijective proof, which corresponds to what is nowadays known as the reflection principle.
Later, the cycle lemma by Dvoretsky and Motzkin~\cite{DvoretzkyMotzkin47}  proved useful for many similar problems.
During the last century, all these tools were extended and applied to other cases than the classical Dyck paths, and we will use some of them in this article.

With respect to the closed form for the enumeration, another powerful tool
is the Lagrange--B\"urmann inversion formula (see e.g.~\cite{flaj09}). Applied on  $T(z) = 1 + z T(z)^t$ (the equation for the generating function of $t$-ary trees where $z$ marks internal nodes), it gives
\begin{equation} T(z)^r= \sum_{k\geq 0} \binom{tk+r}{k} \frac{r}{tk+r} z^k = \sum_{k\geq 0} \binom{tk+(r-1)}{k} \frac{r}{(t-1)k+r} z^k  \,.
\end{equation}

\vspace{-2mm}
\begin{figure}[!hb] 
\centering
\scalebox{0.90}{
\includegraphics[width=0.4865\textwidth]{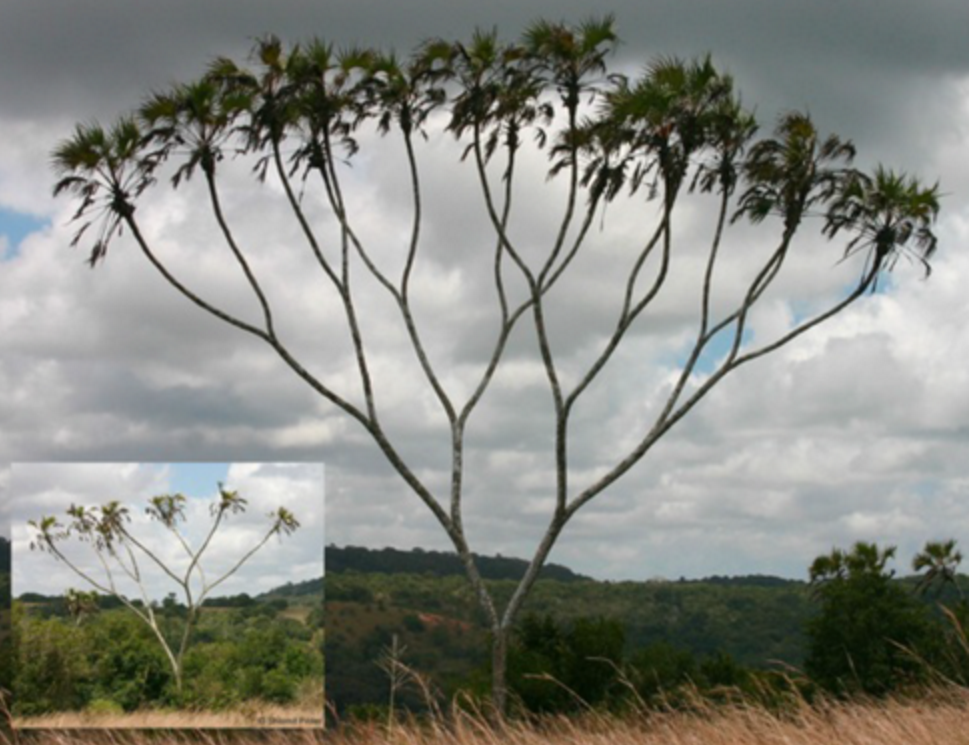}
\includegraphics[width=0.5\textwidth]{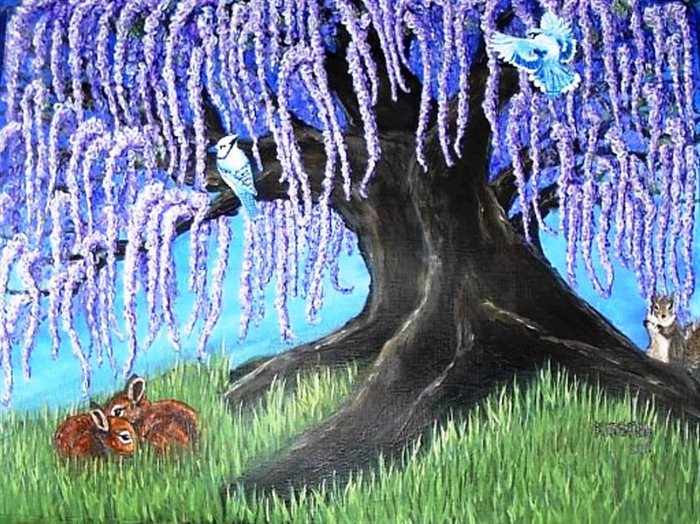}
\begin{picture}(50,50)
  \put(0,0){\rotatebox{90}{\tiny 
\copyright Kirsten Mellissa Spratt, acrylic painting (2011)}}
\end{picture}
}

\caption{It is possible to plug any value for $t$ in $T(z)$, which is known to count trees and lattice paths when $t$ is an integer. 
What happens when we consider generalized binomial series of order $3/2$, or of other fractional values? 
To recycle a nice pun by Don Knuth~\cite{knuthtalk}: Nature is offering nice binary trees, will imaginary trees one day play a role in computer science? \label{itree}}
\end{figure}

Plugging rational values is not directly leading to a power series with integer coefficients, but it ``miraculously'' becomes the case after basic transformations. For example,
as observed by Knuth~\cite{knuthtalk}, for $t=3/2$, one has the following neat non-trivial identity:

\begin{equation}
T(z) T(-z) = \left( \sum_{k\geq 0} \frac{\binom{3k/2}{k}}{k/2+1} z^k   \right) \left(\sum_{k\geq 0} \frac{\binom{3k/2}{k}}{k/2+1} (-z)^k \right) 
= \sum_{n \geq 0} \frac{\binom{3n+1}{n} }{n+1} z^{2n}\,.
\end{equation}

What could be the meaning of such identities involving ``half-trees''? The explanation behind this formula is better
seen in terms of lattice paths, and we will shed light on it in the next sections via the kernel method.
Another set of mysterious identities is e.g.~incarnated by:
\begin{equation}
\ln T(z) = \ln \sum_{n\geq 0} \frac{\binom{tn}{n}}{(t-1)n+1} z^n = \sum_{n\geq 1} \frac{\binom{tn}{n}}{tn} z^n\,.
\end{equation}

In fact, this one is just another avatar of the cycle lemma, which is also the reason 
for the link between the generating function of bridges and the generating function of excursions 
(a fact also appearing in various disguises {e.g.}~in the  Spitzer formula, in the Sparre Andersen formula), 
see~\cite{BaFl02} for explanations and proofs.

As we have seen, Dyck paths below an integer slope (or structures in bijection with them) were subject to many approaches, now considered as ``folklore''.
The first result for lattice paths below a rational slope came much later, and is best summarized by the following theorem:
\begin{theo}[Bizley's formula, Grossman's formula]
The number $f(an,bn)$ of  Dyck paths from $(0,0)$ to $(a n ,bn )$ staying weakly  above $y=\frac{a}{b}x$ is given by the following expressions, where $c_j := \frac{1}{a j+ b j }\binom{ a j+ b j}{a j}$:
\begin{eqnarray} \label{Gf} 
f(an,bn) =&\displaystyle  [t^n] \exp \sum_{j\geq 0}^n \frac{1}{(a+b)} \binom{(a+b) j}{a} t^j\,,\\ \label{coeff}
f(an,bn)  =&  \displaystyle \sum_{\left\{\substack{\text{integer partitions of } n:\\  \sum_{j=1}^{k}  j \, e_j = n }\right\}}   \,  \prod_{j=1}^{k}  \frac{(c_{j})^{e_j}}{e_j!} \,.
\end{eqnarray}
\end{theo}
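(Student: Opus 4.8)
\emph{The two formulae are one statement.} Set $\Phi(t):=\sum_{j\ge1}c_jt^j$, with $c_j=\frac{1}{aj+bj}\binom{aj+bj}{aj}$ as in the statement, and $F(t):=\sum_{n\ge0}f(an,bn)\,t^n$, $f(0,0):=1$. Both~\eqref{Gf} and~\eqref{coeff} are restatements of the single identity $F(t)=\exp\Phi(t)$: the former is this identity directly, and the latter results from expanding the exponential termwise,
\[
\exp\Phi(t)=\prod_{j\ge1}\Bigl(\sum_{e_j\ge0}\frac{c_j^{e_j}}{e_j!}\,t^{je_j}\Bigr),
\]
and reading off the coefficient of $t^n$ as a sum over integer partitions of $n$. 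So it suffices to prove $F=\exp\Phi$. We take $a,b$ coprime (the slope being in lowest terms); writing $\ell$ for the line through $(0,0)$ and $(an,bn)$, the reversal of a path (read backwards from $(an,bn)$, then translated to the origin) shows that the paths staying weakly above $\ell$ and those staying weakly below $\ell$ are equinumerous, so we work with the latter.

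\emph{Reduction to a one-dimensional walk.} Encode a North--East path from $(0,0)$ to $(an,bn)$ by the walk $S_i:=bx_i-ay_i$ ($0\le i\le(a+b)n$), whose increments are $+b$ for an East step and $-a$ for a North step. The path stays weakly below $\ell$ iff $S_i\ge0$ for all $i$, and since $\gcd(a,b)=1$ one has $S_i=0$ precisely when the path sits on a lattice point of $\ell$, which forces $i=(a+b)\kappa$. Hence $f(an,bn)$ equals the number of \emph{nonnegative bridges} of length $(a+b)n$ for the step set $\{+b,-a\}$; more generally, a walk of length $m$ with these steps is a bridge exactly when $(a+b)\mid m$, say $m=(a+b)\kappa$, and then there are $B_m:=\binom{(a+b)\kappa}{a\kappa}$ of them, with $B_m:=0$ otherwise.

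\emph{Applying the cycle lemma.} The identity $F=\exp\Phi$ now follows from the Spitzer--Sparre-Andersen relation between bridges and nonnegative bridges recalled in the Introduction: for a fixed step set, if $N_m$ is the number of nonnegative bridges and $B_m$ the number of all bridges of length $m$, then
\[
\sum_{m\ge0}N_m\,t^m=\exp\Bigl(\sum_{m\ge1}\frac{B_m}{m}\,t^m\Bigr).
\]
For our step set, $N_{(a+b)n}=f(an,bn)$ and $N_m=0$ when $(a+b)\nmid m$, and $B_m$ is as above; substituting $s=t^{a+b}$ turns the right-hand side into $\exp\bigl(\sum_{\kappa\ge1}\frac{1}{(a+b)\kappa}\binom{(a+b)\kappa}{a\kappa}s^\kappa\bigr)=\exp\Phi(s)$ and the left-hand side into $\sum_nf(an,bn)s^n=F(s)$. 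This gives $F=\exp\Phi$, hence the theorem.

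\emph{Where the difficulty sits.} Everything is concentrated in the displayed cycle-lemma identity, which is not a formal triviality: the factorization of a nonnegative bridge into its primitive arches gives only $F=1/(1-P)$ for the generating function $P$ of primitive nonnegative bridges, and the naive guess that $c_j$ counts primitive $j$-period arches is false --- for $j\ge2$, $c_j$ is not even an integer. The point is that the cycle lemma, applied to the $B_m$ bridge-words of length $m$ by counting pairs (bridge, cyclic cut) and keeping the cuts producing a nonnegative walk, returns a number of admissible cuts that depends on the period of the bridge; the uniform averaging weight $1/m$ is precisely what makes these period-dependent corrections telescope on taking logarithms, leaving the clean series $\sum_m\frac{B_m}{m}t^m$. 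Establishing this identity carefully is the crux --- via the standard rotation argument for the cycle lemma, or by invoking the bridge--excursion relation of Banderier--Flajolet~\cite{BaFl02}, which applies verbatim once the bijection of Section~\ref{sec:bijection} sends our paths into that model; the remainder is routine bookkeeping.
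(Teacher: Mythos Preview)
Your proposal is correct and follows essentially the same route the paper sketches: the paper does not give its own detailed argument but says that Formula~\eqref{Gf} ``is an avatar of the cycle lemma expressed in terms of a generating function'' and that routine power-series manipulation then yields~\eqref{coeff}. Your write-up is precisely a fleshing out of that sketch --- the one-dimensional encoding with steps $+b,-a$, the identification of $f(an,bn)$ with nonnegative bridges, and the Spitzer--Sparre\,Andersen/cycle-lemma identity $\sum_mN_mz^m=\exp\bigl(\sum_{m\ge1}\tfrac{B_m}{m}z^m\bigr)$ are exactly what ``avatar of the cycle lemma'' means here (compare also Formula~(3) in Section~\ref{sec:imaginary}, which is the special case $a=1$). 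Two small remarks: the identity you invoke is stated around Formula~(3) in Section~\ref{sec:imaginary} rather than in the Introduction; and your reversal symmetry implicitly (and correctly) reads $\ell$ as the diagonal through $(0,0)$ and $(an,bn)$, of slope $b/a$, which is the intended line despite the $a/b$ in the displayed statement.
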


Formula~\eqref{coeff} was first stated without proof by Grossman in 1950. A proof was then given by Bizley~\cite{Bizley54} in 1954.
It starts with Formula~\eqref{Gf}, which is an avatar of the cycle lemma~\cite{DvoretzkyMotzkin47} expressed in terms of a generating function.
Then routine power series manipulation gives Formula~\eqref{coeff}.
These formulae (or special cases of them) have since been rediscovered (and published...) many times.
One nice modern formulation of the method behind is found in the article by Gessel~\cite{Gessel80}. 
There exist alternative generic formulae as given by Banderier and Flajolet~\cite{BaFl02}, Sato~\cite{Sato89}, which simplify for ad hoc cases~\cite{Duchon00, KKKK16}.

 This formula admits many extensions as one could for example add parameters or take into account certain patterns.
This would lead to ``rational'' Narayana numbers, ``rational'' q-analogs, ``rational'' Mahonian statistics (on lattice paths!), etc.

 For each $n$, Grossman's formula~\eqref{coeff} for $f(an,bn)$ involves $p(n)$ summands, where $p(n)$ is the integer partition sequence of Hardy--Ramanujan fame:
  $$\displaystyle p(n)= [t^n] \prod_{n\geq 1} \frac{1}{1-t^n} \sim \frac{1}{4n\sqrt3} \exp\left(\pi \sqrt \frac{2n}{3}\,\right)\,.$$ 
    Therefore, this nice closed-form formula of Grossman has many summands if $n$ is large (computing it will have an exponential cost); it is thus useful  to have an algorithmic alternative to it. Bizley's formula~\eqref{Gf} 
allows to compute  $f(an,bn)$ in quasi-linear time by a power series manipulation. This is also the advantage of 
other expressions like the ones given by~\cite{BaFl02} using the kernel method, on which we will come back in the next sections.

Formula~\eqref{Gf} for $n=1$ gives $f(a,b)=\frac{1}{a + b  }\binom{ a + b }{a }$, also known as the rational Catalan numbers $\operatorname{Cat}(a,b)$.
In the last years many properties of the Dyck paths and their ``Catalan combinatorics''  (i.e.,~the enumeration of the numerous combinatorial and algebraic structures related to them) 
were extended to Dyck paths below a line of rational slope. This new area of research is sometimes called ``rational Catalan combinatorics''~\cite{ratcat}.
We expect that the recent developments of ``rational Catalan combinatorics'' have a generalization to $n>1$, but with less simple formulae,
as suggested by Table~\ref{Table}.

\begin{table}[hb!]
\begin{center}\scalebox{.92}{
\renewcommand{\arraystretch}{2}
\begin{tabular}{ |c|c|}
 \hline
                 & \# Dyck walks  from $(0,0)$ to $(an,bn)$ staying weakly below $y=\frac{a}{b} x$\\
                 \hline
$n = 1$ &   $\displaystyle c_1$\\
$n = 2$ &   $\displaystyle c_2+\frac{c_1^2}{2}$\\
$n = 3$ &   $\displaystyle c_3+c_1 c_2 + \frac{c_1^3}{3!}$\\
$n = 4$ &   $\displaystyle c_4+\frac{c_2^2}{2} +c_1 c_3 + \frac{c_1^2 c_2}{2} + \frac{c_1^4}{4!}$\\
$n = 5$ &   $\displaystyle c_5+ c_2c_3+c_1c_4+\frac{c_1 c_2^2}{2} + \frac{c_1^2 c_3}{2} + \frac{c_1^3}{3!} c_2 + \frac{c_1^5}{5!} $\\
$n = 6$ &  $\displaystyle c_6+c_5 c_1 + c_4 c_2 + \frac{c_1^2 c_4}{2}  + \frac{c_3^2}{2} + \frac{c_2^3}{ 3!}  + \frac{c_2 c_1^4}{4!} +  \frac{{c_1}^3 c_3}{3!}+ \frac{{c_1}^2 {c_2}^2}{4} + c_1c_2c_3 + \frac{c_1^6}{6!}$\\
\vdots & \vdots\\
$n$ &  $\displaystyle \sum_{\left\{\substack{\text{integer partitions of } n:\\
\sum_{j=1}^{k}  j \, e_j = n
}\right\}}   \,  \prod_{j=1}^{k}  \frac{(c_{j})^{e_j}}{e_j!}$\\
\hline
\end{tabular}
}
\end{center}
\caption{The number  $f(an,bn)$  of Dyck walks  from $(0,0)$ to $(an,bn)$ staying weakly below $y=\frac{a}{b} x$.
To shorten our expressions, we use the shorthand $c_j := \frac{1}{a j+ b j }\binom{ a j+ b j}{a j}$.}\label{Table}
In the rest of the article, we will see further nice formulae for Dyck paths below a rational slope.
\end{table}

\pagebreak
\section{Knuth's AofA problem \#4}
\label{sec:KnuthProblem}

During the conference ``Analysis of Algorithms'' (AofA'2014) in Paris in June 2014,
Knuth gave the first invited talk, dedicated to the memory of Philippe Flajolet (1948-2011).
The title of his lecture was ``Problems that Philippe would have loved'' and he was pinpointing/developing five nice open problems with a good flavor of ``analytic combinatorics''
(his slides are available online\footnote{\texttt{\url{http://www-cs-faculty.stanford.edu/~uno/flaj2014.pdf}}}).
The fourth problem was on ``Lattice paths of slope 2/5'', in which Knuth investigated Dyck paths under a line of slope 2/5, following the work of~\cite{Nakamigawa12}. 
This is best summarized by the two following original slides of Knuth:

\begin{center}
\begin{figure}[th]
\begin{minipage}{0.5\textwidth}
	\begin{center}
		\fbox{\includegraphics[width=0.949\textwidth]{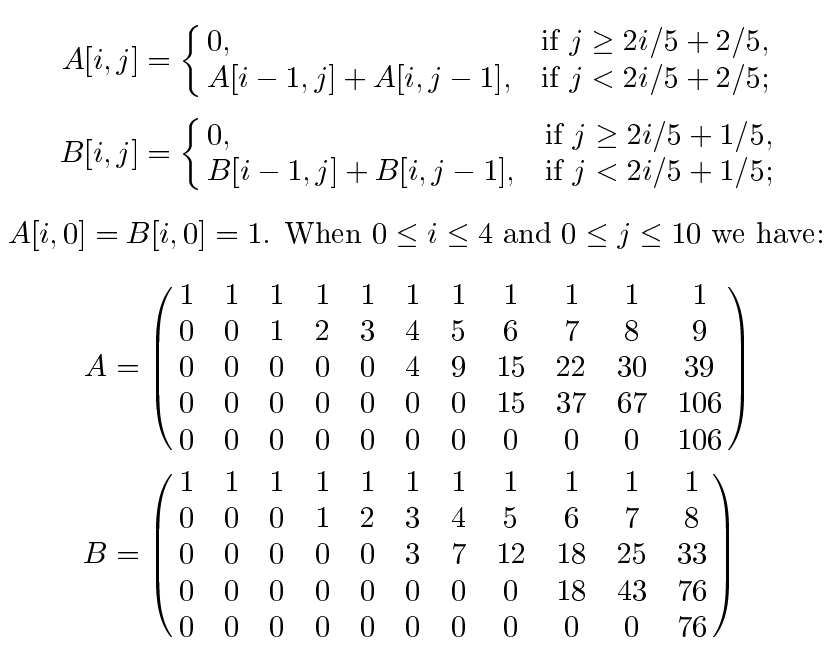}}
		\end{center}
\end{minipage}
\begin{minipage}{0.5\textwidth}
	\begin{center}
		\fbox{\includegraphics[width=0.99\textwidth]{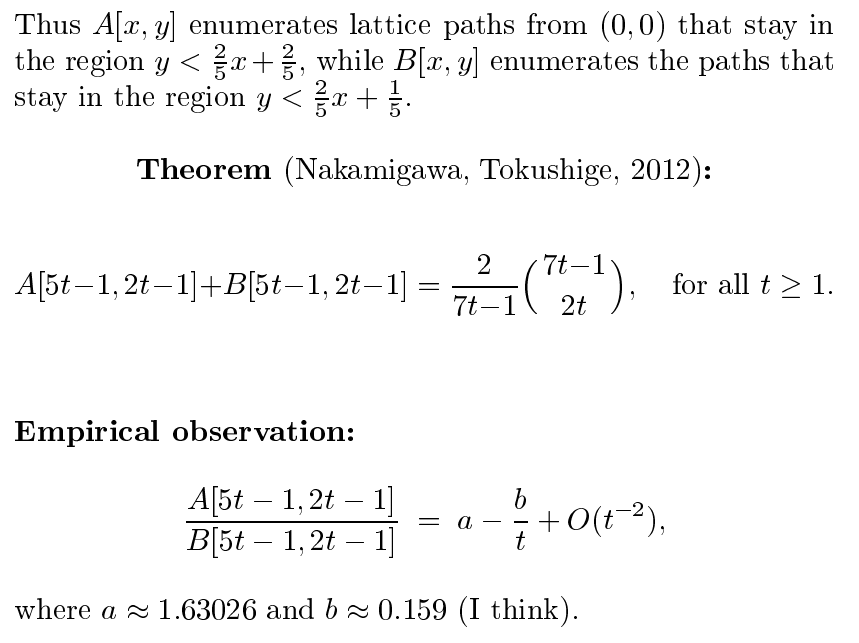}}
		\end{center}
\end{minipage}
\end{figure}
\end{center}

In the next sections we prove that Knuth was indeed right!  In order not to conflict with our notation, 
let us rename Knuth's constants $a$ and $b$ into $\kappa_1$ and $\kappa_2$.

\section{A bijection for lattice paths below a rational slope}
\label{sec:bijection}

Consider paths in the $\N^2$ lattice\footnote{We live in a world where $0\in\N$.}, starting in the origin, and whose allowed steps are of the type either East or North (i.e.,~steps $(1,0)$ and $(0,1)$, respectively). Let $\alpha, \beta$ be positive rational numbers.
We restrict the walks to stay strictly below the barrier $L: y = \alpha x + \beta$. Hence, the allowed domain of our walks forms an obtuse cone with the $x$-axis, the $y$-axis and the barrier~$L$ as boundaries. The problem of counting walks in such a domain is equivalent to counting directed walks in the Banderier--Flajolet model \cite{BaFl02}, as seen via the following bijection:

\pagebreak
\begin{prop}[Bijection: Lattice paths below a rational slope are directed lattice paths]
	\label{prop:bijgen}
	Let $\Dc: y < \alpha x + \beta$ be the domain strictly below the barrier $L$. 
From now on, we assume without loss of generality that $\alpha=a/c$ and $\beta=b/c$ where $a, b, c$ are positive integers such that $\gcd(a,b,c)=1$  (thus, it may be the case that $a/c$ or $b/c$ are reducible fractions).
There exists a bijection between ``walks starting from the origin with North and East steps''
and ``directed walks starting from $(0,b)$ with the step set $\{(1,a), (1,-c)\}$''. What is more, the restriction of staying below the barrier $L$ is mapped to the restriction of staying above the $x$-axis. 	
\end{prop}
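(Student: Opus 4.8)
The plan is to construct the bijection explicitly by a change of coordinates that ``straightens out'' the barrier $L$, turning the slanted constraint $y<\alpha x+\beta$ into the horizontal constraint ``altitude $\geq 0$''. Given a North--East path in the domain $\Dc$, record at each lattice point $(x,y)$ the quantity measuring how far we are from the barrier; the natural candidate is (a positive integer multiple of) $\alpha x + \beta - y$. Concretely, with $\alpha=a/c$, $\beta=b/c$ and $\gcd(a,b,c)=1$, I would map the point $(x,y)$ to the point whose abscissa is the number of steps taken so far and whose ordinate is $h(x,y) := c(\alpha x + \beta - y) = a x + b - c y$. Since each North--East path has exactly $x+y$ steps after reaching $(x,y)$, and since each East step increases $ax+b-cy$ by $a$ while each North step decreases it by $c$, the image is precisely a directed path with step set $\{(1,a),(1,-c)\}$. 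The starting point $(0,0)$ maps to $(0,b)$, as required.

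The key steps, in order: (1) define the map $\Phi$ on vertices as above and check it sends a North step to the jump $(1,-c)$ and an East step to the jump $(1,a)$, so that lattice paths go to lattice paths in the Banderier--Flajolet model; (2) verify the constraint translation: $y < \alpha x + \beta \iff cy < ax+b \iff ax+b-cy \geq 1 > 0$ (using integrality of $a,b,c,x,y$), hence staying strictly below $L$ is equivalent to the image staying weakly (in fact strictly, i.e.\ at altitude $\geq 1$) above the $x$-axis --- one should remark that because the barrier is \emph{strict} the image actually never touches $0$, but staying ``above the $x$-axis'' in the weak sense is the relevant Banderier--Flajolet constraint and the statement is phrased accordingly; (3) exhibit the inverse: from a directed path $(0,b)=(0,y_0),(1,y_1),\ldots,(n,y_n)$ with steps in $\{(1,a),(1,-c)\}$, recover the number $e$ of East steps and $m$ of North steps among the first $k$ steps (they are determined by $k=e+m$ together with $y_k - b = ae - cm$, a linear system with determinant $-(a+c)\neq 0$, hence a unique integer solution), and set the $k$-th vertex of the NE-path to $(e,m)$; check this is well-defined (the solution is a pair of nonnegative integers since it counts actual steps) and inverts $\Phi$; (4) conclude that $\Phi$ is a length-preserving bijection respecting the two boundary conditions.

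The main obstacle is essentially bookkeeping rather than conceptual: one must make sure the inverse map indeed lands in $\N^2$ (nonnegativity of $e$ and $m$) and that the strict-versus-weak inequality is handled cleanly, since the ``$+\beta$'' shift and the strictness of the barrier are exactly what make the altitude land in $\{1,2,\dots\}$ rather than $\{0,1,\dots\}$; it is worth stating clearly that the correspondence is with Banderier--Flajolet meanders (resp.\ excursions, for paths ending on $L$ appropriately shifted) so that later sections can invoke the known generating-function machinery. A minor subtlety to flag is the role of $\gcd(a,b,c)=1$: it is not needed for the bijection itself but guarantees we are using the ``reduced'' description of the pair $(\alpha,\beta)$, which matters for the periodicity phenomena analysed later. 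Once these points are checked, the proposition follows, and it immediately reduces counting NE-paths below a rational line to counting meanders/excursions in the Banderier--Flajolet model, which is the object of the subsequent sections.
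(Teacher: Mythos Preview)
Your proposal is correct and follows essentially the same approach as the paper: both use the affine transformation $(x,y)\mapsto(x+y,\,ax-cy+b)$, observe that it sends East and North steps to $(1,a)$ and $(1,-c)$ respectively, and invoke the nonzero determinant $-(a+c)$ together with the equivalence $y<\alpha x+\beta\iff ax-cy+b>0$ to justify bijectivity and the constraint translation. Your write-up is more detailed (explicit inverse, strict-versus-weak discussion, role of the gcd), but the underlying argument is identical to the paper's.
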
	
\begin{proof}
The following affine transformation gives the  bijection (see Figure~\ref{fig:bij}):\vspace{-3mm}
	\begin{align*}
		 \begin{pmatrix} x \\y  \end{pmatrix} \mapsto
			\begin{pmatrix}
				x + y \\
				a x - c y + b
			\end{pmatrix}.
	\end{align*}
Indeed, the determinant of the involved linear mapping is $-(c+a) \neq 0$. 
What is more, the constraint of being below the barrier (i.e.,~one has $y<\alpha x+\beta$)
is thus forcing the new abscissa to be positive: $ax-cy+b>0$. The  gcd conditions ensure an optimal choice (i.e.,~the thinnest lattice) for the lattice on which walks will live. 
Note that this affine transformation gives a bijection not only in the case of an initial step set North and East, but for any set of jumps.
\end{proof}

The purpose of this  bijection is to map walks of length $n$ 
to meanders (i.e.,~walks that stay above the $x$-axis) which are constructed by $n$ unit steps into the positive $x$ direction. 

\begin{figure}[!hb]
	\centering
	\subfloat[Rational slope model]{
		\includegraphics[width=0.5\textwidth]{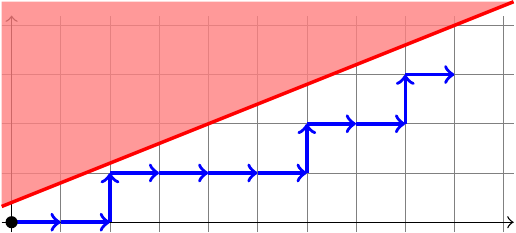}
	}
	\quad
	\subfloat[Banderier--Flajolet model]{
		\includegraphics[width=0.40\textwidth,height=41mm]{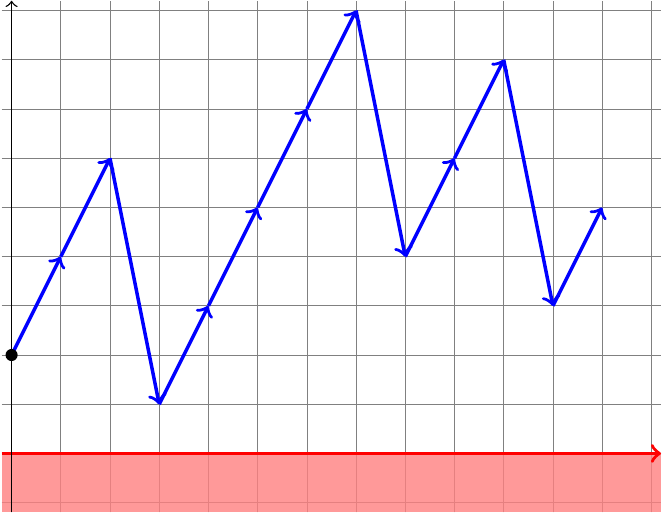}
	}
	\caption{Example showing the bijection from Proposition~\ref{prop:bijgen}: Dyck paths below the line $y=(2/5) x +2/5$ (or touching it)
	are in bijection with walks allowing jumps $+2$ and $-5$, starting at altitude 2, and staying above the line $y=0$ (or touching it).}
	\label{fig:bij}
\end{figure}

Note that if one does not want the walk to touch the line $y=(a/c) x + b/c$,
it corresponds to a model in which one allows to touch, but with a border at $y= (a/c) x + (b-1)/c$.
Time reversal is also giving a bijection between 
\begin{itemize} \item walks starting at altitude $b$ with jumps $+a, -c$ and ending at $0$,
\item  and walks starting at $0$ and ending at altitude $b$ with jumps $-a,+c$.
\end{itemize}

\pagebreak
\section[Lattice paths of slope 2/5]{Functional equation and closed-form expressions for lattice paths of slope 2/5}
\label{sec:funceq}

In this section, we show how to derive closed-forms (i.e.,~explicit expressions) for the generating functions of lattice paths of slope 2/5 (and their coefficients).
First, define the jump polynomial $P(u) := u^{-2} + u^5$. Note that the bijection in Proposition~\ref{prop:bijgen} gives jump sizes $+2$ and~$-5$. 
However, a time reversal gives this equivalent model (jumps $-2$ and $+5$), which has the advantage of leading to more compact formulae (see below). 
Let $f_{n,k}$ be the number of walks of length $n$ which end at altitude $k$. The corresponding bivariate generating function is given by
\begin{align*}
	F(z,u) &= \sum_{n,k \geq 0} f_{n,k} z^n u^k = \sum_{n \geq 0} f_n(u) z^n = \sum_{k \geq 0} F_k(z) u^k,
\end{align*}
where the $f_n(u)$ encode all walks of length $n$, and the $F_k(z)$ are the generating functions of walks ending at altitude $k$. A step-by-step approach yields the following linear recurrence
\begin{align*}
	f_{n+1}(u) &= \{u^{\geq 0}\} \left[ P(u) f_n(u) \right] \qquad \text{ for } n \geq 0,
\end{align*}
with initial value $f_0(u)$ (i.e.,~the polynomial representing the walks of length $0$), and where $\{u^{\geq 0}\}$ is a linear operator extracting all monomials in $u$ 
with non-negative exponents. Summing the terms $z^{n+1} f_{n+1}(u)$ leads to  the functional equation
\begin{align}
	\label{eq:funceq}
	(1 - z P(u)) F(z,u) = f_0(u) - z u^{-2} F_0(z) - z u^{-1} F_1(z).
\end{align}
We apply the \emph{kernel method} in order to transform this equation into a system of linear equations for $F_0$ and $F_1$. The factor $K(z,u):=1-zP(u)$ is called the \emph{kernel} and the kernel equation is given by $K(z,u)=0$. Solving this equation for $u$, we obtain $7$ distinct solutions. These split into two groups, namely, we get $2$ small roots $u_1(z)$ and $u_2(z)$ (the ones going to 0 for $z\sim0$) and $5$ large roots which we call $v_i(z)$ for $i=1,\ldots,5$ (the ones going to infinity for $z\sim0$). It is legitimate to insert the $2$ small branches into~\eqref{eq:funceq} to obtain\footnote{In this article, whenever we thought it could ease the reading, without harming the understanding,
we write $u_1$ for $u_1(z)$, or $F$ for $F(z)$, etc.}
\begin{align*}
	z F_0 + z u_1 F_1 &= u_1^2 f_0(u_1),\\
	z F_0 + z u_2 F_1 &= u_2^2 f_0(u_2).
\end{align*}
This linear system is easily solved by Kramer's formula, which yields
\begin{align*}
	F_0(z) &= - \frac{u_1 u_2 \left(u_1 f_0(u_1) - u_2 f_0(u_2) \right)}{z (u_1 - u_2)}\,, \\
	F_1(z) &= \frac{u_1^2 f_0(u_1) - u_2^2 f_0(u_2) }{z (u_1 - u_2)}\,.
\end{align*}
Now, let the functions $F(z,u)$ and $F_k(z)$ denote functions associated with $f_0(u) = u^3$ (i.e.,~there is one walk of length $0$ at altitude $3$) and let the functions $G(z,u)$ and  $G_k(z)$ denote functions associated with $f_0(u) = u^4$. 
One thus gets the following theorem:
\begin{theo}[Closed-forms for the generating functions]
	\label{theo:closedformgf}
Let us consider walks in $\N^2$ with jumps $-2$ and $+5$.
The number of such walks starting at altitude 3 and ending at altitude 0 is given by $F_0(z)$,
 the number of such walks starting at altitude 4 and ending at altitude~1 is given by $G_1(z)$, and we have the following closed-forms 
in terms of the small roots $u_1(z)$ and $u_2(z)$ of $1-zP(u)=0$ with $P(u)=u^{-2}+u^5$:
\begin{align}
	\label{eq:F0G1}
	F_0(z) &=  - \frac{u_1 u_2 \left(u_1^4 - u_2^4\right)}{z (u_1 - u_2)}, \\
	G_1(z) &= \frac{u_1^6 - u_2^6 }{z (u_1 - u_2)}\,.
\end{align} 
\end{theo}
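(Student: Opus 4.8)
The plan is to specialize the general formulas for $F_0(z)$ and $F_1(z)$ already derived above (just before the theorem statement) to the two relevant initial conditions, and then simply rename. Recall that for a general initial polynomial $f_0(u)$, inserting the two small roots $u_1(z),u_2(z)$ of the kernel $K(z,u)=1-zP(u)$ into the functional equation~\eqref{eq:funceq} is legitimate — the small roots are the branches with $u_i(z)\to 0$ as $z\to 0$, so the formal power series $F(z,u_i(z))$ are well-defined — and this yields the $2\times 2$ linear system for $F_0,F_1$ displayed before the theorem. Solving it by Cramer's rule gave
\[
F_0(z) = -\frac{u_1 u_2\bigl(u_1 f_0(u_1) - u_2 f_0(u_2)\bigr)}{z(u_1-u_2)},\qquad
F_1(z) = \frac{u_1^2 f_0(u_1) - u_2^2 f_0(u_2)}{z(u_1-u_2)}.
\]

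For the first assertion, walks starting at altitude $3$ correspond to $f_0(u)=u^3$: there is exactly one walk of length $0$, sitting at altitude $3$, so $f_0(u)=\sum_k f_{0,k}u^k = u^3$. Substituting $f_0(u)=u^3$ into the formula for $F_0(z)$ gives $u_i f_0(u_i) = u_i^4$, hence
\[
F_0(z) = -\frac{u_1 u_2\bigl(u_1^4 - u_2^4\bigr)}{z(u_1-u_2)},
\]
which is~\eqref{eq:F0G1}. For the second assertion, walks starting at altitude $4$ correspond to $f_0(u)=u^4$, which is the case the excerpt calls $G$; substituting into the formula for $F_1(z)$ (now read as $G_1(z)$) gives $u_i^2 f_0(u_i) = u_i^6$, hence
\[
G_1(z) = \frac{u_1^6 - u_2^6}{z(u_1-u_2)},
\]
as claimed. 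The identification of $F_0(z)$ (resp.\ $G_1(z)$) with the counting sequence of walks starting at altitude $3$ and ending at altitude $0$ (resp.\ starting at altitude $4$ and ending at altitude $1$) is immediate from the definition $F(z,u)=\sum_{n,k}f_{n,k}z^nu^k=\sum_k F_k(z)u^k$ together with the chosen initial conditions.

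I do not expect a genuine obstacle here, since the statement is essentially a corollary of the computation preceding it. The one point that deserves a sentence of justification is why exactly two of the seven roots of the degree-$7$ polynomial $u^2 K(z,u) = u^2 - z(1+u^7)$ are "small'' (so that the substitution into the functional equation is valid and the system is $2\times 2$): this follows from the Newton polygon / Puiseux analysis of the kernel, giving $u_1,u_2 = \mathcal{O}(z^{1/2})$ while the other five behave like $z^{-1/5}$ times a fifth root of unity — and, more structurally, the number of small roots equals $\max(0,-\min_{(x,y)\in\stepset} y)$ from the Banderier--Flajolet framework, here the downward jump $-2$ giving $2$ small roots. With that remark in place, the rest is the routine substitution and renaming carried out above.
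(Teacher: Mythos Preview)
Your proposal is correct and follows exactly the paper's own argument: the derivation of the general $F_0,F_1$ formulas via the kernel method and Cramer's rule is carried out in the paragraphs immediately preceding the theorem, and the theorem is then obtained by the specializations $f_0(u)=u^3$ and $f_0(u)=u^4$ that you perform. Your added remark on why there are precisely two small roots is a welcome clarification but not a departure from the paper's approach.
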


Thanks to the bijection given in Section~\ref{sec:bijection} between walks in the rational slope model and directed lattice paths in the Banderier--Flajolet model
(and by additionally reversing the time\footnote{Reversing the time allows us to express all generating functions in terms of just $2$ roots. If one does not  reverse time, 
everything works well but the expressions contain the $5$ large roots, yielding more complicated closed-forms.}), it is now possible to relate the quantities $A$ and $B$ of Knuth with $F_0$ and $G_1$:
\begin{align}
	\label{eq:ABdef}
	A_n&:= A[5n-1,2n-1] = [z^{7n-2}] G_1(z), \\
	B_n&:= B[5n-1,2n-1] = [z^{7n-2}] F_0(z).
\end{align}
Indeed, from the bijection of Proposition~\ref{prop:bijgen}, the walks strictly below $y =\frac{a}{c} x + \frac{b}{c}$  
(with $a=2$, $c=5$) 
and ending at $(x,y) = (5n-1,2n-1)$
are mapped (in the Banderier--Flajolet model, not allowing to touch $y=0$) to walks 
starting at $(0,b)$ and ending at $(x+y,ax-cy+b)=(7n-2,3+b)$.
Reversing the time and allowing to touch $y=0$ (thus $b$ becomes $b-1$),
gives that $A_n$ counts walks starting at 4, ending at 1 (yeah, this is counted by $G_1$!) and that $B_n$ counts walks starting at~3, ending at 0 (yeah, this is counted by $F_0$!).
While there is no nice formula for $A_n$ or $B_n$ (see, however,~\cite{BanderierDrmota} and page~\pageref{eq:nestsumAn} for a formula involving nested sums of binomials),
it is striking that there is a simple and nice formula for $A_n+B_n$:
\begin{theo}[Closed-form for the sum of coefficients]
\label{theo:closedformcoeff}
The sum of the number of Dyck paths (in our rational slope model) touching or staying below $y=(2/5)x+1/5$ and $y=(2/5)x$  
simplifies to the following expression:
\begin{align}
	\label{eq:AplusBex}
	A_n + B_n &= \frac{2}{7n-1} \binom{7n-1}{2n}.
\end{align}
\end{theo}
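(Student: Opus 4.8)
The plan is to work directly with the closed forms from Theorem~\ref{theo:closedformgf}, writing
$A_n+B_n = [z^{7n-2}]\bigl(G_1(z)+F_0(z)\bigr)$, and to show that the combination $G_1+F_0$ simplifies dramatically. Using the formulas
$F_0(z) = - u_1u_2(u_1^4-u_2^4)/(z(u_1-u_2))$ and $G_1(z)=(u_1^6-u_2^6)/(z(u_1-u_2))$, I first combine over the common denominator $z(u_1-u_2)$. The numerator becomes $u_1^6-u_2^6 - u_1u_2(u_1^4-u_2^4) = u_1^6 - u_2^6 - u_1^5u_2 + u_1u_2^5$. I expect this to factor cleanly: pulling out $(u_1-u_2)$ and simplifying (e.g.\ grouping $u_1^6-u_1^5u_2 = u_1^5(u_1-u_2)$ and $u_1u_2^5-u_2^6 = u_2^5(u_1-u_2)$, leaving $u_1^5+u_2^5$ times $(u_1-u_2)$ plus the leftover terms, then iterating), so that $G_1+F_0$ becomes a symmetric polynomial in $u_1,u_2$ divided by $z$. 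Concretely I anticipate something like $G_1(z)+F_0(z) = \frac{1}{z}\,h(u_1,u_2)$ where $h$ is a complete homogeneous symmetric polynomial — this is exactly the "links with complete symmetric homogeneous polynomials" advertised in point~2 of the introduction, so the identity $h_m(u_1,u_2) = (u_1^{m+1}-u_2^{m+1})/(u_1-u_2)$ and its variants will be the right bookkeeping tool.

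Next I would exploit that $u_1,u_2$ are the two small roots of the kernel $1-zP(u)=0$ with $P(u)=u^{-2}+u^5$, i.e.\ of $u^2 = z(1+u^7)$, equivalently $z = u^2/(1+u^7)$. The standard device here is to express the symmetric functions of the small roots, and hence the coefficient extraction $[z^{N}]$ of a symmetric expression in them, via a single contour integral / Lagrange-type inversion: for a symmetric rational expression $\Phi(u_1,u_2)$ one writes $[z^N]\Phi$ as a residue involving all seven roots, then argues the large roots contribute nothing to the relevant coefficient (they blow up at $z=0$), so that effectively one is doing Lagrange inversion on the equation $u^2 = z(1+u^7)$. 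Alternatively — and this is probably the cleanest route — after the simplification one recognizes $G_1+F_0$ as $\frac{1}{z}$ times a symmetric polynomial of low degree in $u_1,u_2$, re-expresses that polynomial through power sums $p_j = u_1^j+u_2^j$ and the elementary symmetric functions $e_1=u_1+u_2$, $e_2=u_1u_2$, and then uses the kernel relation to convert a coefficient of a power of $z$ into a binomial coefficient. The target $\frac{2}{7n-1}\binom{7n-1}{2n}$ has the shape of a Lagrange-inversion coefficient for an equation with a degree-7 nonlinearity (the "$1+u^7$") and a quadratic leading term, so I expect the final step to be a direct application of the Lagrange–Bürmann formula to $u(z)$ defined by $u = z^{1/2}\sqrt{1+u^7}$ (handled via the substitution $z=w^2$ to make it a genuine power series), matching $[w^{2(7n-2)}]$ of the appropriate power of $u(w)$.

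The main obstacle I foresee is the bookkeeping around the \emph{two} small roots rather than one: unlike the textbook kernel-method situation with a single branch, here $u_1$ and $u_2$ are the two roots of $u^2 = z(1+u^7)$ that vanish at $z=0$, and they are conjugate under $z\mapsto$ (a square-root branch), so $u_1(z)$ and $u_2(z)$ are the two values $\pm\sqrt z\,(1+\cdots)$. The right framework is: set $z=w^2$, let $U(w)$ be the unique power series solution of $U = w\sqrt{1+U^7}$ — wait, that still has a square root; better, note $u_1(w^2)$ and $u_2(w^2)$ are the two branches $U(w)$ and $U(-w)=-U(w)\cdot(\text{even})$... — making this precise (which symmetric combinations of $u_1,u_2$ are even resp.\ odd in $w$, hence which coefficients survive) is the delicate point, and it is presumably where the paper's careful "structural properties of the roots" (point~1) pays off. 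Once that is set up, the identity $A_n+B_n = [z^{7n-2}](G_1+F_0)$ reduces to a single clean Lagrange-inversion coefficient extraction, and the binomial $\frac{2}{7n-1}\binom{7n-1}{2n}$ should drop out, with the constant $2$ tracking the two small roots (or equivalently the two preimages under $z=w^2$). I would also sanity-check the formula at $n=1$ against the known small values of $A_n,B_n$ before committing to the general computation.
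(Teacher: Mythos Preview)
Your algebraic opening is correct and in fact cleaner than you make it sound: the numerator
\[
u_1^6 - u_2^6 - u_1 u_2(u_1^4 - u_2^4)
= u_1^5(u_1-u_2) + u_2^5(u_1-u_2)
= (u_1^5+u_2^5)(u_1-u_2)
\]
factors exactly, with no ``leftover terms'' and no need to iterate. Hence $G_1+F_0 = (u_1^5+u_2^5)/z$ and
$A_n+B_n=[z^{7n-1}](u_1^5+u_2^5)$, which is precisely the identity~\eqref{eq:AplusB} that the paper records. So up to this point you and the paper agree.

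From here the routes diverge. The paper's proof of Theorem~\ref{theo:closedformcoeff} does \emph{not} do Lagrange inversion: it passes to holonomy theory, computing (via a resultant) the algebraic equation satisfied by $u_1^5$, converting it to a linear ODE by the Abel--Tannery--Cockle--Harley--Comtet theorem, reading off the linear recurrence for $C_n=A_n+B_n$, and recognising that recurrence as the first-order hypergeometric recurrence for $\frac{2}{7n-1}\binom{7n-1}{2n}$. Your plan instead substitutes $z=w^2$, so that the two small roots become $U(w)$ and $U(-w)$ for the single series $U$ defined by $U=w(1+U^7)^{1/2}$, and then extracts $[w^{14n-2}]\,2\,U(w)^5$ by Lagrange--B\"urmann. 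That computation goes through cleanly (the coefficient is $\frac{5}{7n-1}\binom{7n-1}{2n-1}=\frac{2}{7n-1}\binom{7n-1}{2n}$), and it is exactly the mechanism the paper deploys later, in the proof of Theorem~\ref{theo:genclosedform}, for the general rational slope; you have specialised that argument to $a=2$, $c=5$, $\ell=0$. So your approach is correct and arguably more elementary for this particular identity; what the paper's holonomic route buys is automation (it is the ``check a conjectured identity by machine'' paradigm) and, as remarked after the proof, the ability to certify that $A_n+B_n$ is the \emph{only} linear combination $rA_n+sB_n$ that is hypergeometric.

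One small correction to your narrative: the simplified expression is the \emph{power sum} $p_5=u_1^5+u_2^5$, not a complete homogeneous symmetric polynomial $h_m$. The $h_m$'s do appear in the paper, but for the individual $F_0$ and $G_1$ separately (and in the general Schur-polynomial formulation of Lemma~\ref{lem:closedformgfgeneral}); their collapse to a single power sum upon summing is precisely the Schur-to-power-sum identity the paper invokes in~\eqref{eq:AplusBgeneral}.
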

\begin{proof}
A first proof of this was given by~\cite{Nakamigawa12} using a variant of the cycle lemma. (We comment more on this in Section~\ref{sec:nakatoku}.)
We give here another proof, indeed, our Theorem~\ref{theo:closedformgf} (Closed-form for the generating functions) implies that
\begin{align}
	\label{eq:AplusB}
	A_n + B_n &= [z^{7n-1}] \left( u_1^5 + u_2^5 \right)\,.
\end{align}
This suggests using holonomy theory to prove the theorem. First, a resultant equation gives the algebraic equation for $U:=u_1^5$ (namely, $z^7+(U-1)^5 U^2=0$)
and then the  Abel--Tannery--Cockle--Harley--Comtet theorem
(see the comment after Proposition 4 in~\cite{BanderierDrmota}) transforms it into a differential equation for 
the series $u_1^5(z^2)$. It is also the differential equation  (up to distinct initial conditions) for $u_2^5(z^2)$  (as $u_2$ is defined by the same equation as $u_1$), and thus of $u_1^5(z^2)+u_2^5(z^2)$.
Therefore, it directly gives the differential equation for the series $C(z)=\sum_n (A_n+B_n) z^n$,
and it corresponds to the following recurrence for its coefficients:
$${C_{n+1}=\frac{7}{10}\frac{(7n+5)(7n+4)(7n+3)(7n+2)(7n+1)(7n-1)}{(5n+4)(5n+3)(5n+2)(5n+1)(2n+1)(n+1)}C_n\,,}$$
which is exactly  the hypergeometric recurrence for $\frac{2}{7n-1} \binom{7n-1}{2n}$ (with the same initial condition).
This computation takes 1 second on an average computer, 
while, if not done in this way (e.g.,  if instead of the resultant shortcut above, one uses several {\texttt{gfun[diffeq*diffeq]}} or variants of it in Maple, see~\cite{SalvyZimmermann94} for a presentation of the corresponding package),
the computations for such a simple binomial formula 
surprisingly take hours.
\end{proof}

Some additional investigations conducted by Manuel Kauers (private communication) show that this is the only linear combination of $A_n$ and $B_n$ 
which leads to a hypergeometric solution
(to prove this, you can compute a recurrence for a formal linear combination  $r A_n+ s B_n$, 
and then check which conditions it implies on $r$ and $s$ if one wishes the associated recurrence to be of order 1, i.e.,~hypergeometric).
It thus appears that $r A_n+ s B_n$ is generically of order 5, with the exception 
of a sporadic $4A_n-B_n$ which is of order 4,
and the miraculous $A_n+B_n$ which is of order 1 (hypergeometric).

However, there are many other hypergeometric expressions floating around: 
expressions of the type of the right-hand side of~\eqref{eq:AplusB} have nice hypergeometric closed-forms.
This can also be explained in a combinatorial way, indeed we observe that setting $k=-5$ in Formula~(10) from~\cite{BaFl02},
 leads to $5 W_{-5}(z) = \Theta(A(z)+B(z))$ (where $\Theta$ is the pointing operator). 
The ``Knuth pointed walks'' are thus in 1-to-5 correspondence with unconstrained walks 
(see our Table~\ref{fig-4types}, top left) ending at altitude -5.
 
We want to end this chapter with exemplifying the miracles involved in the simplifications of~\eqref{eq:AplusBex}.
Using the Flajolet--Soria formula~\cite{BanderierDrmota} for the coefficients of an algebraic function, we can extract the coefficient of $z^{7n-2}$  of $G_1(z)$ and $F_0(z)$ in terms of nested sums. 
According to~\eqref{eq:ABdef}, this corresponds to $A_n$ and $B_n$, 
which are thus given by formulae involving respectively $45$ and $34$ nested sums (see Figure~\ref{eq:nestsumAn}).

Then, in the next section, we perform some analytic investigations in order to prove what Knuth conjectured:
\begin{align*}
	\frac{A_n}{B_n} & =   \kappa_1 - \frac{\kappa_2}{n} + \LandauO(n^{-2}),
\end{align*}
with $\kappa_1\approx 1.63026$ and $\kappa_2\approx 0.159$.

\pagebreak

\begin{figure}[!ht]
\fbox{\begin{minipage}{\textwidth}
\begin{align*} 
	A_n &= \sum_{m = 0}^{7n-2} m! 
	\sum_{\substack{m_1+\cdots+m_{44} = m+1 \\
	                b_1m_1 + \cdots + b_{44}m_{44} = 7n-2\\
	                c_1m_2+\cdots+c_{44}m_{44} =m}}	
	\Big( 
20^{m_{1}}
3^{m_{2}}
(-190)^{m_{3}}
(-39)^{m_{4}}
1140^{m_{5}}
239^{m_{6}}
4^{m_{7}}
(-4845)^{m_{8}}
	\\&
(-915)^{m_{9}}
(-25)^{m_{10}}
15504^{m_{11}}
2443^{m_{12}}
68^{m_{13}}
1^{m_{14}}
(-38760)^{m_{15}}
(-4806)^{m_{16}}
(-105)^{m_{17}}
	\\&
77520^{m_{18}}
7173^{m_{19}}
100^{m_{20}}
(-125970)^{m_{21}}
(-8238)^{m_{22}}
(-59)^{m_{23}}
167960^{m_{24}}
7305^{m_{25}}
20^{m_{26}}
	\\&
(-184756)^{m_{27}}
(-4971)^{m_{28}}
(-3)^{m_{29}}
167960^{m_{30}}
2553^{m_{31}}
(-125970)^{m_{32}}
(-959)^{m_{33}}
77520^{m_{34}}
	\\&
249^{m_{35}}
(-38760)^{m_{36}}
(-40)^{m_{37}}
15504^{m_{38}}
3^{m_{39}}
(-4845)^{m_{40}}
1140^{m_{41}}
(-190)^{m_{42}}
	\\&
20^{m_{43}}
(-1)^{m_{44}}
		\Pi_{k=1}^{44} \frac{1}{m_{i}!}
	\Big),
\end{align*}

\noindent where $(b_n)_{n=1}^{44}=$ {\small (2,5,4,7,6,9,12,8,11,14,10,13,16,19,12,15,18,14,17,20,16,19,22,18,21,24,20,23,26, 22,25,24,27,26,29,28,31,30,33,32,34,36,38,40)} and 
$(c_n)_{n=1}^{44}=$ {\small 
(2,0,3,1,4,2,0,5,3,1,6,4,2,0,7,5,3,8,6, 4,9,7,5,10,8,6,11,9,7,12,10,13,11,14,12,15,13,16,14,17,18,19,20,21)}.

\begin{align*}
	B_n &= \sum_{m = 0}^{7n-2} m! 
	\sum_{\substack{m_1+\cdots+m_{33} = m+1 \\
	                b_1m_1 + \cdots + b_{33}m_{33} = 7n-2\\
	                c_1m_2+\cdots+c_{33}m_{33} =m}}	
	\Big( 
	20^{m_{1}}
2^{m_{2}}
(-182)^{m_{3}}
(-18)^{m_{4}}
1006^{m_{5}}
73^{m_{6}}
(-1)^{m_{7}}
(-3793)^{m_{8}}
	\\&
(-176)^{m_{9}}
10349^{m_{10}}
279^{m_{11}}
(-21084)^{m_{12}}
(-294)^{m_{13}}
32521^{m_{14}}
190^{m_{15}}
1^{m_{16}}
(-37980)^{m_{17}}
	\\&
(-57)^{m_{18}}
(-10)^{m_{19}}
33128^{m_{20}}
45^{m_{21}}
(-20928)^{m_{22}}
(-120)^{m_{23}}
9039^{m_{24}}
210^{m_{25}}
(-2384)^{m_{26}}
	\\&
(-252)^{m_{27}}
289^{m_{28}}
210^{m_{29}}
(-120)^{m_{30}}
45^{m_{31}}
(-10)^{m_{32}}
1^{m_{33}}
		\Pi_{k=1}^{33} \frac{1}{m_{i}!}
	\Big),
\end{align*}
\noindent where
 $(b_n)_{n=1}^{33}=$ {\small (2,5,4,7,6,9,12,8,11,10,13,12,15,14,17,13,16,19,15,18,17,20,19,22,21,24,23,26,25, 27,29,31,33)} and $(c_n)_{n=1}^{33}=${\small 
(2,0,3,1,4,2,0,5,3,6,4,7,5,8,6,11,9,7,12,10,13,11,14,12,15,13,16,14,17,18, 19,20,21)}.

$$A_n + B_n = \frac{2}{7n-1} \binom{7n-1}{2n}.$$
\end{minipage}}
\caption[Caption for LOF]{\label{eq:nestsumAn} \textbf{The ``ugly + ugly = nice'' formula.}
$A_n$ is counting Dyck paths touching or staying below the line $y=(2/5) x+ 1/5 $, 
and $B_n$ is counting Dyck paths touching or staying below the line $y=(2/5) x$.
They are given by complicated ``ugly'' nested sums\footnotemark,
so the miracle is that the sum $A_n + B_n$ is nice. We give several explanations of this fact in this article. 
}
\end{figure}
\footnotetext{Via the kernel method, as explained in~\cite{KKKK16},
it is possible to express $A_n$ and $B_n$ with less nested sums than in Figure~\ref{eq:nestsumAn} 
but the corresponding formulae are however still of the ``ugly'' type!}

\pagebreak
\section{Asymptotics}
\label{sec:asymptotics}

As usual, we need to locate the dominant singularities, and to understand the local behaviour there. 
The fact that there are several dominant singularities makes the game harder here,
and this case was only sketched in~\cite{BaFl02}.
Similarly to what happens in the rational world (Perron--Frobenius theory), or in the algebraic world (see \cite{BanderierDrmota}), 
a periodic behaviour of the generating function leads to some more complicated proofs,
because additional details have to be taken into account. With respect to walks,
it is e.g.\ crucial to understand how singularities spread amongst the roots of the kernel.
To this aim,
some quantities will play a key role: the structural constant $\tau$ is defined as the unique positive root of $P'(\tau)$,
where $$P(u)=u^{-2}+u^5$$ is encoding the jumps, and the structural radius $\rho$ is given as $\rho = 1/P(\tau)$. For our problem, one thus has the explicit values:
\begin{align*}
	\tau    &= \sqrt[7]{\frac{2}{5}}, & 
	P(\tau) &= \frac{7}{10} \sqrt[7]{2^5 5^2}, &
	\rho    &= \frac{\sqrt[7]{2^2 5^5}}{7}.
\end{align*}

\begin{figure}[h!]
	\centering
\includegraphics[width=0.35\textwidth,height=51mm]{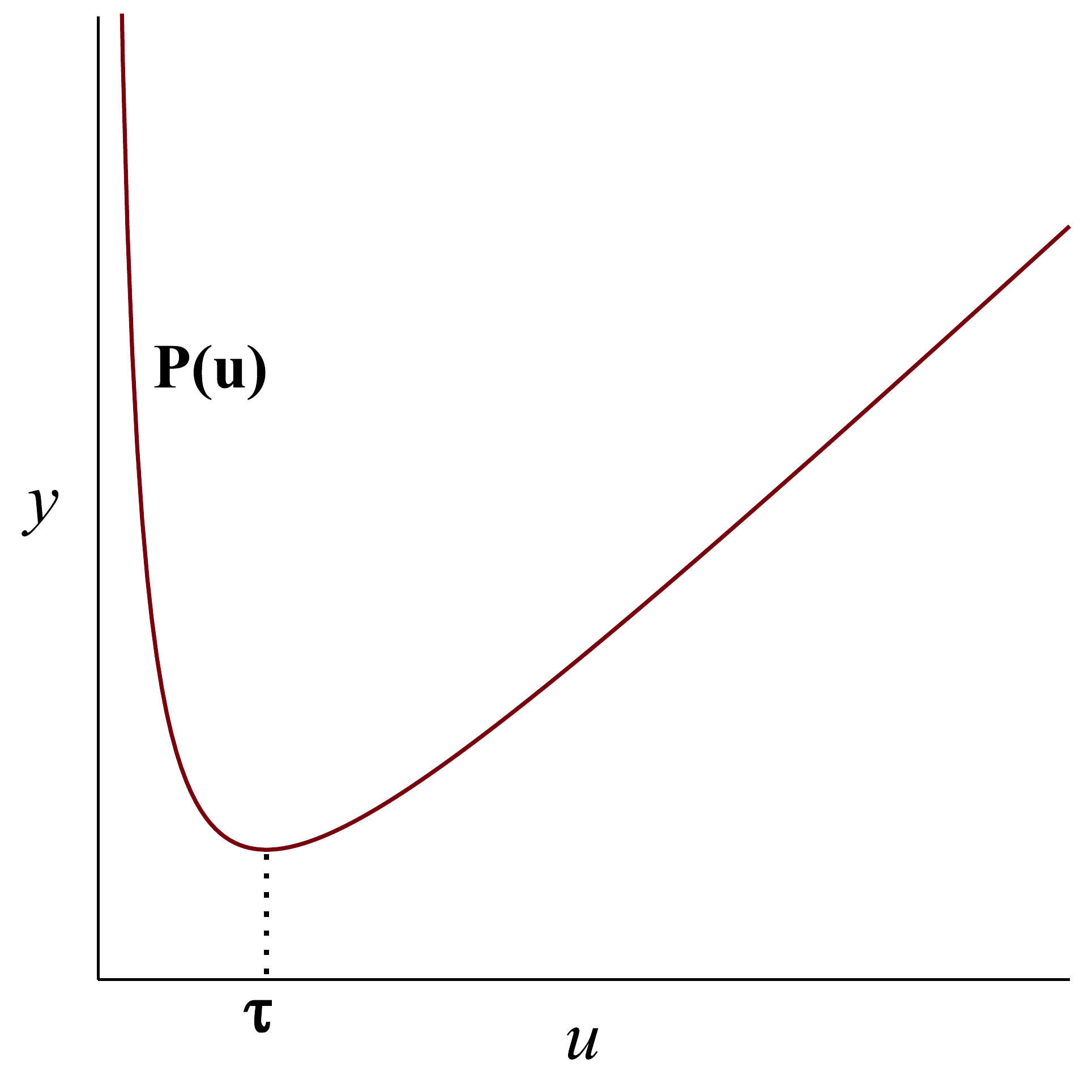} \qquad \includegraphics[width=0.35\textwidth,height=51mm]{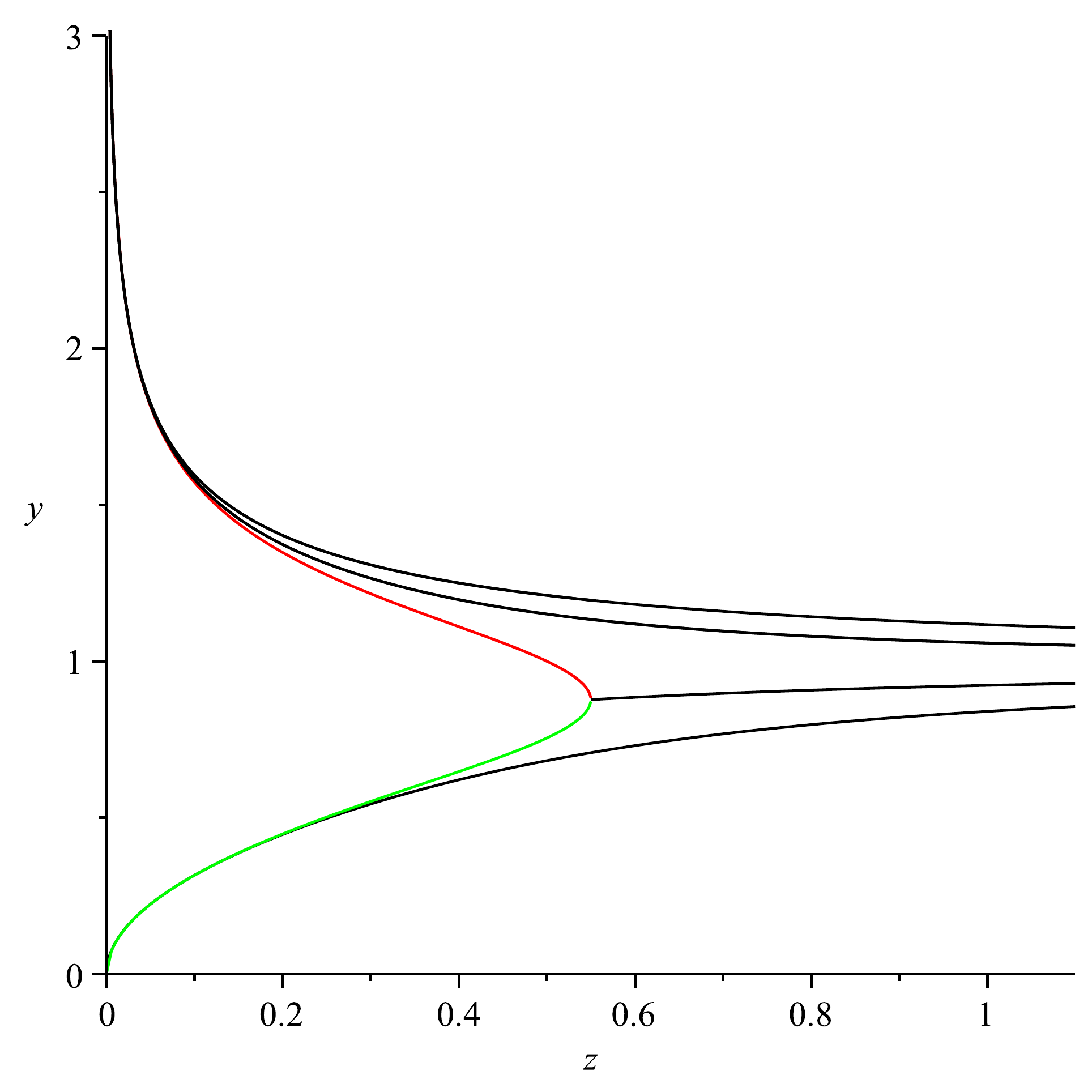}
\caption{$P(u)$ is the polynomial encoding the jumps, its saddle point $\tau$ gives the singularity $\rho=1/P(\tau)$ 
where the small root $u_1$ (in green) meets the large root $v_1$ (in red), with a square root behaviour. (In black, we also plotted $|u_2|, 
|v_2|=|v_3|$, and $|v_4|=|v_5|$.)
This is the key for all asymptotics of such lattice paths.
 }
\end{figure}

From~\cite{BaFl02}, we know that the small branches $u_1(z)$ and $u_2(z)$ are possibly singular only at the roots of $P'(u)$. 
Note that the jump polynomial has \emph{periodic support} with period $p=7$ as $P(u) = u^{-2} H(u^7)$ with $H(u) = 1 + u$. Due to that, there are $7$ possible singularities of the small branches
\begin{align*}
	\zeta_k &= \rho \omega^k, \qquad \text{ with } \omega = e^{2 \pi i / 7}.
\end{align*}
\begin{definition}
	We call a function $F(z)$ \emph{$p$-periodic} if there exists a function $H(z)$ such that $F(z)=H(z^p)$.
\end{definition}

Additionally, we have the following local behaviours:
\begin{lemma}[Local behaviour due to rotation law]
	\label{lem:u1u2}
	The limits of the small branches when $z \to \zeta_k$ exist and are equal to
	\begin{align*}
		u_1(z) \underset{z\,\sim \,\zeta_k}{=} &
				\begin{cases}
				\tau \omega^{-3k} + C_k \sqrt{1 - z / \zeta_k} +\LandauO((1 - z / \zeta_k)^{3/2}), &
				\text{for } k = 2,5,7,\\     
					\tau_2 \omega^{-3k}   + D_k (1-z / \zeta_k) + \LandauO((1 - z / \zeta_k)^2), &
					\text{for } k= 1,3,4,6,
				\end{cases}\\
		u_2(z) \underset{z\,\sim \,\zeta_k}{=}  &
				\begin{cases}
					\tau_2 \omega^{-3k}  + D_k (1-z / \zeta_k) + \LandauO((1 - z / \zeta_k)^2), &
					\text{for } k = 2,5,7,\\     
					\tau \omega^{-3k} +  C_k \sqrt{1 - z / \zeta_k} +\LandauO((1 - z / \zeta_k)^{3/2}), &
					\text{for } k= 1,3,4,6,
				\end{cases}
	\end{align*}
	where $\tau_2 = u_2(\rho)\approx -.707723271$ is the unique real root of
	$500t^{35}+3900t^{28}+13540t^{21}+27708t^{14}+37500t^7+3125$,
		$C_k	   = -\frac{\tau}{\sqrt{5}} \omega^{-3k}$, and $D_k = \tau_2\frac{\tau_2^7+1}{5\tau_2^7-2} \omega^{-3k}$.
\end{lemma}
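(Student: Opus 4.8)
The plan is to combine the ``rotation law'' coming from the periodicity $P(u)=u^{-2}H(u^7)$ with the classical local analysis at the real dominant singularity $z=\rho$, and then to transport the latter to the remaining points $\zeta_k$. First I would record the rotation law: since $H(u)=1+u$ and $\omega^7=1$ one has $P(\omega u)=\omega^{-2}P(u)$, so $(z,u)$ solves $1-zP(u)=0$ iff $(\omega^2 z,\omega u)$ does. For $0<|z|<\rho$ the two small roots are exactly the two roots of smallest modulus among the seven roots of $zu^7-u^2+z$ (Banderier--Flajolet), and multiplying all roots by $\omega$ leaves the multiset of moduli unchanged; hence $\{u_1(\omega^2 z),u_2(\omega^2 z)\}=\omega\{u_1(z),u_2(z)\}$ on $|z|<\rho$. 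Iterating $m$ times with $m\equiv 4k\pmod 7$ (so that $\omega^{-2m}=\omega^{-k}$ and $\omega^{m}=\omega^{4k}=\omega^{-3k}$) gives, for $z$ close to $\zeta_k$ and inside the disk,
\[
\{u_1(z),u_2(z)\}=\omega^{-3k}\{u_1(\omega^{-k}z),\,u_2(\omega^{-k}z)\},\qquad \text{with }\omega^{-k}z\text{ close to }\rho .
\]

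The key geometric input is to decide \emph{which} of $u_1,u_2$ is the singular branch at each $\zeta_k$. For this I would substitute $u=z^{1/2}v$, turning $1-zP(u)=0$ into $v^2=1+z^{7/2}v^7$; on a slit plane this yields $u_1(z)=z^{1/2}v_+(z^{7/2})$ and $u_2(z)=z^{1/2}v_-(z^{7/2})$, where $v_\pm$ is the branch of $v^2=1+Xv^7$ with $v_\pm(0)=\pm1$. The algebraic singularities of this last equation sit only at its critical points $v=\pm\sqrt{7/5}$, i.e.\ at $X=\pm X^\ast$ with $X^\ast=\tfrac{2\cdot 5^{5/2}}{7^{7/2}}=\rho^{7/2}$; the involution $(X,v)\mapsto(-X,-v)$ shows $v_-(X)=-v_+(-X)$, and following $v_+$ along the real axis shows it is analytic on $|X|<X^\ast$ with a \emph{single} dominant singularity, of square-root type, at $X=+X^\ast$ (so $v_-$ has its only dominant singularity at $X=-X^\ast$). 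Pulling these back through $X=z^{7/2}$ on the slit plane — solving $\tfrac72\arg z\equiv \arg X^\ast\pmod{2\pi}$ with $|z|=\rho$ — gives that $u_1$ is singular exactly at $z\in\{\rho,\rho\omega^{\pm2}\}=\{\zeta_7,\zeta_2,\zeta_5\}$ and $u_2$ exactly at $z\in\{\rho\omega^{\pm1},\rho\omega^{\pm3}\}=\{\zeta_1,\zeta_3,\zeta_4,\zeta_6\}$, and that these, together with the branch point at $z=0$, are the only singularities of $u_1,u_2$ in $|z|\le\rho$.

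Next I would run the standard local analysis at $z=\rho$. There $u=\tau$ is a double root of the kernel since $P'(\tau)=0$; one computes $P''(\tau)=35\tau^3\neq0$ and, crucially for the absence of a linear term, $P'''(\tau)=0$. Expanding $1/z=P(u)$ about $(\rho,\tau)$ then gives $u_1(z)=\tau+C_7\sqrt{1-z/\rho}+\LandauO((1-z/\rho)^{3/2})$ with $C_7=-\sqrt{2/(\rho P''(\tau))}=-\tau/\sqrt5$, the sign being the one that makes $u_1$ the small branch. The other small root $u_2(\rho)=\tau_2$ is a \emph{simple} root of the kernel, so $u_2$ is analytic at $\rho$; implicit differentiation of $1-zP(u)=0$ gives $u_2(z)=\tau_2+D_7(1-z/\rho)+\LandauO((1-z/\rho)^2)$ with $D_7=P(\tau_2)/P'(\tau_2)=\tau_2\frac{\tau_2^7+1}{5\tau_2^7-2}$. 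Finally $\tau_2$ is the unique real root other than $\tau$ of $P(t)=1/\rho$; clearing the algebraic number $\rho$ and dividing out the double factor coming from $\tau$ (a resultant with $t^7$) produces the stated degree-$35$ polynomial.

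It then remains to substitute the $z=\rho$ expansions into the rotation identity. For $k\in\{2,5,7\}$ the singular branch on the left is $u_1$, which must be matched with $\omega^{-3k}u_1(\omega^{-k}z)$ (while $u_2$ matches $\omega^{-3k}u_2(\omega^{-k}z)$); for $k\in\{1,3,4,6\}$ the same matching forces $u_2$ on the left to correspond to $\omega^{-3k}u_1(\omega^{-k}z)$ and $u_1$ to $\omega^{-3k}u_2(\omega^{-k}z)$. Using $1-\omega^{-k}z/\rho=1-z/\zeta_k$ and reading off coefficients yields precisely the claimed expansions, with $C_k=C_7\omega^{-3k}=-\tfrac{\tau}{\sqrt5}\,\omega^{-3k}$ and $D_k=D_7\omega^{-3k}=\tau_2\frac{\tau_2^7+1}{5\tau_2^7-2}\,\omega^{-3k}$. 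The one genuinely delicate step is the second one — proving the $7$ a priori singularities split as $\{\zeta_2,\zeta_5,\zeta_7\}$ for $u_1$ versus $\{\zeta_1,\zeta_3,\zeta_4,\zeta_6\}$ for $u_2$; this is exactly the ``how singularities spread amongst the roots of the kernel'' phenomenon, and it requires honest bookkeeping of branch cuts and of $\arg z$ modulo $2\pi$ through the $7/2$-th power substitution (equivalently, a monodromy/argument-principle count as in Banderier--Flajolet's Section~3.3). Everything else is routine: the rotation law is pure algebra, and the expansion at $\rho$ is a standard Newton-polygon / implicit-function computation.
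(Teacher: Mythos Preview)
Your proof is correct, and the ingredients (rotation law, local analysis at $\rho$, transport) are the same as in the paper, but you take a genuinely different route for the crucial step of identifying the $\{2,5,7\}$ versus $\{1,3,4,6\}$ split.

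The paper proves a \emph{pointwise} rotation law with an explicit swap, $u_1(\omega z)=\omega^{-3}u_2(z)$ and $u_2(\omega z)=\omega^{-3}u_1(z)$, by checking that $U(z)=\omega^{3}u_i(\omega z)$ is a small root with a different Puiseux expansion at $z=0$ than $u_i$ (since $\omega^{3}\cdot\omega^{1/2}=-1$); the distribution of singular branches over the seven $\zeta_k$ is then obtained by iterating this swap along the circle (together with conjugation symmetry to reach the lower half of the circle without crossing the cut). Your argument keeps the rotation law at the set level and instead uses the substitution $u=z^{1/2}v$, $X=z^{7/2}$ to reduce to the curve $v^2=1+Xv^7$, whose only two branch points are $X=\pm X^\ast$; the split then drops out by counting preimages of $\pm X^\ast$ under $z\mapsto z^{7/2}$ on the slit plane (three for $+X^\ast$, four for $-X^\ast$). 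Your route makes the $3{+}4$ asymmetry completely transparent and, as a bonus, shows in one stroke that there are no further singularities of $u_1,u_2$ on $|z|\le\rho$; the paper's route is slightly lighter computationally and is the one that generalises verbatim to arbitrary coprime $(a,c)$ (their Lemma~8.2), where a clean global substitution replacing your $z^{7/2}$ is not always available. Your final ``singular matches singular'' identification inside the set identity is the right way to pin down which branch is which at each $\zeta_k$.
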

\begin{proof}
We will show the following \emph{rotation law} for the small branches
	(for all $z \in \C$, with $|z| \leq \rho$ and $0<\arg(z)<\pi-2\pi/7$):
	\begin{align*}
	           u_1(\omega z) &=  \omega^{-3} u_2(z),\\ 
		u_2(\omega z) &= \omega^{-3} u_1(z). 
	\end{align*}  
Indeed, let us consider the function $U(z):=\omega^3 u_i(w z)$ (with $i=1$ or $i=2$, as you prefer!) and a mysterious quantity $X$,
defined by $X(z):= U^2-z\phi(U)$ (where $\phi(u):=u^2 P(u)$).
So we have $X(z)= (\omega^3 u_i(\omega z))^2-z \phi(\omega^3 u_i(\omega z))=  \omega^6 u_i(\omega z)^2 -z \phi(u_i(\omega z))$
(because $\phi$ is $7$-periodic)
and thus $\omega X(z/\omega)=  \omega(\omega^6 u_i(z)^2 -z/\omega \phi(u_i(z))) =  u_i(z)^2 -z \phi(u_i(z))$,
which is $0$ because we recognize here the kernel equation.
This implies that $X=U^2-z\phi(U)=0$ and thus $U$ is a root of the kernel.
Which one? It is one of the small roots, because it is converging to $0$ at $0$.
What is more, this root $U$ is not $u_i$, because it has a different Puiseux expansion 
(and Puiseux expansions are  unique). So, by the analytic continuation principle (therefore, here, as far as we avoid the cut line 
$\arg(z)=-\pi$), we just proved that $\omega^3 u_1(\omega z)= u_2(z)$ and $\omega^3 u_2(\omega z)= u_1(z)$
(and this also proves a similar rotation law for large branches, but we do not need it).

Accordingly, at every $\zeta_k$, amongst the two small branches, only one branch becomes singular: this is $u_1$ for $k=2,5,7$ and $u_2$ for $k=1,3,4,6$. 
This is illustrated in Figure~\ref{fig:singui}.
\begin{center}
	\begin{figure}[ht!]		
	\centering
	\def\n{7}
	\scalebox{1.19}{
	\begin{tikzpicture}[dot/.style={draw,fill,circle,inner sep=1pt}]
	  \pgfmathsetmacro{\ns}{\n-1}
	  \draw[->] (-1.7,0) -- (1.7,0) node[above] {$\Re$};
	  \draw[->] (0,-1.7) -- (0,1.7) node[left] {$\Im$};
	  \draw[help lines] (0,0) circle (1);

	  \foreach \i in {0,...,\ns} {		
		 \ifthenelse{\i=0}{
			\pgfmathsetmacro{\curdot}{\i*360/\n-45}}{
			\pgfmathsetmacro{\curdot}{\i*360/\n}};
		 \node[dot,label={\curdot:$\zeta_{\i}$}] (w\i) at (\i*360/\n:1) {};
		 \draw[-] (0,0) -- (w\i);
	  }
	  \draw[->] (0:.3) arc (0:360/\n:.3);
	  \node at (360/\n/2:.5) {\scriptsize$\frac{2\pi}{\n}$};
	\end{tikzpicture}
	}
	\qquad \qquad
	\def\n{7}
	\scalebox{1.19}{
	\begin{tikzpicture}[dot/.style={draw,fill,circle,inner sep=1pt}]
		\pgfmathsetmacro{\ns}{\n-1}
		\draw[->] (-1.7,0) -- (1.7,0) node[above] {$\Re$};
		\draw[->] (0,-1.7) -- (0,1.7) node[left] {$\Im$};
		\draw[help lines] (0,0) circle (1);

		\foreach \i in {0,...,\ns} {		
		 \ifthenelse{\i=0}{
			\pgfmathsetmacro{\curdot}{\i*360/\n-45}}{
			\pgfmathsetmacro{\curdot}{\i*360/\n}};
			
		 \ifthenelse{\i=0 \OR \i=2 \OR \i=5}{
			\node[dot,label={\curdot:\textcolor{blue}{$u_{1}$}}] (w\i) at (\i*360/\n:1) {};}{
			\node[dot,label={\curdot:\textcolor{red}{$u_{2}$}}] (w\i) at (\i*360/\n:1) {};};					
		 \draw[-] (0,0) -- (w\i);
		}
		\draw[->] (0:.3) arc (0:360/\n:.3);
		\node at (360/\n/2:.5) {\scriptsize$\frac{2\pi}{\n}$};
	\end{tikzpicture}
	}
	\caption{
		The locations of the $7$ possible singularities of the small branches (left); the small branch which is singular at that location (right).
	}
	\label{fig:singui}
	\end{figure}
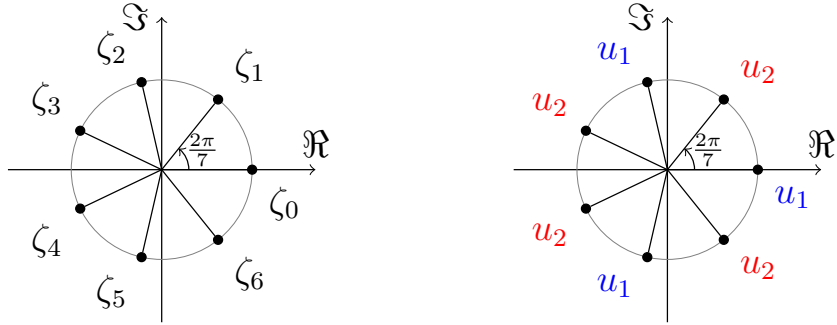
\end{center}
Hence, we directly see how the asymptotic expansion at the dominant singularities are correlated with the one of $u_1$ at $z=\rho=\zeta_7$,
which we derive following the approach of \cite{BaFl02}; this gives for $z\sim \rho$:
\begin{align*}
	u_{1}(z) &= \tau  + \coefb \sqrt{1 - z / \rho}  + \coefd (1-z/\rho)^{3/2}+\ldots,
\end{align*}	
where $\coefb = -\sqrt{2 \frac{P(\tau)}{P''(\tau)}}$. 
Note that in our case $P^{(3)}(\tau)=0$ (this funny cancellation holds for any $P(u)=p_{5} u^5+p_0+p_{-2} u^{-2}$ ), so even the formula for $\coefd$ is quite simple: $\coefd=-\frac{1}{2} \coefb$.  

In the lemma, the formula for $\tau_2=u_2(\rho)$ is obtained by a resultant computation.\end{proof}

For the local analysis of Knuth's generating functions $F_0(z)$ and $G_1(z)$ with periodic support, we introduce a shorthand notation:

\begin{definition}[Local asymptotics extractor 	$\exzkn$]
Let $F(z)$ be an algebraic function with~$p$ dominant singularities $\zeta_k$  (for $k=1,\ldots, p$). Accordingly, for each $\zeta_k$, $F(z)$ can be expressed as a Puiseux series, i.e.,~there exist $r \in \Q$ and coefficients $c_n$ (both depending on $k$) such that
$$F(z) = \sum_{j \geq 0} c_j (1 -  z/\zeta_k)^{r j}, \qquad \text{for } z \sim \zeta_k.$$
Then we define the local asymptotic extractor $\exzkn$ as
\begin{align*}
	\exzkn F(z) &:= \sum_{j \geq 0} c_j [z^n] (1 -  z/\zeta_k)^{r j}. 
\end{align*}
This notation can be considered as ``extracting the $z^n$-coefficient in the Puiseux expansion\footnote{In fact this notation holds for singular expansions of alg-log functions~\cite{flaj09}, exp-log functions, and more generally for expansions in Hardy fields~\cite{hard17} which are amenable to singularity analysis or saddle point methods.} of $F(z)$ at $z=\zeta_k$'', and singularity analysis  allows to write $[z^n] F(z) = \sum_{k} \exzkn F(z)+o(C^{-n})$, for some constant $C > |\zeta_k|$.
\end{definition}

\begin{examplenoend}
	A sloppy but easy to remember formulation would be to say
	\begin{align*}
		\exzkn F(z) &:= [z^n] \text{(singular expansion of $F(z)$ at $z=\zeta_k$)}.
	\end{align*}
	This is well illustrated by the generating function $D(z)$ of Dyck paths defined by the functional equation $D(z) = 1+ z^2 D(z)^2$. In this case, we have $D(z) = \frac{1 - \sqrt{1- 4z^2}}{2z^2}$ with $p=2$ and $\zeta_1 = 1/2$ and $\zeta_2 = -1/2$. Therefore we get for any $\varepsilon>0$
	\begin{align*}
		[z^n] D(z) &= \exzk{n}{1/2} \,D(z) + \exzk{n}{-1/2} \,D(z) + o\left(\left(2-\varepsilon \right)^{n}\right)\\
		            &= [z^n] (-2\sqrt{2})\sqrt{1 - 2z} + [z^n] (-2\sqrt{2}) \sqrt{1 + 2z} + O\left(\frac{2^n}{n^{5/2}}\right) + o\left(\left(2-\varepsilon \right)^{n}\right) \,. \\
		            \noalign{\vspace{-\baselineskip} \hfill $_\blacksquare$ \vspace{-\baselineskip}}
	\end{align*}
\end{examplenoend}

\begin{prop} [Periodic rule of thumb]
	\label{prop:periodicasympt}
Let $\rho$ be the positive real dominant singularity in the previous definition.
If additionally the generating function $F(z)$ satisfies a rotation law $F(\omega z)=\omega^mF(z)$ (where $\omega =\exp(2i\pi/p)$, $p$ maximal), then one has a neat simplification:
	\begin{align*}
 	 [z^n]F(z) &= p [z^n]_{\rho}  F(z) + o(\rho^n), 
 \end{align*}
if $n-m$ is a multiple of $p$. (The other coefficients are equal to $0$.)
\end{prop}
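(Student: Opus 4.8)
The plan is to feed the rotation law into the singularity-analysis decomposition recalled just above the statement. Since $F$ is algebraic it is amenable to singularity analysis, and the $\zeta_k=\rho\omega^k$ (with $\omega=\exp(2i\pi/p)$) are by hypothesis exactly its dominant singularities, so one starts from
\[
	[z^n]F(z)=\sum_{k=1}^{p}\exzk{n}{\zeta_k}F(z)+\Landauo(\rho^n).
\]
Everything then reduces to expressing a single local contribution $\exzk{n}{\zeta_k}F(z)$ in terms of the one at the real dominant singularity $\rho=\zeta_p$, and summing $p$ root-of-unity factors.

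First I would iterate the hypothesis $F(\omega z)=\omega^m F(z)$ to get $F(\omega^k z)=\omega^{mk}F(z)$ for all $k\in\Z$, and use this to \emph{transport} the Puiseux expansion from $\rho$ to each $\zeta_k$: writing $F(z)=\sum_{j\ge 0}c_j(1-z/\rho)^{rj}$ near $z\sim\rho$ and substituting $z=\omega^{-k}w$ (so that $z/\rho=w/\zeta_k$) gives $F(w)=\omega^{mk}\sum_{j\ge 0}c_j(1-w/\zeta_k)^{rj}$ for $w\sim\zeta_k$. By uniqueness of the Puiseux expansion this \emph{is} the singular expansion of $F$ at $\zeta_k$: the same exponents $rj$, and the coefficients just multiplied by $\omega^{mk}$. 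A one-line binomial extraction, using $(1-z/\zeta_k)^{rj}=\sum_n\binom{rj}{n}(-1)^n\zeta_k^{-n}z^n$ together with $\zeta_k^{-n}=\rho^{-n}\omega^{-kn}$, then yields $\exzk{n}{\zeta_k}F(z)=\omega^{k(m-n)}\exzk{n}{\rho}F(z)$. Substituting into the decomposition and invoking $\sum_{k=1}^{p}\omega^{k(m-n)}=p$ when $p\mid(n-m)$ and $=0$ otherwise gives $[z^n]F(z)=p\,\exzk{n}{\rho}F(z)+\Landauo(\rho^n)$ in the first case, and $\Landauo(\rho^n)$ in the second; in fact the ``other coefficients vanish'' is exact, as one sees immediately by comparing $z^n$-coefficients on the two sides of $F(\omega z)=\omega^m F(z)$, i.e.\ $\omega^n f_n=\omega^m f_n$.

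The step I expect to need the most care is not the algebra but the \emph{analytic bookkeeping}: one must make sure the rotation law $F(\omega^k z)=\omega^{mk}F(z)$ holds on a genuine $\Delta$-domain around \emph{each} $\zeta_k$ — not merely on the single angular sector in which it is initially established (compare the restriction $0<\arg z<\pi-2\pi/7$ in Lemma~\ref{lem:u1u2}) — so that singularity analysis may legitimately be applied at all $p$ dominant singularities at once, the $\Landauo(\rho^n)$ error coming, as recalled above, from the sub-dominant singularities. For an algebraic $F$ this is routine: $F$ continues analytically to $\C$ minus finitely many cuts, the identity is algebraic and hence propagates along any path avoiding the branch points. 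The only other point to handle carefully is uniqueness of the local expansion, which is exactly what licenses the conclusion that the transported series is \emph{the} singular expansion of $F$ at $\zeta_k$ rather than merely one admissible branch among several.
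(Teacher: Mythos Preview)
Your argument is correct and follows essentially the same route as the paper's own proof: start from the singularity-analysis decomposition $[z^n]F(z)=\sum_k\exzk{n}{\zeta_k}F(z)+o(\rho^n)$, use the rotation law $F(z)=\omega^{mk}F(\omega^{-k}z)$ to transport each local contribution back to $\rho$, and collapse via the root-of-unity sum $\sum_k\omega^{k(m-n)}$. Your additional remarks---the explicit Puiseux-transport verification, the direct coefficient argument that $f_n=0$ exactly when $p\nmid(n-m)$, and the caveat about extending the rotation law to full $\Delta$-domains---are all sound and in fact make points the paper's terse chain of equalities leaves implicit.
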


\begin{proof}
As $F(z)$ is a generating function, it has real positive coefficients and therefore, by Pringsheim's theorem \cite[Theorem IV.6]{flaj09}, 
one of the $\zeta_k$'s  has to be real positive, called $\rho$. We relabel the $\zeta_k$'s such that $\zeta_k:=\omega^k \rho$. Then
\begin{align*}
 	 [z^n]F(z)-o(\rho^n) &=\sum_{k=1}^p \exzkn F(z) 
	                      =\sum_{k=1}^p   \exzkn (\omega^m)^k   F(\omega^{-k} z)  
 							    =\sum_{k=1}^p (\omega^m)^k  (\omega^{-k})^n  [z^n]_\rho  F(z)  \\
    							&=\left(\sum_{k=1}^p   (\omega^k)^{m-n}\right)     [z^n]_\rho  F(z)
  								 = p  [z^n]_{\rho}  F(z), 
 \end{align*}
if $n-m$ is a multiple of $p$, and $0$ elsewhere.
\end{proof}

We can apply this proposition to $F_0(z)$ and $G_1(z)$, because the rotation law for the $u_i$'s 
implies: $F_0(\omega z)=\omega^{-2} F_0(z)$ and $G_1(\omega z)=w^{-2} G_1(z)$. 
Thus, we just have to compute the asymptotics coming from the Puiseux expansion of  $F_0(z)$ and~$G_1(z)$ at $z=\rho$,
and multiply it by 7
(recall that it is classical to infer the asymptotics of the coefficients from the Puiseux expansion of the functions via the so-called ``transfer'' Theorem VI.3 from~\cite{flaj09}), this gives:
\begin{theo}[Asymptotics of coefficients, answer to Knuth's problem] The asymptotics for the number of excursions below $y=(2/5)x+2/5$ and $y=(2/5)x+1/5$  are given by:
\begin{align*}
	A_n &=[z^{7n-2}] G_1(z) = 
		 \alpha_1 \frac{\rho^{-7n}}{\sqrt{\pi (7n-2)^3}} +  \frac{3\alpha_2}{2}\frac{\rho^{-7n}}{\sqrt{\pi (7n-2)^5}} + \LandauO(n^{-7/2}), \\
	B_n &=[z^{7n-2}] F_0(z) = 
		 \beta_1 \frac{\rho^{-7n}}{\sqrt{\pi (7n-2)^3}} + \frac{3\beta_2}{2}\frac{\rho^{-7n}}{\sqrt{\pi (7n-2)^5}} + \LandauO(n^{-7/2}),
\end{align*}
with the following constants where we define the shorthand $\mu:=\tau_2/\tau$:
\begin{align*}
	\alpha_1 &= \frac{\mu^4 + 2 \mu^3 + 3 \mu^2 + 4 \mu + 5}{\sqrt{5}}, \qquad
	\beta_1 = \sqrt{5}-\alpha_1, \qquad
	\beta_2 =  -\frac{9}{10}\sqrt{5} - \alpha_2, \\
	\alpha_2 &= -\frac{1}{10} \frac{5 \tau_2^7 (13\mu^4 + 22\mu^3 + 29\mu^2+36\mu+45) + 2(15\mu^4+20\mu^3+13\mu^2-8\mu-45)}{\sqrt{5}(5\tau_2^7-2)}.
\end{align*}
\end{theo}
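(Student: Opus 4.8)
The plan is to reduce to a single local analysis at the positive real dominant singularity $\rho$ by exploiting the $7$-periodicity, then to insert the known Puiseux expansions of the kernel roots and finish by singularity analysis. First I would record the rotation laws $F_0(\omega z)=\omega^{-2}F_0(z)$ and $G_1(\omega z)=\omega^{-2}G_1(z)$: they follow from the laws $u_1(\omega z)=\omega^{-3}u_2(z)$, $u_2(\omega z)=\omega^{-3}u_1(z)$ of Lemma~\ref{lem:u1u2} fed into the closed forms of Theorem~\ref{theo:closedformgf}, the homogeneity in $u_1,u_2$ making the $\omega$-powers collapse. Hence both $F_0$ and $G_1$ meet the hypothesis of Proposition~\ref{prop:periodicasympt} with $m=-2$, and since $7n-2\equiv-2\pmod{7}$ that proposition yields $A_n=7\,[z^{7n-2}]_{\rho}\,G_1(z)+o(\rho^{-7n})$ and $B_n=7\,[z^{7n-2}]_{\rho}\,F_0(z)+o(\rho^{-7n})$. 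So it suffices to expand $F_0$ and $G_1$ at $z=\rho$ and transfer, carrying a global factor~$7$.

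Next I would cancel the factor $u_1-u_2$ in the closed forms: writing $h_j(x,y):=\sum_{i=0}^{j}x^{j-i}y^{i}$ one gets $G_1(z)=z^{-1}h_5(u_1,u_2)$ and $F_0(z)=-z^{-1}u_1u_2\,h_3(u_1,u_2)$. Into these I substitute the expansions at $z=\rho$ from Lemma~\ref{lem:u1u2} and the remarks after its proof: $u_1(z)=\tau+\coefb(1-z/\rho)^{1/2}+\coefd(1-z/\rho)^{3/2}+\LandauO((1-z/\rho)^{2})$, where the $(1-z/\rho)$-term is absent because $P'''(\tau)=0$; $u_2(z)=\tau_2+D_7(1-z/\rho)+\LandauO((1-z/\rho)^{2})$, analytic at $\rho$; and $z^{-1}=\rho^{-1}\sum_{k\ge0}(1-z/\rho)^{k}$. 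Expanding the products and splitting off the part analytic in $z$ from the half-integer powers, each of $G_1$ and $F_0$ takes the form $(\text{analytic at }\rho)+\gamma_1(1-z/\rho)^{1/2}+\gamma_3(1-z/\rho)^{3/2}+\LandauO((1-z/\rho)^{5/2})$, with $\gamma_1,\gamma_3$ explicit in $\tau,\tau_2,\rho$; using $\tau^{7}=2/5$ and $D_7=\tau_2\frac{\tau_2^7+1}{5\tau_2^7-2}$ they turn into rational functions of $\mu=\tau_2/\tau$ — for instance the $(1-z/\rho)^{1/2}$-coefficient of $G_1$ equals $-\tfrac{\tau^{5}}{\rho\sqrt5}(\mu^{4}+2\mu^{3}+3\mu^{2}+4\mu+5)$, which is already $\alpha_1$ up to a normalizing constant.

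Then I would apply the transfer theorem (Theorem~VI.3 of~\cite{flaj09}) with the shift $n\mapsto 7n-2$, using $[z^{7n-2}](1-z/\rho)^{1/2}=\frac{\rho^{-(7n-2)}}{\Gamma(-1/2)}(7n-2)^{-3/2}\bigl(1+\tfrac{3/8}{7n-2}+\cdots\bigr)$ and $[z^{7n-2}](1-z/\rho)^{3/2}=\frac{\rho^{-(7n-2)}}{\Gamma(-3/2)}(7n-2)^{-5/2}(1+\cdots)$, with $\Gamma(-1/2)=-2\sqrt\pi$ and $\Gamma(-3/2)=\tfrac{4}{3}\sqrt\pi$. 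Multiplying by~$7$, writing $\rho^{-(7n-2)}=\rho^{2}\rho^{-7n}$, and collecting powers of $(7n-2)$ — keeping in mind that the $(7n-2)^{-5/2}$ coefficient gets a contribution both from $\gamma_3$ and from the second-order correction of the square-root term carried by $\gamma_1$ — puts $A_n$ in exactly the claimed shape, with $\alpha_1=-\tfrac{7}{2}\rho^{2}\gamma_1$ and a parallel (slightly bulkier, now $\gamma_3$-dependent) expression for $\alpha_2$; simplifying with $\tau^{7}=2/5$ gives the stated $\alpha_1,\alpha_2$. For the $\beta_i$ I would either rerun this on $F_0$, or — cleaner, and better for the subleading order — use $G_1+F_0=z^{-1}(u_1^5+u_2^5)$, so that $A_n+B_n=[z^{7n-1}](u_1^5+u_2^5)$ by~\eqref{eq:AplusB}, which by Theorem~\ref{theo:closedformcoeff} equals $\tfrac{2}{7n-1}\binom{7n-1}{2n}$; a plain Stirling expansion of the latter, re-expressed on the $(7n-2)$-scale, has first two normalized coefficients $\sqrt5$ and $-\tfrac{9}{10}\sqrt5$, whence $\beta_1=\sqrt5-\alpha_1$ and $\beta_2=-\tfrac{9}{10}\sqrt5-\alpha_2$.

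The periodic reduction, the transfer step, and the Stirling estimate for $A_n+B_n$ are routine. The hard part will be the bookkeeping in the middle: correctly assembling $\gamma_3$ (which needs the second Puiseux coefficient $\coefd$ of $u_1$, the vanishing of its $(1-z/\rho)$-coefficient, and the $D_7$-term of $u_2$ entering through the cross-products in $h_5$ and $u_1u_2h_3$), then combining it with the first-order correction to the square-root transfer, and finally compressing everything down to the compact rational functions of $\mu$ displayed in the statement without an arithmetic slip along the way.
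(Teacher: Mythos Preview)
Your proposal is correct and follows essentially the same route as the paper: establish the rotation laws $F_0(\omega z)=\omega^{-2}F_0(z)$, $G_1(\omega z)=\omega^{-2}G_1(z)$ from Lemma~\ref{lem:u1u2}, apply Proposition~\ref{prop:periodicasympt} to reduce to the single singularity $\rho$ with a factor~$7$, insert the Puiseux expansions of $u_1,u_2$ into the closed forms of Theorem~\ref{theo:closedformgf}, and transfer. Your extra explicitness (the $h_j$ factorization, the observation that $P'''(\tau)=0$ kills the linear term in the Puiseux expansion of $u_1$, and the shortcut of reading $\beta_1,\beta_2$ off the Stirling expansion of $\tfrac{2}{7n-1}\binom{7n-1}{2n}$) is exactly what the paper's terse argument leaves implicit---note that the paper states $\beta_1=\sqrt5-\alpha_1$, $\beta_2=-\tfrac{9}{10}\sqrt5-\alpha_2$ in precisely this subtractive form and then invokes the Stirling asymptotics of $A_n+B_n$ as a ``sanity test'', which is the same computation you propose.
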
 
This theorem leads to the following asymptotics for $A_n+B_n$ 
(and this is for sure a good sanity test, coherent with a direct application of Stirling's formula to the closed-form formula~\eqref{eq:AplusBex} for $A_n+B_n$):
\begin{align*}
	A_n+B_n &= \sqrt{\frac{5}{7^3\pi}} \frac{\rho^{-7n}}{\sqrt{ n^3}} + \LandauO(n^{-5/2}).
\end{align*}
Finally, we directly get
\begin{align*}
	\frac{A_n}{B_n} &= \frac{\alpha_1 + \frac{3\alpha_2}{2 (7n-2)}}{\beta_1 + \frac{3\beta_2}{2 (7n-2)}} + \LandauO(n^{-2})  
					     = \frac{\alpha_1}{\beta_1} + \frac{3}{14} \left(\frac{\alpha_2 \beta_1 - \alpha_1 \beta_2}{\beta_1^2} \right)\frac{1}{n} + \LandauO(n^{-2}),
\end{align*}
which implies that Knuth's constants are 
\begin{align*}
	\kappa_1 &= \frac{\alpha_1}{\beta_1} = - \frac{5}{\mu^4 + 2 \mu^3 + 3 \mu^2 + 4 \mu} - 1 \\
	         &\approx 1.6302576629903501404248,\\
          	\kappa_2 &= - \frac{3}{14} \left(\frac{\alpha_2 \beta_1 - \alpha_1 \beta_2}{\beta_1^2} \right) 
          	          = \frac{3}{9800} (13-236 \kappa_1 -194 \kappa_1^2 -388 \kappa_1^3 +437 \kappa_1^4) \\
          	         &\approx         0.1586682269720227755147.
\end{align*}

Now a few resultant computations give the algebraic equations satisfied by $\tau_2$, $\kappa_1$, and~$\kappa_2$. 
We will illustrate their derivation with the required Maple commands. In what follows, these are always set in a typewriter font.
First, we compute an annihilating polynomial for $\rho$:
\begin{maplegroup}
\begin{mapleinput}
\mapleinline{active}{1d}{R1:=resultant(numer(1-z*P),numer(diff(P,u)),u);
}{}
\end{mapleinput}
\begin{maplelatex}
\mapleinline{inert}{2d}{R1 := 823543*z^7-12500}{\[\displaystyle \emph{R1}\, := \,823543\,{z}^{7}-12500\]}
\end{maplelatex}
\end{maplegroup}

Then, we construct from it an annihilating polynomial for $u_i(\rho)$. 

\begin{maplegroup}
\begin{mapleinput}
\mapleinline{active}{1d}{R2:=factor(resultant(numer(1-z*P),R1,z));
}{}
\end{mapleinput}
\mapleresult
\begin{maplelatex}
\mapleinline{inert}{2d}{(500*u^35+3900*u^28+13540*u^21+27708*u^14+37500*u^7+3125)*(-2+5*u^7)^2}{\[\displaystyle  \left( 500\,{u}^{35}+3900\,{u}^{28}+13540\,{u}^{21}+27708\,{u}^{14}+37500\,{u}^{7}+3125 \right)  \left( -2+5\,{u}^{7} \right) ^{2}\]}	
\end{maplelatex}
\end{maplegroup}

This polynomial contains $u_1(\rho)=\tau$, and $u_2(\rho)=\tau_2$ as roots. It factorizes into smaller polynomials and these two roots are in separate factors. Thus, we can go on with the right factor which we save in \texttt{Rtau2}. Then, we continue with the annihilating polynomial for $\mu$.

\begin{maplegroup}
\begin{mapleinput}
\mapleinline{active}{1d}{
resultant(x*t-t2,subs(u=t,diff(P,u)),t);
factor(resultant(\%,subs(u=t2,Rtau2),t2));
}{}
\end{mapleinput}
\end{maplegroup}

We identify the algebraic relation for $\mu$ and save it in \texttt{Rmu}. Finally, we compute the minimal polynomial for $\kappa_1$:

\begin{maplegroup}
\begin{mapleinput}
\mapleinline{active}{1d}{Rmu:=2*u\symbol{94}5+4*u\symbol{94}4+6*u\symbol{94}3+8*u\symbol{94}2+10*u+5;
Rk1:=resultant((x+1)*(u\symbol{94}4+2*u\symbol{94}3+3*u\symbol{94}2+4*u)+5,Rmu,u):
factor(Rk1/igcd(coeffs(Rk1)));
}{}
\end{mapleinput}
\mapleresult
\begin{maplelatex}
\mapleinline{inert}{2d}{-23*x^5+41*x^4-10*x^3+6*x^2+x+1}{\[\displaystyle -23\,{x}^{5}+41\,{x}^{4}-10\,{x}^{3}+6\,{x}^{2}+x+1\]}
\end{maplelatex}
\end{maplegroup}

In conclusion, $\kappa_1$ is the unique real root of the polynomial 
$23 x^5 -41 x^4 +10 x^3 -6x^2 - x- 1$, and similar computations show that $(7/3) \kappa_2$ 
is the unique real root of $11571875x^5-5363750x^4+628250x^3-97580x^2+5180x-142$.
The Galois group of each of these polynomials is $S_5$. This implies that there is no closed-form formula for the Knuth constants $\kappa_1$ and $\kappa_2$ in terms of basic operations on integers, and roots of any degree.

\bigskip

In the next section we want to establish a link with the results from Nakamigawa and Tokushige. We will show how Knuth derived his problem and how to establish more such nice identities. 

\pagebreak

\section[Links with Nakamigawa/Tokushige]{Links with the work of Nakamigawa and Tokushige}
\label{sec:nakatoku}

In this section, we show the connection between a result of Nakamigawa and Tokushige~\cite{Nakamigawa12} and Knuth's statement. Furthermore, we derive extensions of this result.

Let $\alpha, \beta$ be positive rational numbers. The Nakamigawa--Tokushige  model consists of a single boundary $L:y = \alpha x + \beta$, and a lattice point\footnote{In the article~\cite{Nakamigawa12}, $Q=(m,n)$; we changed these coordinates
in order to avoid a conflict with our other notations.} $Q=(q_1, q_2) \in \Z^2$ on $L$, i.e.,~$q_2 = \alpha q_1 + \beta$. Furthermore the walks go in the opposite direction, i.e.,~they start in $Q$, use unit steps South and West (i.e.,~$(0,-1)$ and $(-1,0)$, respectively), and end in the origin. Let $V$ be the ``vast'' set of such walks without any restriction. The enumeration of $V$ is a folklore result: $|V| = \tbinom{q_1+q_2}{q_1}$. Let $W \subset V$ be the set of walks which do not cross the line $L$ and touch it only at $Q$. 

\begin{definition} [Nearest distance to the boundary]
	Let $w \in V$ be a walk from a point $Q$ to the point $(0,0)$. We define the \emph{minimum $y$-distance} $\delta(w)$ as follows: if the walk $w$ touches or crosses the boundary $y=\alpha x +\beta $ after the first step, then let $\delta(w) = 0$, otherwise let $\delta(w)$ be the minimum of $\alpha p_1 + \beta- p_2$, where $(p_1,p_2)$ runs over all lattice points on $w$ except $Q$, see Figure~\ref{fig:minydist}.
\end{definition}

\begin{figure}[ht!]
	\centering
		\includegraphics[width=0.29\textwidth]{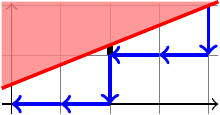}
	\qquad
		\includegraphics[width=0.29\textwidth]{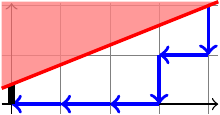}
	\qquad
		\includegraphics[width=0.29\textwidth]{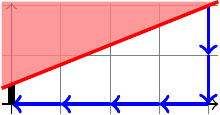}
	\caption{The $3$ walks of length $6$ in the $(2/5)x+2/5$ model with $\delta(w) >0$. The vertical bars mark the minimal $y$-distance $\delta(w)$. The first walk has $\delta(w)=1/5$, whereas the last two have $\delta(w)=2/5$. All of them are members of $W_{1/5}$, but only the two last ones belong to $W_{2/5}$.}
	\label{fig:minydist}
\end{figure}

Hence, we see that $\delta(w) = 0$ if and only if $w \in V \setminus W$, and so $\sum_{w \in V} \delta(w) = \sum_{w \in W} \delta(w)$. Note, if $\alpha$ and $\beta$ are positive integers, then $\sum_{w \in V} \delta(w) = |W|$, because $\delta(w) = 1$ for all $w \in W$. This gives rise to the interpretation as a weighted sum corresponding to the number of walks.

For a real $t \geq 0$, let $W_t := \{ w \in W \, | \, \delta(w) \geq t \}$, i.e.,~the walks staying at least a $y$-distance of $t$ away from the boundary. Due to the definition, $|W_t|$ is a left-continuous step function of~$t$, and we get the representation
\begin{align*}
	\int_{0}^{1} |W_t| \, dt &= \sum_{w \in V} \delta(w).
\end{align*}

It is quite nice that this sum can be further simplified;  this is what the next theorem states:

\begin{theo}[Nakamigawa--Tokushige lattice path integral]
	\label{theo:naka1}
	Let $q_1,q_2$ be positive integers, and let $\alpha,\beta$ be positive reals with $q_2=\alpha q_1 + \beta$. Let $V$ be the set of walks from the origin to the point\footnote{Nota bene: As proven in Lemma~\ref{lem:rationalslope} (Possible starting points on the boundary), if $\alpha$ or $\beta$ are irrational, then there is at most one such point. While if $\alpha$ and $\beta$ are rational (with the right gcd condition), then there are infinitely many such points.}
	$(q_1,q_2)$. Then, we have
	\begin{align}
		\label{eq:naka}
		\int_{0}^{1} |W_t| \, dt &= \sum_{w \in V} \delta(w) = \frac{\beta}{q_1+q_2} \dbinom{q_1+q_2}{q_1}.
	\end{align}
\end{theo}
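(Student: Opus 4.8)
The plan is to prove the identity $\sum_{w\in V}\delta(w)=\frac{\beta}{q_1+q_2}\binom{q_1+q_2}{q_1}$ by a ``slicing'' argument on the weighted sum, reducing it to a cycle-lemma count for each threshold value of $t$. First I would observe that the set of relevant thresholds is discrete: since every lattice point $(p_1,p_2)$ on a walk $w\in V$ satisfies $\alpha p_1+\beta-p_2\in\frac1c\Z$ in the rational case (and more generally the values $\delta(w)$ that occur are finitely many in $[0,1)$), the function $t\mapsto|W_t|$ is a left-continuous step function, so
\begin{align*}
  \int_0^1 |W_t|\,dt=\sum_{w\in V}\delta(w),
\end{align*}
as already noted in the excerpt; the real content is evaluating one of the two sides. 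I would evaluate $\int_0^1|W_t|\,dt$ by computing $|W_t|$ for each fixed $t\in(0,1)$.

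The key step is a cycle-lemma (Dvoretzky--Motzkin) argument identifying $|W_t|$. Fix $t$ and consider the shifted boundary $L_t: y=\alpha x+\beta-t$. A walk $w$ from $Q=(q_1,q_2)$ to the origin using South and West unit steps lies in $W_t$ exactly when it stays strictly below $L_t$ at every lattice point other than its start $Q$ (equivalently, after time-reversal, a walk from the origin to $Q$ that stays strictly below $L_t$ except possibly at its endpoint). Reading such a path as a sequence of $q_1$ West-steps and $q_2$ South-steps, I would set up the standard cyclic-shift correspondence: among the $\binom{q_1+q_2}{q_1}$ cyclic rearrangements of a given multiset of steps, the number of starting points from which the resulting path stays strictly under the line is governed by the cycle lemma, and yields a count of the form (numerator depending on the "defect" $\beta-t$) divided by $q_1+q_2$. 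Concretely, I expect to get
\begin{align*}
  |W_t| = \left\lceil\text{(something linear in }\beta-t)\right\rceil\cdot\frac{1}{q_1+q_2}\binom{q_1+q_2}{q_1}
\end{align*}
or, more usefully, that $|W_t|$ takes the constant value $\frac{k}{q_1+q_2}\binom{q_1+q_2}{q_1}$ on the subinterval of $t$ where exactly $k$ "layers" fit between the walk and $L$ — this is precisely the Bizley/Gessel-style cycle-lemma bookkeeping for paths below a rational line, which the excerpt already invokes via Bizley's and Grossman's formulas.

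Granting that, the integral collapses: writing $N:=\frac{1}{q_1+q_2}\binom{q_1+q_2}{q_1}$ and letting $\ell$ range over the lattice of achievable gaps inside $(0,\beta]$, each unit of "room" contributes a slab of width equal to the spacing of consecutive thresholds and height $N$, and summing geometrically the widths times heights telescopes to $\beta\cdot N$, i.e. $\int_0^1|W_t|\,dt=\beta\,N=\frac{\beta}{q_1+q_2}\binom{q_1+q_2}{q_1}$. In the special case $\alpha,\beta\in\Z$ this is transparent because $|W_t|=|W|$ for all $t\in(0,1]$ and $|W|=N$; the general rational (and then real, by a density/continuity argument in $\alpha,\beta$) case follows by the same slicing, with the step function having more steps but the same total mass.

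The main obstacle I anticipate is the precise cycle-lemma accounting for the \emph{strictly-below} constraint combined with the "touch $L$ only at $Q$" condition: one must be careful that the cyclic-shift argument counts each walk in $W_t$ exactly once (no walk is fixed by a nontrivial rotation once $\gcd$ conditions are imposed, but degenerate boundary contacts need separate checking), and that the linear-in-$(\beta-t)$ numerator is exactly right at the endpoints of each threshold subinterval (left-continuity vs. right-continuity of $|W_t|$ must be matched to which inequalities are strict). Handling the irrational case — where, per the referenced Lemma on possible starting points, $Q$ may be the unique lattice point on $L$ — requires only that the formula be interpreted as an identity of the displayed form, and can be obtained either directly by the same argument or by a limiting argument from the rational case; I would do the rational case carefully and then remark on the limit.
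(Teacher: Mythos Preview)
The paper does not actually prove this theorem: it cites the original cycle-lemma argument of Nakamigawa--Tokushige and points to the kernel-method/Lagrange-inversion generalization in Section~\ref{sec:duchon}. Your proposal is in the same cycle-lemma spirit as the cited proof, but the specific execution you sketch has a real gap.

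The problem is your per-threshold claim that $|W_t|=\frac{k}{q_1+q_2}\binom{q_1+q_2}{q_1}$ for some integer $k$ depending on the slab. This is false in general. Take $\alpha=1/3$, $\beta=2/3$, $Q=(4,2)$ (so $q_1+q_2=6$, $\binom{6}{2}=15$). A direct enumeration of the six-step walks gives $|W_{1/3}|=3$ and $|W_{2/3}|=2$; neither is of the form $\tfrac{k}{6}\cdot 15$ with integer $k$. The individual $|W_t|$ are exactly the quantities whose closed forms are complicated (Bizley/Grossman territory), and the cycle lemma does \emph{not} produce a clean value for them one threshold at a time. So the slicing-then-cycle-lemma plan cannot be carried out as written, and the obstacle you flagged is not a technicality but the actual failure point.

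The argument that does work applies the cycle lemma to the weighted sum $\sum_{w\in V}\delta(w)$ directly, bypassing $|W_t|$ altogether. For any word $w$ with $q_1$ West and $q_2$ South steps, one checks that the $q_1+q_2$ cyclic shifts $\sigma^i w$ (counted with multiplicity) satisfy $\sum_{i=0}^{q_1+q_2-1}\delta(\sigma^i w)=\beta$; this is a short computation with the running height function. Summing over $w\in V$ and using that every walk is hit exactly $q_1+q_2$ times by the double sum gives $(q_1+q_2)\sum_{w}\delta(w)=\beta\,|V|$, which is the formula. The point is that the weighted sum, not the individual level sets, is what the cycle lemma linearizes.
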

\begin{proof}
This corresponds to \cite[Theorem 1 and Corollary 1]{Nakamigawa12}, where it is proven using a cycle lemma approach. We give a generalization of this formula in the Section~\ref{sec:duchon} hereafter, based on our kernel method approach,
and Lagrange inversion.
\end{proof}

\textbf{A geometric bijection.}
If $\alpha$ is a rational slope, i.e.,~$\alpha = a/c$ for some $a,c \in \N \setminus \{ 0 \}$, then
\begin{align}
	\label{eq:nakaDiscrete}
	\int_{0}^{1} |W_t| \, dt &= \frac{1}{c} \sum_{t \in T} |W_t|,
\end{align}
where $T = \{\delta(w) \, | \, w \in W \} = \{ 1/c, 2/c, \ldots, (c-1)/c \}$.

This gives rise to the following interpretation:\footnote{In the original work, a slightly different interpretation is given.} If $w \in W$ then the first step is a South step. Then, let $\tilde{w}$ be the walk obtained from $w$ by omitting this step. Therefore, $\tilde{w}$ is a walk with $q_1+q_2-1$ steps, starting from $Q - (0,1) = (q_1,q_2-1)$, and ending in the origin. We see that all these walks which never cross or touch $L$ are in bijection with all walks in $W$. Now, take a walk $w \in W_t$ and its corresponding walk $\tilde{w}$. As $\delta(w) \geq t$, we can translate the barrier $L$ by $t- 1/c$ down and the walk $\tilde{w}$ still does not touch or cross this new barrier $\tilde{L}$. Hence, all walks in $W_t$ are in bijection with walks from $(q_1,q_2-1)$ to the origin which stay strictly below the barrier $\tilde{L}$. 

\begin{example}
This is the bijection that Knuth used in order to state his conjecture. In his case, we have $\alpha=\beta=2/5$ and $q_1=5n-1$, $q_2=2n$ for $n \in \N \setminus \{ 0 \}$. We see that $q_2=\alpha q_1 + \beta$. Hence, $a=2$ and $c=5$ which implies $T = \{1/5, 2/5, 3/5, 4/5 \}$. In this case, the values $3/5$ and $4/5$ are playing no role, as $|W_{3/5}| = |W_{4/5}| = 0$ because $\beta=2/5$ is the maximal value for $\delta(w)$ for all walks to the origin. Therefore, $\int_{0}^{1} |W_t| \, dt$ can be represented by two summands involving $W_{1/5}$ and $W_{2/5}$. 
They correspond to the two models $A$ and $B$ with the barriers $L_{1} : y < (2/5) x + 2/5$ and $L_{2} : y < (2/5) x + 1/5$, respectively where the paths start at $(5n-1,2n-1)$ and move by South and West steps to the origin.
Compare also Figure~\ref{fig:minydist}. Note that in Knuth's case the walks move in the opposite direction, which is obviously equivalent.
\end{example} 

In general, the number of summands $|W_t|$, which corresponds to the number of models in the equivalent formulation, is determined by the size of $T$ minus the maximal $y$-distance at $(0,0)$. Hence, we need to consider $\widetilde{T} = \{t \in T \, | \, t < \beta \} = \{ 1/c,\ldots,k/c \}$. This gives $k$ models with walks from $(q_0,q_1-1)$ to the origin which stay strictly below the boundaries $L_i : y < \alpha x + (\beta - (i-1)/c)$ for $i=1,\ldots,k$. Then, the above reasoning implies that the walks with boundary $L_i$ correspond to the set $W_{i/c}$. Thus, counting the walks in these $k$ models and summing them up, gives the binomial closed-form appearing in the lattice path integral theorem \eqref{eq:naka} divided by $c$, compare with \eqref{eq:nakaDiscrete}.

Up to now in this section, we explained which different counting models are connected with the Nakamigawa--Tokushige lattice path integral formula.
Now, we discuss the possible starting points on the boundary and their interplay with the (ir)rationality of the slope.

\begin{lemma}[Possible starting points on the boundary]
	\label{lem:rationalslope}
	Let $\alpha,\beta$ be positive reals. Then the equation $y = \alpha x + \beta$ possesses in the positive integers  
	\begin{enumerate}
		\item infinitely many solutions $(x,y)$, if $\alpha = a/c,~ \beta=b/c$ with $a,b,c \in \N$, and $\gcd(a,c)|b$;
		\begin{align*}
		x &= c s - r_a, &
		y &= a s + r_c,
	\end{align*}
	with $s \geq S_0 := \max \left( \lceil r_a/c \rceil, \lceil -r_c /a \rceil \right)$, and $r_a$ and $r_c$ are integers such that $r_a a + r_c c = b$;
		\item exactly one solution $(x,y) = (q_1,q_2)$, if $\alpha \notin \Q$ and $\beta = q_2 - \alpha q_1 > 0$;
		\item no solution, otherwise. 
	\end{enumerate}	
\end{lemma}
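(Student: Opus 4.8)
The plan is to split into the three cases according to the arithmetic nature of $\alpha$ and $\beta$, treating the Diophantine equation $y = \alpha x + \beta$ with $x,y \in \N$ directly. First I would observe that multiplying through by a common denominator reduces every case to a linear Diophantine equation, so the whole lemma is essentially a packaging of the classical solvability criterion for $ax + cy = N$ (Bézout). Concretely, if $\alpha = a/c$ and $\beta = b/c$ (we may take $c>0$), then $y = \alpha x + \beta$ is equivalent to $cy - ax = b$, i.e.\ $ax - cy = -b$, which has integer solutions if and only if $\gcd(a,c) \mid b$, and in that case the solution set is a one-parameter family $\{(x_0 + ct, y_0 + at) : t \in \Z\}$ for any particular solution $(x_0,y_0)$.

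For case (1): assume $\gcd(a,c)\mid b$ and pick integers $r_a, r_c$ with $r_a a + r_c c = b$ (these exist by the extended Euclidean algorithm applied to $\gcd(a,c)$ and then scaling, using the divisibility hypothesis). I would verify directly that $x = cs - r_a$, $y = as + r_c$ satisfies $cy - ax = c(as+r_c) - a(cs-r_a) = cr_c + ar_a = b$, so it is a solution for every $s \in \Z$; conversely any integer solution differs from $(-r_a, r_c)$ by an integer multiple of $(c,a)$ (the kernel of $(x,y)\mapsto cy-ax$, since $\gcd(a,c)$ divides both $a$ and $c$ appropriately — here one uses that we may take $a,c$ with $\gcd(a,c)=d$ and the lattice of solutions has step $(c/d, a/d)$; to match the statement exactly one absorbs $d$ into the choice of $r_a,r_c$ so the step is literally $(c,a)$). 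The positivity constraints $x \geq 1$ and $y \geq 1$ (or $\geq 0$, depending on the convention for $\N$) translate into $cs - r_a > 0$ and $as + r_c > 0$, i.e.\ $s > r_a/c$ and $s > -r_c/a$; since $a,c>0$ this is an interval $s \geq S_0 := \max(\lceil r_a/c\rceil, \lceil -r_c/a\rceil)$ (with the usual care about strict versus weak inequalities at the endpoints, which I would state once and then not belabour), giving infinitely many solutions. For case (2): if $\alpha \notin \Q$ and $\beta = q_2 - \alpha q_1 > 0$ for some integers $q_1,q_2$, then $(q_1,q_2)$ is one solution; if $(x,y)$ were another integer solution, subtracting gives $y - q_2 = \alpha(x - q_1)$, and since $\alpha$ is irrational this forces $x = q_1$ and hence $y = q_2$ — so the solution is unique. (If $\beta$ is of this form but with no such integer pair, or $\beta$ not expressible so, we land in case (3).) For case (3): if $\alpha = a/c$ rational but $\gcd(a,c) \nmid b$, the equation $cy - ax = b$ has no integer solution at all by Bézout; if $\alpha$ is irrational and $\beta$ is not of the form $q_2 - \alpha q_1$ with $q_1,q_2 \in \N$ giving $\beta>0$, then by the uniqueness argument of case (2) there is likewise nothing. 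These exhaust all possibilities.

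I do not expect a genuine obstacle here — the lemma is elementary number theory — but the one place that needs care is the bookkeeping in case (1): reconciling the general Bézout solution lattice, whose natural step is $(c/d, a/d)$ with $d = \gcd(a,c)$, with the step $(c,a)$ written in the statement. The clean way is to note the hypothesis $\gcd(a,c)\mid b$ is exactly what lets one choose $r_a, r_c$ freely enough that the displayed parametrization (with step $(c,a)$, possibly not hitting \emph{every} solution but hitting infinitely many, which is all that is claimed) is valid; alternatively one silently assumes, as the surrounding text does via its gcd conventions, that the fractions have been put in a normalized form. The second minor point is getting the floor/ceiling and strict-inequality conventions at $S_0$ consistent with the paper's convention $0 \in \N$; I would fix that convention at the start of the proof and apply it uniformly.
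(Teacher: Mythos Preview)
Your proposal is correct and follows essentially the same route as the paper: reduce to the linear Diophantine equation $cy - ax = b$, invoke B\'ezout/extended Euclid for existence and the parametrized family in case~(1), and use the subtraction argument $y - q_2 = \alpha(x - q_1)$ to force uniqueness in the irrational case~(2). Your discussion of the $(c,a)$ versus $(c/d,a/d)$ step with $d=\gcd(a,c)$ is in fact more careful than the paper, which simply asserts that ``all solutions'' are given by the displayed form; your observation that the lemma only claims infinitely many solutions (so hitting a sublattice suffices) is the right way to close that gap.
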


\begin{proof}
	Let us start with rational slope $\alpha = a/c$, with $a,c \in \N$. In order to get integer solutions we need a rational $\beta = b/c$, with $b \in \N$. Then we need to find the solutions of the following linear Diophantine equation:
	\begin{align}
		\label{eq:barriersoldioph}
		cy - ax = b.
	\end{align}
	These solutions exist if and only if $\gcd(a,c) | b$. By the extended Euclidean algorithm we get integers $r_a, r_c \in \Z$ such that
	\begin{align*}
		r_a a + r_c c = b.
	\end{align*}	
	This is done by first computing numbers $r_a', r_c'$ such that $r_a' a/\gcd(a,c) + r_c' / \gcd(a,c) = 1$ and multiplying by $b$.
	All solutions are then given by the linear combination stated in the lemma. Due to the special form of \eqref{eq:barriersoldioph} with a positive and a negative coefficient in front of the unknowns, it follows that for all $s \geq S_0$ the solutions are positive. 
	
	Finally, let $\alpha$ be irrational. Assume there exist two points $Q = (q_1,q_2)$ and $P = (p_1, p_2)$ fulfilling the assumptions. By taking the difference we get $q_2-p_2 = \alpha (q_1-p_1)$ which implies that for $q_1 \neq q_2$ we get the contradiction $\alpha \in \Q$. But for $q_1 = q_2$ it also holds that $p_1 = p_2$ and therefore $Q=P$.

	It is easy to see that this solution exists if and only if $\beta = q_2 - \alpha q_1$ for arbitrary $q_1, q_2 \in \N$ as long as $\beta > 0$.	
\end{proof}

The previous lemma also appeared in~\cite{Kempner1917}, there, Kempner (of Kempner's series fame) also mentions that a similar claim 
holds for the number of algebraic rational (respectively algebraic) points on $y =\alpha x +\beta$ when  $\alpha$ is algebraic (respectively transcendental) slope. 
The lemma gives us all possible integer solutions on a boundary with rational slope. With this knowledge we can reformulate the lattice path integral from Theorem~\ref{theo:naka1} 
in order to give a more explicit result for all possible starting points and for any slope.

\begin{theo}[Lattice path integral and explicit binomial expression]
	\label{theo:nakaRational}
	Let $a,b,c$ be positive integers such that $\gcd(a,c)|b$. Let $r_a, r_c$ be integers such that $r_a a + r_c c = b$. Then, $q_1(s):=cs-r_a$ and $q_2(s):=as+r_c$ define all pairs $(q_1(s),q_2(s))$ of integers on the barrier~$L : y = \frac{a}{c} x + \frac{b}{c}$. Furthermore, let $V$ be the set of walks from $(q_1(s),q_2(s))$ to the origin strictly below the barrier~$L$.
	Then, we have
	\begin{align}
		\label{eq:nakarational}
		\int_{0}^{1} |W_t| \, dt &= \frac{b/c}{(a+c)s + (r_c - r_a)} \dbinom{(a+c)s + (r_c - r_a)}{a s +  r_c},
	\end{align}
	for $s \geq S_0:=\max \left( \lceil r_a/c \rceil, \lceil -r_c /a \rceil \right)$.
\end{theo}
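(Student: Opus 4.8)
The plan is to reduce this statement to Theorem~\ref{theo:naka1} (the Nakamigawa--Tokushige lattice path integral) by substituting the explicit parametrization of the integer points on the boundary $L$. First I would invoke Lemma~\ref{lem:rationalslope}(1): since $\gcd(a,c)\mid b$, the integer points on $L:y=\frac{a}{c}x+\frac{b}{c}$ are exactly the pairs $(q_1(s),q_2(s))=(cs-r_a,\,as+r_c)$ for $s\ge S_0$, where $r_a,r_c$ satisfy $r_aa+r_cc=b$. One checks directly that these lie on $L$: indeed $\frac{a}{c}q_1(s)+\frac{b}{c}=\frac{a(cs-r_a)+b}{c}=\frac{acs-ar_a+r_aa+r_cc}{c}=as+r_c=q_2(s)$, and for $s\ge S_0$ both coordinates are positive integers. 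So the hypotheses of Theorem~\ref{theo:naka1} are met with $q_1=q_1(s)$, $q_2=q_2(s)$, $\alpha=a/c$, $\beta=b/c$, and the reals $\alpha,\beta$ are (in particular) positive reals with $q_2=\alpha q_1+\beta$.

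Next I would simply read off the right-hand side of \eqref{eq:naka} with these values. The sum $q_1+q_2$ becomes $(cs-r_a)+(as+r_c)=(a+c)s+(r_c-r_a)$, and $q_1$ itself is $cs-r_a$, while $\beta=b/c$. One subtlety is that the binomial in \eqref{eq:naka} is written as $\binom{q_1+q_2}{q_1}$, whereas the claimed \eqref{eq:nakarational} writes $\binom{(a+c)s+(r_c-r_a)}{as+r_c}$, i.e.\ with lower index $q_2$ rather than $q_1$; these agree by the symmetry $\binom{N}{q_1}=\binom{N}{N-q_1}=\binom{N}{q_2}$ since $q_1+q_2=N$. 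Substituting gives exactly
\begin{align*}
	\int_{0}^{1} |W_t|\,dt &= \frac{b/c}{(a+c)s+(r_c-r_a)}\binom{(a+c)s+(r_c-r_a)}{as+r_c},
\end{align*}
which is \eqref{eq:nakarational}.

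The main obstacle is essentially bookkeeping rather than any deep point: one must make sure the range $s\ge S_0$ is precisely the range in which $(q_1(s),q_2(s))$ are honest positive integer lattice points (so that $V$, $W$, $W_t$ and Theorem~\ref{theo:naka1} are all applicable), and that the algebraic identity $q_1+q_2=(a+c)s+(r_c-r_a)$ is carried through consistently between the two binomial conventions. Alternatively, as the authors hint in the proof of Theorem~\ref{theo:naka1}, one can sidestep reliance on \eqref{eq:naka} entirely and give a self-contained derivation via the kernel-method generating functions of Section~\ref{sec:funceq} together with Lagrange inversion: the integral $\int_0^1|W_t|\,dt=\frac1c\sum_{t\in\widetilde T}|W_t|$ decomposes into a sum over the $k$ shifted-barrier models $L_i:y<\alpha x+(\beta-(i-1)/c)$, each counted by a generating function in the Banderier--Flajolet model, and the telescoping of these contributions collapses to the single rational Catalan–type binomial; this is the route taken in Section~\ref{sec:duchon}. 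Either way, no new analytic input beyond what is already established is needed, so I expect the write-up to be short.
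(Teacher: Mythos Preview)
Your proposal is correct and matches the paper's approach: the paper presents Theorem~\ref{theo:nakaRational} precisely as a reformulation of Theorem~\ref{theo:naka1} obtained by plugging in the parametrization of integer points from Lemma~\ref{lem:rationalslope}, with no additional argument beyond the substitution (and, implicitly, the binomial symmetry you note). Your alternative kernel-method/Lagrange-inversion route is also alluded to in the paper but deferred to Section~\ref{sec:duchon}; for this theorem the direct substitution is all that is needed.
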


For fixed $s$ the walks are ending after $q_1(s) + q_2(s) = (a+c)s + (r_c - r_a)$ steps, start at $(q_1(s),q_2(s))$ and go to the origin. In the equivalent formulation the walks start at $(q_1(s),q_2(s)-1)$ and go to the origin, but we consider $k = c\beta = b  $ different boundaries, given by

\begin{equation*}
	L_1: y < \frac{a}{c}x + \frac{b}{c}, \qquad
	L_2: y < \frac{a}{c}x + \frac{b-1}{c}, \text{\qquad \dots, \qquad }
	L_b: y < \frac{a}{c}x + \frac{1}{c}.	
\end{equation*}

\begin{example}
	Returning to Knuth's model we have $y < \frac{2}{5}x + \frac{2}{5}$. Thus, the explicit values are $a=b=2$ and $c=5$ and the assumptions of Theorem~\ref{theo:nakaRational} (Lattice path integral and explicit binomial expression) are  satisfied, as $\gcd(a,c)=1$. The Euclidean algorithm gives $r_a=-4$ and $r_c=2$. From Lemma~\ref{lem:rationalslope} on the possible starting point on the boundary, we deduce the possible integer coordinates on the barrier~$L$:
	\begin{align*}
		q_1(s) &= 5s + 4, &
		q_2(s) &= 2s + 2,
	\end{align*}
	for $s \geq 0$ which represent the starting points of the walks. Finally, Theorem~\ref{theo:nakaRational} directly gives the solution
	\begin{align*}
		\int_{0}^{1} |W_t| \, dt &= \frac{2/5}{7s+6} \dbinom{7s+6}{2s+2}.
	\end{align*}
	This value can be equivalently interpreted as the number of walks in $k=2$ models starting from $(5s+4,2s+1)$ and moving to the origin below the barriers
	\begin{align*}
		L_1: y &< \frac{2}{5}x + \frac{2}{5}, &
		L_2: y &< \frac{2}{5}x + \frac{1}{5}.
	\end{align*}
	This is exactly Knuth's problem, where his index $t=s+1$.
\end{example}

Formula~\eqref{eq:nakarational} directly yields nice lattice path identities in the manner of Knuth's problem. Yet, there are even more formulae of this type
that we will reveal in the next section. 
But let us start with an interesting (everyday) problem first.

\pagebreak
\section{Duchon's club and other slopes}
\label{sec:duchon}
\vspace{-2mm}
\subsection{Duchon's club: slope 2/3 and slope 3/2}

A Duchon walk is a Dyck path starting from $(0,0)$, with East and North steps,
and ending on the line $y=\frac{2}{3}x$ (see Figure~\ref{Duchonsclub}). This model 
was analysed by Duchon~\cite{Duchon00},
and further investigated by Banderier and Flajolet~\cite{BaFl02}, 
who called it the ``Duchon's club'' model,
as it can be seen as the number of possible ``histories'' of couples entering a club in the evening\footnote{Caveat: There are no real life facts/anecdotes hidden behind this pun!}, and exiting in groups of $3$.
What is the number of possible histories (knowing the club is closing empty)?
Well, this is exactly the number $E_n$ of excursions with $n$ steps $+2, -3$,
 or (by reversal of the time) the number of excursions with $n$ steps $-2, +3$.
 This gives the sequence $(E_{5n})_{n\in \mathbb{N}}=(1,2,23,377,7229,151491,3361598,\dots)$ (OEIS A060941).
In fact, these numbers $E_n$ appeared already in the article by Bizley~\cite{Bizley54} 
(who gave some binomial formulae, as we explained in Section~\ref{sec:imaginary}).
Duchon's club model should then be the Bizley--Duchon's club model;
Stigler's law of eponymy strikes again.

One open problem in the article~\cite{Duchon00} was the following one:
``The mean area is asymptotic to $K n^{3/2}$, but the constant $K$ can
only be approximated to $3.43$''.  Our method allows to identify this mysterious constant:

\vspace{-1mm}
\begin{theo}[Area below Duchon lattice paths]
The average area below Duchon excursions of length $n$ (lattice paths from $0$ to $0$, which jumps $-2$ and $+3$) 
is $$A_n \sim K n^{3/2} \text{ where } K=\sqrt{15 \pi}/2 \approx 3.432342124\,.$$
\end{theo}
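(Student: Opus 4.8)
The plan is to apply the bivariate kernel method with an extra variable $q$ marking the area, exactly in the spirit of Banderier--Gittenberger~\cite{BaGi06} (or Banderier--Flajolet for the unweighted case), to the Duchon's club model with jumps $-2$ and $+3$. First I would set $P(u)=u^{-2}+u^3$, and introduce the area-generating function $F(z,u,q)=\sum f_{n,k,m} z^n u^k q^m$, where $m$ counts the area under the path. The step-by-step construction now carries a factor $q^{\text{(current altitude)}}$ at each horizontal move, so the functional equation reads $(1-zq^{?}P(uq))F(z,u,q)=\ldots$ after the usual substitution $u\mapsto uq$ that transfers the altitude-weighting into the catalytic variable; the upshot is a $q$-deformed kernel equation whose small roots $u_1(z,q),u_2(z,q)$ specialise at $q=1$ to the roots studied in Section~\ref{sec:funceq}. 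The excursion generating function $E(z,q)$ (paths from $0$ to $0$) is then extracted by the kernel method as a rational expression in these small roots, and the total area is the derivative $\left.\partial_q E(z,q)\right|_{q=1}$, while the number of excursions is $E(z,1)$; the average area of length-$n$ excursions is $[z^n]\partial_q E(z,q)|_{q=1}\big/ [z^n]E(z,1)$.

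The key analytic steps, in order: (1) derive the $q$-functional equation and identify the two small branches $u_1,u_2$ of the deformed kernel, checking that at $q=1$ they reduce to the branches with the square-root singularity at $\rho=1/P(\tau)$, where $\tau$ is the positive root of $P'(\tau)$ — here $\tau=\sqrt[5]{2/3}$ and $\rho=\sqrt[5]{2^2 3^3}/5$; (2) write $E(z,q)$ via the kernel method and differentiate at $q=1$, obtaining $\partial_q E(z,1)$ as an explicit algebraic function of $z$ built from $u_1(z,1),u_2(z,1)$ and their first $q$-derivatives; (3) locate the dominant singularity of $\partial_q E(z,1)$ — it will again sit at $\rho$ (and its conjugates under the period-$5$ rotation, since $P$ has periodic support with $p=5$), but with a higher-order singular exponent than $E(z,1)$ because differentiating in $q$ the area worsens the singularity; (4) perform singularity analysis via the transfer theorem (Theorem~VI.3 of~\cite{flaj09}) at each of the five dominant singularities, using the rotation law and the ``periodic rule of thumb'' (Proposition~\ref{prop:periodicasympt}) to combine their contributions, extract the leading term of $[z^n]\partial_q E(z,1)$, and divide by the known asymptotics of $E_n=[z^n]E(z,1)\sim c\,\rho^{-n} n^{-3/2}$ (Bizley's constant); and (5) simplify the resulting algebraic constant down to $K=\sqrt{15\pi}/2$.

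The main obstacle I expect is step (2)--(3): controlling the singular expansion of $\partial_q E(z,q)|_{q=1}$. Differentiating in $q$ produces terms involving $\partial_q u_i(z,1)$, and one must show that these have a singular expansion of the predicted type at $z=\rho$ — in practice $u_i(z,q)$ has a square-root singularity in $z$ whose location and coefficients depend analytically on $q$, so $\partial_q u_i$ acquires an extra half-power, giving $\partial_q E(z,1)$ a singularity of type $(1-z/\rho)^{-1/2}$ or $(1-z/\rho)^{0}\log$-type rather than $(1-z/\rho)^{1/2}$, which is precisely what yields the $n^{3/2}$ growth of the area after normalising by $n^{-3/2}$ from $E_n$. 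Making the bookkeeping of these expansions rigorous, and then carrying the algebraic constant through the resultant computations (analogous to those displayed for slope $2/5$) until the radicals collapse to $\sqrt{15\pi}/2$, is where the real work lies; the rest is a routine, if lengthy, application of the machinery already set up in the previous sections. An independent sanity check is available: one can also obtain $E(z,q)$ and its $q$-derivative through the Bizley--Grossman formula (Section~\ref{sec:imaginary}) refined with an area statistic, and verify that both routes give the same constant $K$.
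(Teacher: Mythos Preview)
Your proposal is correct and follows essentially the same route as the paper's (very terse) proof: use the Banderier--Gittenberger~\cite{BaGi06} area machinery to express the area generating function in terms of the two small roots of $1-z(u^{-2}+u^3)=0$, then apply the rotation law for the $5$-periodic kernel (your step~(4), the paper's Proposition~\ref{prop:periodicasympt}) to perform singularity analysis and extract the constant $K=\sqrt{15\pi}/2$. The paper compresses all of this into two sentences and a reference to~\cite{BaGi06}; your plan simply unpacks those sentences, and your computed values of $\tau$ and $\rho$ are correct.
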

\begin{proof}
The approach of~\cite{BaGi06} gives an expression for $A(z)=\sum A_n z^n$ in terms of the two small roots $u_1(z)$ and $u_2(z)$ of 
$1-z(1/u^2+u^3)=0$.
Then, using the rotation law gives the singular behaviour of $A(z)$, and therefore the asymptotics of $A_n$ with the explicit constant $K$.
\end{proof}

\vspace{-9mm}
\begin{figure}[!hb]
	\centering
	\scalebox{0.9}{
	\subfloat[North-East model: Dyck paths below the line of slope 2/3]{
		\includegraphics[width=0.37\textwidth]{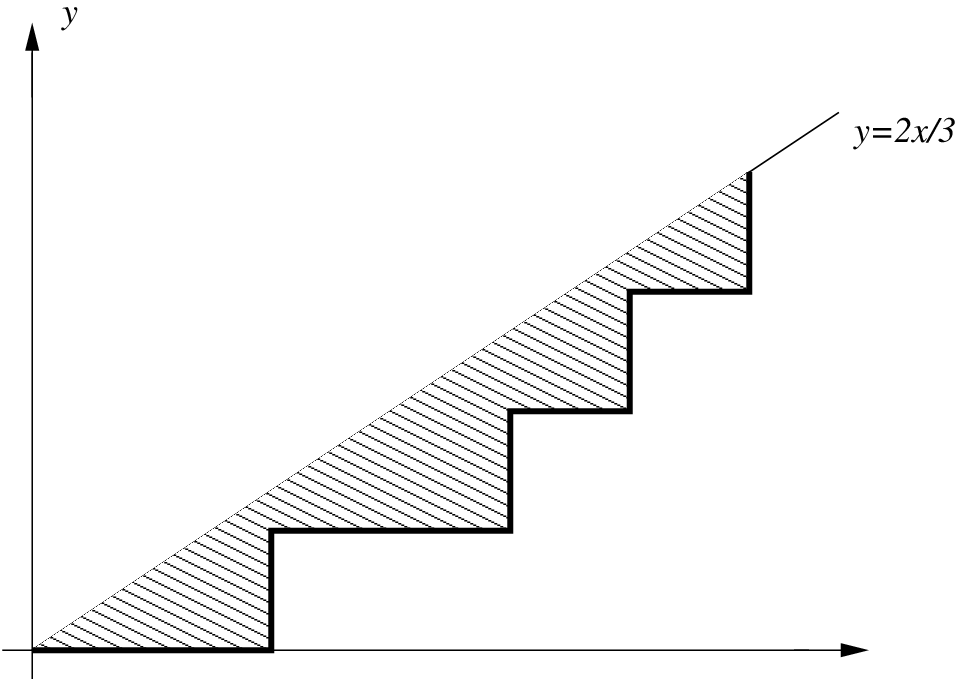}
	}
	\qquad
	\subfloat[Banderier--Flajolet model: excursions with $+2$ and $-3$ jumps]{
		\includegraphics[width=0.49\textwidth]{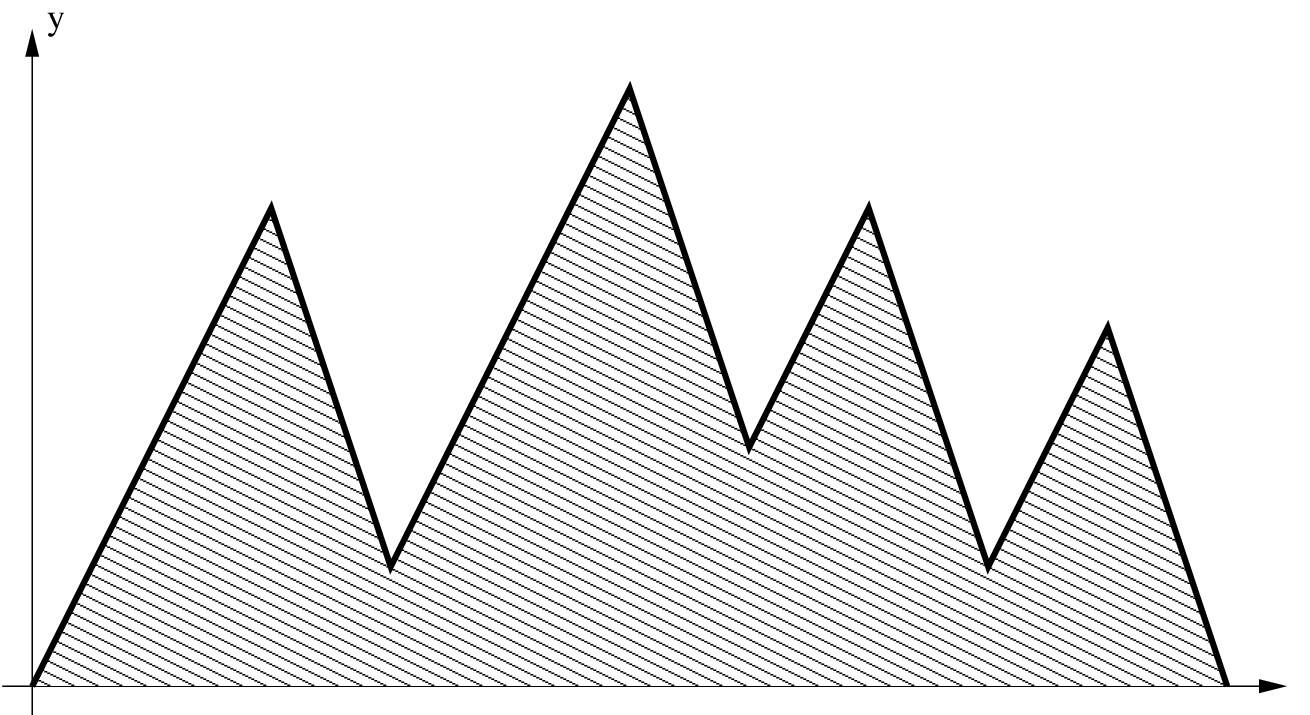}
	}}
	\caption{Dyck paths below the line of slope 2/3 and Duchon's club histories (i.e.,~excursions with jumps $+2, -3$) are in bijection.
Duchon conjectured that the average area  (in gray) after $n$ jumps is asymptotically equal to $K n^{3/2}$; 
our approach shows that $K=\sqrt{15 \pi}/2$. 
 }\label{Duchonsclub}
\end{figure}

\pagebreak
\subsection{Arbitrary rational slope}

The closed-form for the coefficient (Theorem~\ref{theo:closedformcoeff}) generalizes to arbitrary rational slope:

\begin{theo}[General closed-forms for any rational slope]
	Let $a,b,c$ be integers such that $\gcd(a,c)|b$. Let $A_s(k)$ be the number of Dyck walks below the line of slope $y=\frac{a}{c} x + \frac{k}{c}$, ending at $(x_s,y_s)$ given by
	\begin{align*}
		x_s &= c s - r_a, &
		y_s &= a s + r_c - 1,
	\end{align*}
	where $r_a$ and $r_c$ are integers such that $r_a a + r_c c = b$. These numbers are non-negative for $s \geq S_0 := \max \left( \lceil r_a/c \rceil, \lceil -r_c /a \rceil \right)$. Then it holds that
	\begin{align*}
		\sum_{k=1}^{b} A_s(k) &= \frac{b}{(a+c)s + (r_c - r_a)} \dbinom{(a+c)s + (r_c - r_a)}{a s +  r_c}.
	\end{align*}
\end{theo}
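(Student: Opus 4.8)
The plan is to mirror, in this generality, the chain of arguments used for slope $2/5$: the bijection of Proposition~\ref{prop:bijgen} (composed with a time reversal) reduces the claim to a coefficient extraction in the Banderier--Flajolet model, the kernel method expresses the relevant generating functions through the small roots of the kernel, and a Lagrange--B\"urmann computation turns the result into the announced binomial. First I would note that the walks counted by $A_s(k)$ end at $(x_s,y_s)=(q_1(s),q_2(s)-1)$ with $q_1(s)=cs-r_a$, $q_2(s)=as+r_c$, and that for each of the $b$ barriers $y<\frac{a}{c}x+\frac{k}{c}$ the affine map $(x,y)\mapsto(x+y,\,ax-cy+k)$ of Proposition~\ref{prop:bijgen} (still a bijection even when $\gcd(a,k,c)>1$) sends these to directed lattice paths of length $M-1$, where $M:=(a+c)s+(r_c-r_a)=q_1(s)+q_2(s)$, starting at altitude $c-b+k-1$, ending at altitude $k-1$, staying nonnegative, with step polynomial $P(u)=u^{-a}+u^{c}$ (after reversing time as in Section~\ref{sec:funceq}). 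Writing $G_{\ell\to m}(z)$ for the generating function of such paths, this gives $\sum_{k=1}^{b}A_s(k)=[z^{M-1}]\sum_{j=0}^{b-1}G_{(c-b+j)\to j}(z)$ (terms with $c-b+j<0$ vanishing automatically).

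Next I would run the kernel method of Section~\ref{sec:funceq} in this setting: the step-by-step recursion yields, for a fixed starting altitude $\ell$, the functional equation
\[
\bigl(1-zP(u)\bigr)F^{(\ell)}(z,u)=u^{\ell}-z\sum_{i=0}^{a-1}u^{\,i-a}\,G_{\ell\to i}(z),
\]
and $K(z,u)=1-zP(u)$ has exactly $a$ small roots $u_1(z),\dots,u_a(z)$ (and $c$ large ones). Substituting the $a$ small roots turns this into a Vandermonde system for $G_{\ell\to 0},\dots,G_{\ell\to a-1}$; solving by Cramer's rule and dividing out the Vandermonde determinant expresses each $G_{\ell\to m}(z)$ as a polynomial in $u_1,\dots,u_a$ built from complete homogeneous symmetric functions $h_j(u_1,\dots,u_a)$, which also covers the cases $m\ge a$ via $F^{(\ell)}=\bigl(u^{\ell}-z\sum_i u^{i-a}G_{\ell\to i}\bigr)/(1-zP(u))$. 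The heart of the proof is then to show that the sum over the $b$ consecutive models telescopes to one explicit symmetric function of the small roots, generalizing identity~\eqref{eq:AplusB} (which, for $a=b=2$, is $u_1^{5}+u_2^{5}$): the alternant quotients collapse because the starting and ending altitudes are shifted together, and one gets $z\sum_{j=0}^{b-1}G_{(c-b+j)\to j}(z)=\Psi\bigl(u_1(z),\dots,u_a(z)\bigr)$, hence $\sum_{k=1}^{b}A_s(k)=[z^{M}]\,\Psi\bigl(u_1(z),\dots,u_a(z)\bigr)$.

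For the last step I would extract this coefficient by Lagrange--B\"urmann inversion, as in the proof of Theorem~\ref{theo:closedformcoeff}. Since the $u_i$ are the roots inside a small circle of $\widetilde K(z,u):=u^{a}K(z,u)=u^{a}-z-zu^{a+c}$, the symmetric function $\Psi$ can be written as a contour integral $\frac{1}{2\pi i}\oint \psi(u)\,\dfrac{\partial_u\widetilde K(z,u)}{\widetilde K(z,u)}\,du$ around that circle; expanding $\widetilde K^{-1}=u^{-a}\sum_{m\ge 0}z^{m}P(u)^{m}$, integrating by parts and collecting the coefficient of $z^{M}$ reduces everything to extracting one Laurent coefficient of a power of $P(u)=u^{-a}+u^{c}$. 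The binomial theorem then yields $\binom{M}{as+r_c}$, the exponent being an integer precisely because $r_aa+r_cc=b$, while the prefactor produced by the telescoping and the integration by parts is exactly $b/M$; this gives $\sum_{k=1}^{b}A_s(k)=\frac{b}{M}\binom{M}{as+r_c}$.

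The step I expect to be the main obstacle is the middle one: pinning down the symmetric function $\Psi$ and proving that the alternant quotients telescope, while keeping track of the periodic support — there are $a$ small roots linked by the rotation law of Lemma~\ref{lem:u1u2}, so $\Psi$ must be genuinely symmetric in all of them and the final residue computation must account for every small root. A shortcut that sidesteps this is available: by the geometric bijection of Section~\ref{sec:nakatoku} together with~\eqref{eq:nakaDiscrete}, $\sum_{k=1}^{b}A_s(k)=\sum_{t\in T}|W_t|=c\int_0^1|W_t|\,dt$, and Theorem~\ref{theo:nakaRational} evaluates the right-hand side to $c\cdot\dfrac{b/c}{M}\binom{M}{q_2(s)}=\dfrac{b}{M}\binom{M}{as+r_c}$; the kernel-method argument above is what makes the result self-contained and independent of the cycle-lemma proof of~\cite{Nakamigawa12}.
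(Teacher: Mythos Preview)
Your ``shortcut'' in the last paragraph \emph{is} the paper's proof, in its entirety: the authors simply invoke Theorem~\ref{theo:nakaRational} together with the geometric bijection~\eqref{eq:nakaDiscrete} and are done in one line. The elaborate kernel-method route you spend most of the proposal on is not used for this statement.

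That route is, however, exactly what the paper deploys for the \emph{next} result, Theorem~\ref{theo:genclosedform}, but only under the extra hypothesis that $b$ is a multiple of $a$. The reason is precisely the obstacle you flag: the collapse to a single symmetric function $\Psi(u_1,\dots,u_a)$ works there because one is summing over a full block of $a$ consecutive barriers, so the Schur polynomials from Lemma~\ref{lem:closedformgfgeneral} combine---via the Schur-to-power-sum identity \cite[Theorem~7.17.1]{stan99}---into the pure power sum $\sum_{i=1}^{a} u_i^{h}$, which is what makes the Lagrange step go through. For a general $b$ not divisible by $a$, the sum $\sum_{k=1}^{b}$ does not span whole blocks of size $a$, the alternants do not telescope to a power sum, and there is no evident candidate for your $\Psi$ that would yield a single binomial after Lagrange inversion. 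So the ``main obstacle'' you identify is real; the paper does not try to overcome it and relies on the cycle-lemma result of~\cite{Nakamigawa12} instead. Your proposal is thus correct once you keep only the final paragraph, but the self-contained kernel-method argument you outline for general $b$ is not known to work and is not what the paper does.
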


\begin{proof}
	This result is a direct consequence of Theorem~\ref{theo:nakaRational} (lattice path integral and explicit binomial expression) and the geometric bijection~\eqref{eq:nakaDiscrete}.
\end{proof}

The enumeration of lattice paths below the line $y=\frac{a}{c} x + \frac{b}{c}$ simplifies even more in the case~$a=b$. 
Additionally, we are able to extend the nice counting formula in terms of binomial coefficients.
In order to get these nice formulae, let us first state
what becomes the equivalent of Theorem~\ref{theo:closedformgf} (Closed-form for the generating function) in the case of any rational slope. 

\begin{lemma}[Schur polynomial closed-form for meanders ending at a given altitude]
	\label{lem:closedformgfgeneral}
	Let us consider walks in $\N^2$ with jumps $-a$ and $+c$ starting at altitude $h \geq a$. Let $u_1(z), \ldots,u_a(z)$ be the small roots of the kernel equation $1-zP(u)=0$, with $P(u) = u^{-a} + u^c$. Let $F_{0}(z), \ldots, F_{a-1}(z)$ be the generating functions of meanders ending at altitude $0, \ldots, a-1$, respectively. They are given by
\begin{align}
	\label{eq:Figeneral}
	F_i (z) &= \frac{(-1)^{a-i-1}}{z} s_{(h+1,1^{a-i-1},0^{i})} \left( u_1(z), \ldots, u_a(z) \right),
\end{align} 
	where $s_{\lambda}(x_1,\ldots,x_a)$ is a Schur polynomial in $a$ variables, and $\lambda = (\lambda_1,\ldots,\lambda_a)$ is an integer partition, i.e.,~$\lambda_1 \geq \lambda_2 \geq \cdots \geq \lambda_a \geq 0$. The notation $1^{s}$ denotes $s$ repetitions of $1$.
\end{lemma}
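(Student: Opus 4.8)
The plan is to replay the three steps of Section~\ref{sec:funceq} for the general jump polynomial $P(u)=u^{-a}+u^{c}$, and then to recognise the resulting quotient of determinants as a Schur polynomial through its bialternant (Weyl) formula. First I would set up the functional equation: writing $f_n(u)=\sum_{k\ge0}f_{n,k}u^{k}$ for the polynomial listing all meanders of length $n$ according to their final altitude, the step-by-step recursion $f_{n+1}(u)=\{u^{\ge0\}}\bigl[P(u)f_n(u)\bigr]$ deletes exactly the monomials $f_{n,i}u^{\,i-a}$ for $i=0,\dots,a-1$, since these are the only negative powers of $u$ produced by $u^{-a}f_n(u)$. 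Multiplying by $z^{n+1}$, summing over $n\ge0$ and using $f_0(u)=u^{h}$ yields
\[
\bigl(1-zP(u)\bigr)F(z,u)=u^{h}-z\sum_{i=0}^{a-1}u^{\,i-a}F_i(z).
\]

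Next I would apply the kernel method exactly as in the slope $2/5$ case: $F(z,u)$ is analytic at $u=u_j(z)$ for each of the $a$ small roots (because $u_j(0)=0$), so substituting $u=u_j(z)$ annihilates the left-hand side, and multiplying the resulting relation by $u_j^{a}$ gives the linear system
\[
u_j^{\,h+a}=z\sum_{i=0}^{a-1}u_j^{\,i}\,F_i(z),\qquad j=1,\dots,a .
\]
Its matrix $M=(u_j^{\,i})_{1\le j\le a,\ 0\le i\le a-1}$ is a Vandermonde matrix in the (generically distinct) small roots, hence non-singular, and Cramer's rule gives $zF_i(z)=\det M^{(i)}/\det M$, where $M^{(i)}$ is $M$ with its column of exponent $i$ replaced by $(u_j^{\,h+a})_{1\le j\le a}$.

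The last step is the identification with a Schur polynomial. Recall the bialternant identity $s_\lambda(x_1,\dots,x_a)=a_{\lambda+\delta}(x)/a_{\delta}(x)$ with $\delta=(a-1,\dots,1,0)$ and $a_\gamma(x)=\det\bigl(x_j^{\gamma_k}\bigr)$. The denominator $\det M$ equals $a_\delta(u)$ up to the sign of the reversal of $a$ columns, namely $(-1)^{\binom a2}$. In $\det M^{(i)}$ the multiset of column exponents is $\{0,1,\dots,a-1\}\setminus\{i\}\cup\{h+a\}$; since $h\ge a$ the entry $h+a$ is the largest, so bringing it to the front (an $(i+1)$-cycle, sign $(-1)^{i}$) and then reversing the remaining $a-1$ columns (sign $(-1)^{\binom{a-1}2}$) sorts the exponents decreasingly into the vector $\lambda+\delta$ with $\lambda=(h+1,1^{a-i-1},0^{i})$, as one checks coordinate by coordinate. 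Collecting the three signs gives the overall factor $(-1)^{\,i+\binom{a-1}2+\binom a2}=(-1)^{a-i-1}$, which is exactly \eqref{eq:Figeneral}. As a consistency check, taking $a=2$ with $(h,i)=(3,0)$ and $(4,1)$ recovers $F_0=-u_1u_2(u_1^4-u_2^4)/\bigl(z(u_1-u_2)\bigr)$ and $G_1=(u_1^6-u_2^6)/\bigl(z(u_1-u_2)\bigr)$ from Theorem~\ref{theo:closedformgf}.

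Everything here is routine apart from the sign bookkeeping of Step~3: one must carefully relate the ``increasing-exponent'' determinants produced by Cramer's rule to the ``decreasing-exponent'' alternants in the Schur-function definition, and verify that the sorted exponent vector is indeed $(h+1,1^{a-i-1},0^{i})+\delta$; this is the only delicate point. Once the identification is made, the announced link with complete homogeneous symmetric polynomials is immediate from the Jacobi--Trudi formula applied to the hook-like shape $\lambda$ (for instance $s_{(n)}=h_n$, so in the boundary case $a-i-1=0$ one gets $zF_i=h_{h+1}(u_1,\dots,u_a)$).
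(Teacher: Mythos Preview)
Your proposal is correct and follows exactly the approach sketched in the paper: derive the functional equation, substitute the $a$ small roots to obtain a Vandermonde system, solve by Cramer's rule, and recognise the resulting ratio of alternants as the bialternant expression of a Schur polynomial. The paper's own proof is a terse two-sentence sketch that stops at ``rearranging the determinants, one uncovers the defining expressions for the claimed Schur polynomials''; you have supplied precisely the sign bookkeeping and the verification that the sorted exponent vector equals $(h+1,1^{a-i-1},0^{i})+\delta$ that the paper omits, and your consistency check against Theorem~\ref{theo:closedformgf} confirms the computation.
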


\begin{proof}
	Similar to \eqref{eq:funceq} for the given step set the functional equation is given by
	\begin{align*}
		(1 - z P(u)) F(z,u) = f_0(u) - z u^{-a} F_0(z) - z u^{-a+1} F_1(z) - \ldots - z u^{-1}F_{a-1}(z).
	\end{align*}
	Applying the kernel method, one may insert the $a$ small branches into this equation. Then one gets $a$ independent linear equations for the $a$ unknowns $F_0(z),\ldots, F_{a-1}(z)$. Expressing the solutions by Cramer's rule and rearranging the determinants, one uncovers the defining expressions for the claimed Schur polynomials (see e.g.~\cite[Chapter 7.15]{stan99} for an introduction to the relevant notions and notations).
\end{proof}

\begin{example}
	Let us consider the previous lemma for $a=3$. We get the linear system 
	\begin{align*}
		z
		\begin{pmatrix}
			1 & u_1(z) & u_1(z)^2 \\
			1 & u_2(z) & u_2(z)^2 \\
			1 & u_3(z) & u_3(z)^2 
		\end{pmatrix}
		\begin{pmatrix}
			F_0(z) \\
			F_1(z) \\
			F_2(z) 
		\end{pmatrix}
		=
		\begin{pmatrix}
			u_1(z)^{h+3} \\
			u_2(z)^{h+3} \\
			u_3(z)^{h+3} 
		\end{pmatrix}\,.
	\end{align*}
	Solving it with Cramer's rule and rearranging the determinants we get 
	\begin{align*}
		F_0(z) &= \frac{s_{(h+1,1,1)}(u_1,u_2,u_3)}{z}, &
		F_1(z) &= -\frac{s_{(h+1,1,0)}(u_1,u_2,u_3)}{z}, &
		F_2(z) &= \frac{s_{(h+1,0,0)}(u_1,u_2,u_3)}{z},
	\end{align*}
	by the definition of Schur polynomials.
\end{example}

Now, we are able to extend the results of the closed-form for the sum of coefficients (Theorem~\ref{theo:closedformcoeff}) even further.  At its heart lies the nice expression~\eqref{eq:AplusB}: $u_1^5+u_2^5$. 
We will see that such a phenomenon holds in full generality, 
involving a sum of $u_i^h$.

\begin{theo}[General closed-forms for lattice paths below a rational slope $y=\frac{a}{c} x + \frac{b}{c}$, with $b$ a multiple of $a$]
	\label{theo:genclosedform}
	Let $a,c$ be integers such that $a<c$, and let $b$ be a multiple of $a$. 
	Let $A_s(k)$ be the number of Dyck walks below the line of slope $y=\frac{a}{c} x + \frac{k}{c}$, $k \geq 1$, ending at $(x_s,y_s)$ given by
	\begin{align*}
		x_s &= c s - 1, &
		y_s &= a s - 1.
	\end{align*}
	Then it holds for $s \geq 1$ and $\ell \in \N$ such that $(\ell + 1) a < c$ that
	\begin{align*}
		\sum_{k=\ell a+1}^{(\ell+1)a} A_s(k) &= \frac{\ell a+c}{(a+c)s+\ell-1} \dbinom{(a+c)s + \ell - 1}{as-1}.
	\end{align*}
\end{theo}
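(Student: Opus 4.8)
The plan is to mimic the two-step strategy behind the ``ugly\,+\,ugly\,=\,nice'' miracle of Theorem~\ref{theo:closedformcoeff}: first rewrite the window sum as a single coefficient of a power sum of the small roots of the kernel, and then evaluate that coefficient by Lagrange inversion.

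\textbf{Step 1 (reduction to a power sum).} Via the bijection of Proposition~\ref{prop:bijgen} applied with $\alpha=a/c$, $\beta=k/c$, together with time reversal and a downward shift by one unit, each $A_s(k)$ equals $[z^{(a+c)s-2}]$ of the generating function of lattice paths with steps $-a,+c$ starting at altitude $c-a+k-1$, ending at altitude $k-1$, and staying weakly above $0$. When $k$ lies in the first window ($\ell=0$, so the ending altitude is in $\{0,\dots,a-1\}$), Lemma~\ref{lem:closedformgfgeneral} expresses these generating functions as $\tfrac1z$ times hook Schur polynomials $s_{(h+1,1^{a-1-j},0^{j})}(u_1,\dots,u_a)$ in the $a$ small roots of $1-zP(u)=0$, $P(u)=u^{-a}+u^c$; summing over the window and invoking the Frobenius identity $p_m=\sum_{i\ge0}(-1)^i s_{(m-i,1^{i})}$ --- in which, in $a$ variables, every term with $i\ge a$ vanishes --- telescopes the alternating Schur sum to a single power sum. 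For a general window ($\ell\ge1$) the ending altitude $k-1$ may exceed $a-1$, so Lemma~\ref{lem:closedformgfgeneral} does not apply directly; here the hypothesis $(\ell+1)a<c$ is essential, as it forces the ending altitude below $c$, hence the last step of any admissible path is a $-a$ step, and the generating functions for ending altitude $\ge a$ are obtained from $F_0,\dots,F_{a-1}$ by an explicit ($z$-adically convergent) recurrence; feeding this into the window sum and again using the Frobenius identity, with $a\mid b$ keeping the relevant partitions aligned, one arrives at the master identity
\[
\sum_{k=\ell a+1}^{(\ell+1)a} A_s(k)=[z^{(a+c)s+\ell-1}]\,\bigl(u_1(z)^{\ell a+c}+\cdots+u_a(z)^{\ell a+c}\bigr),
\]
the general form of the ``$u_1^5+u_2^5$'' phenomenon of \eqref{eq:AplusB}.

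\textbf{Step 2 (evaluating the coefficient).} Put $p_m(z):=\sum_{i=1}^a u_i(z)^m$. The logarithmic-derivative (residue) formula for the sum of the roots of the kernel enclosed by a small contour gives
\[
p_m(z)=\frac{1}{2\pi i}\oint u^m\,\frac{-zP'(u)}{1-zP(u)}\,du,
\]
and extracting the coefficient of $z^N$ and integrating by parts yields $[z^N]p_m(z)=\tfrac{m}{N}\,[u^{-m}]P(u)^N=\tfrac{m}{N}\binom{N}{(cN+m)/(a+c)}$ whenever $(a+c)\mid(cN+m)$, since $P(u)^N=(u^{-a}+u^c)^N=\sum_\nu\binom{N}{\nu}u^{cN-(a+c)\nu}$. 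Substituting $m=\ell a+c$ and $N=(a+c)s+\ell-1$ one checks $cN+m=(a+c)(cs+\ell)$, so
\[
[z^N]p_m(z)=\frac{\ell a+c}{(a+c)s+\ell-1}\binom{(a+c)s+\ell-1}{cs+\ell}=\frac{\ell a+c}{(a+c)s+\ell-1}\binom{(a+c)s+\ell-1}{as-1},
\]
using $\binom{N}{cs+\ell}=\binom{N}{N-cs-\ell}$ with $N-cs-\ell=as-1$. Together with the master identity this is exactly the claim.

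The routine parts are Step~2 (a Lagrange-inversion computation entirely parallel to the one behind Bizley's formula) and the $\ell=0$ instance of Step~1, which is just Lemma~\ref{lem:closedformgfgeneral} combined with the Frobenius identity and which already recovers Theorem~\ref{theo:closedformcoeff} for $a=2,c=5$ (and the earlier ``General closed-forms for any rational slope'' theorem when $b=a$). The main obstacle is the $\ell\ge1$ case of Step~1: one must extend the Schur-polynomial closed-form to meanders ending at altitude $\ge a$ and check that, after summing over a window, the large-root contributions cancel and the alternating Schur sum still collapses to a single power sum with the precise exponent shift $+\ell$ --- and it is exactly here that the two hypotheses $(\ell+1)a<c$ and $a\mid b$ are used.
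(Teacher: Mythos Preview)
Your two-step plan is sound and lands on the same master identity \eqref{eq:AplusBgeneral} as the paper, but the route you sketch for the $\ell\ge 1$ case of Step~1 is more tortuous than necessary, and your worries about ``large-root contributions'' and ``extending the Schur-polynomial closed-form to ending altitude $\ge a$'' are misplaced. The paper handles $\ell\ge 1$ by a one-line combinatorial observation: since $(\ell{+}1)a<c$ forces the first $\lfloor c/a\rfloor\ge \ell$ steps of any admissible North--East walk to be East steps, one may move $\ell$ of these East steps from the front to the end (Figure~\ref{fig:shiftingfirststeps}); this replaces the barrier $y<\frac{a}{c}x+\frac{\ell a+j}{c}$ by $y<\frac{a}{c}x+\frac{j}{c}$, shifts the endpoint to $(cs+\ell-1,as-1)$, and increases the length by~$\ell$. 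After the bijection and time reversal one is back to ending altitudes $0,\dots,a-1$, so Lemma~\ref{lem:closedformgfgeneral} applies verbatim and Stanley's power-sum/hook-Schur identity (\cite[Thm.~7.17.1]{stan99}) collapses the alternating sum to $\sum_i u_i^{\ell a+c}$, now at $z^{(a+c)s+\ell-1}$. Your recurrence is exactly the generating-function shadow of this move: for $0\le j<c$ the last step of a $\{-a,+c\}$-meander ending at $j$ is forced to be $-a$, whence $F_j(z)=zF_{j+a}(z)$, and iterating gives $F_{\ell a+i}(z)=z^{-\ell}F_i(z)$ with no large roots ever entering the picture; so the ``obstacle'' dissolves once you write the recurrence down. (Incidentally, the hypothesis $a\mid b$ is not used in the argument --- it only ensures that the windows $[\ell a{+}1,(\ell{+}1)a]$ tile $[1,b]$.)

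For Step~2 the paper proceeds differently: it writes $u_i(z)=U(\omega^{i-1}z^{1/a})$ where $U(x)=x(1+U^{a+c})^{1/a}$, so that $\sum_i u_i^h=a\sum_n U_{an}z^n$, and then applies Lagrange inversion to $U$. Your Newton--residue computation $[z^N]p_m=\frac{m}{N}[u^{-m}]P(u)^N$ is a legitimate and arguably cleaner alternative that avoids the fractional-power scheme; both give the same binomial.
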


\begin{proof}
	Consider walks starting at $(0,0)$, ending at $(x_s,y_s)$, and staying below the line $\frac{a}{c} x + \frac{1}{c}$. These are counted by $A_s(1)$. 
	Let us transform such walks by adding a new horizontal jump at the end. 
	Note that the first $\lfloor \frac{c}{a} \rfloor$ jumps must be horizontal jumps. 
	Thus, we can interpret this walk as one starting from $(1,0)$, ending at $(x_s+1,y_s)$ staying below the given boundary. But as a horizontal jump increases the distance to the boundary by $\frac{a}{c}$ this is equivalent to counting walks starting at $(0,0)$, ending at $(x_s,y_s)$, and staying below the boundary $\frac{a}{c} x + \frac{a+1}{c}$. This process is shown in Figure~\ref{fig:shiftingfirststeps}. Such walks are counted by $A_s(a+1)$.	
	
	\begin{figure}[h!t]
		\centering
		\includegraphics[height=50mm]{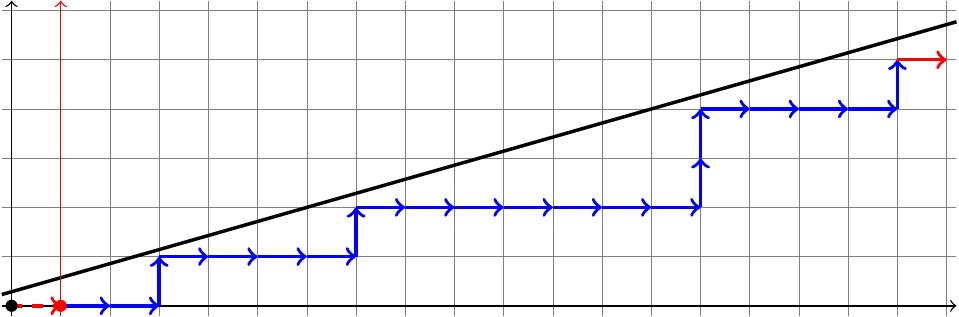}
		\caption{Transforming walks by moving the first step to the end of the walk. The red dot at $(1,0)$ and the red $y$-axis mark the new origin.}
		\label{fig:shiftingfirststeps}
	\end{figure}

	Thus, the sequence $A_s(1), A_s(a+1), A_s(2a+1), \ldots$ can be interpreted as counting walks staying always below the boundary $\frac{a}{c} x + \frac{1}{c}$, starting at $(0,0)$, and ending at $(x_s,y_s), (x_s+1,y_s), (x_s+2,y_s), \ldots$, respectively. In particular, for $\ell \geq 0$ we define these new ending points as $(\tilde{x}_s,\tilde{y}_s)$ given by
	\begin{align*}
		\tilde{x}_s &= x_s + \ell = c s + \ell - 1,&
		\tilde{y}_s & = y_s = a s - 1.
	\end{align*}
	Analogously, the same holds for $A_s(2), \ldots, A_s(a-1)$. 
	
		\smallskip
	For the start, we then follow the line of thought from Theorem~\ref{theo:closedformcoeff} (Closed-form for the sum of coefficients).
	Let us first derive the respective generating functions. 
	Therefore, we apply the bijection from Proposition~\ref{prop:bijgen}, reverse the time, and allow to touch $y=0$. Then the sum $\sum_{k=\ell a+1}^{(\ell+1)a} A_s(k)$ can be interpreted as walks of length $\tilde{x}_s + \tilde{y}_s = (a+c)s+\ell-2$, starting at altitude $a\tilde{x}_s- c \tilde{y}_s + i = \ell a + (c-a) + i$, and ending at altitude $i$ for $i=0,\ldots,a-1$. 	To simplify notation, let us introduce the constant 
	$$h:=\ell a + c\,.$$ 
 Then, walks end at $h - a + i$.
 	Therefore, we are now able to apply Lemma~\ref{lem:closedformgfgeneral} (Schur polynomial closed-form for meanders ending at a given altitude). Additionally, by reversing the summation order we get:
	\begin{align}
		\sum_{k=\ell a+1}^{(\ell+1)a} A_s(k) &= 
			[z^{(a+c)s+\ell-2}] \sum_{j=0}^{a-1} 
				\frac{(-1)^{j}}{z} s_{(h - j,1^{j},0^{a-j-1})} \left( u_1(z), \ldots, u_a(z) \right) \notag \\
				&= [z^{(a+c)s+\ell-1}] \left( \sum_{i=1}^a u_i(z)^{h} \right). 
				\label{eq:AplusBgeneral}
	\end{align}	
	This surprisingly simple result is due to a nice representation theorem of power symmetric functions in terms of Schur polynomials: \cite[Theorem~7.17.1]{stan99}. One gets this equation by setting $\mu=\emptyset$ and restricting the case to $a$ variables. Note that this is the analog of \eqref{eq:AplusB}. It is in one sense the reason for the nice closed-forms in this article.

	\smallskip	
	In contrast to Theorem~\ref{theo:closedformcoeff} (Closed-form for the sum of coefficients), we proceed now differently by Lagrange inversion~\cite{Lagrange70}. From the kernel method, we know that the small branches $u_i(z)$ satisfy the kernel equation $1-zP(u)=0$, where $P(u) = u^{-a} + u^c$ for general slope $a/c$. The entire form of the kernel equation satisfies nearly a Lagrangean scheme
	\begin{align*}
		u_i(z)^a &= z \left(1 + u_i(z)^{a+c} \right).
	\end{align*}
	By taking the $a$-th root, one gets for an auxiliary power series $U(x)$:
	\begin{align*}
		U(x) &= x \phi(U(x)), && \text{ with } & 
		\phi(u) &= \left(1 + u^{a+c} \right)^{1/a}.
	\end{align*}
	Let $\omega \neq 1$ be an $a$-th root of unity (i.e.,~$\omega^a=1$). Then we recover the $u_i(z)$, $i=1,\ldots,a$, by 
	\begin{align*}
		u_i(z) &= U\left( \omega^{i-1} z^{1/a} \right).
	\end{align*}
	Thus, coming back to \eqref{eq:AplusBgeneral} we are actually interested in 
	\begin{align*}
		\sum_{i=1}^a u_i(z)^{h} &= 
			\sum_{i=1}^a U\left(\omega^{i-1} z^{1/a}\right)^{h} = 
			\sum_{n \geq 0} U_{n} z^{n/a} \left(\sum_{i=1}^a \omega^{(i-1)n} \right) =
			a \sum_{n \geq 0} U_{a n} z^{n},
	\end{align*}
	where $U(x)^{h} = \sum_{n \geq 0} U_n x^n$ (in fact, by construction many coefficients $U_n$ are 0, because $U(z)$ has an $(a+c)$ periodic support, but this is not altering our reasoning hereafter). Considering~\eqref{eq:AplusBgeneral} again, we need $U_{an}$ for $n=(a+c)s+\ell-1$. 
	It is determined by the above Lagrangean scheme:
	\begin{align*}
		U_{an} &= [x^{a ((a+c)s+\ell-1)}] U(x)^h \\
		               &= \frac{\ell a + c}{a((a+c)s+\ell-1)} [u^{a ((a+c)s+\ell-1) - 1}] u^{\ell a + c -1} \left(1 + u^{a+c}\right)^{(a+c)s+\ell-1} \\
							&= \frac{\ell a + c}{a((a+c)s+\ell-1)} \binom{(a+c)s + \ell -1}{as-1}.
	\end{align*}
	Rewriting the binomial coefficient by symmetry, the claim follows.
\end{proof}

\begin{example}
	Knuth's original problem was dealing with boundaries $y = \frac{2}{5}{x} + \frac{k}{5}$, $(k=1,\ldots,4)$. In particular, we may choose $\ell = 0$, and $\ell = 1$ to get:
	\begin{align*}
		\sum_{k=1}^{2} A_{s}(k) &= \frac{5}{7s-1} \binom{7s-1}{2s-1} = \frac{2}{7s-1} \binom{7s-1}{2s},\\
		\sum_{k=3}^{4} A_{s}(k) &= \frac{1}{s} \binom{7s}{2s-1}.
	\end{align*}
	The first one is the known result, whereas the second one is yet another surprising identity.
\end{example}

Now, we come back to the asymptotics of Section~\ref{sec:asymptotics}. Some  key ingredients were Proposition~\ref{prop:periodicasympt} (Periodic rule of thumb) and the rotation law of the small branches. Happily, such a rotation law holds in general for any slope, and the derived techniques can also be applied. This is what we present now.

Let $P(u) = u^{-a} + u^c$ be the jump polynomial of directed walks. Thus, we have $a$ small branches $u_i(z)$ satisfying the kernel equation $1-zP(u_i(z)) = 0$. As before let $\tau$ be the unique positive root of $P'(\tau)$, and let $\rho$ be defined as $\rho = 1/P(\tau)$. 
Recall that the small branches are possibly singular only at the roots of $P'(u)$. 
The jump polynomial has periodic support with period $p=a+c$ as $P(u) = u^{-a} H(u^p)$ with $H(u) = 1 + u$. Hence, there are $p$ possible singularities of the small branches
\begin{align*}
	\zeta_k &= \rho \omega^k, \qquad \text{ with } \qquad \omega = e^{2 \pi i / p}.
\end{align*}

\pagebreak
The general version of Lemma~\ref{lem:u1u2} reads then as follows:

\begin{lemma}[Rotation law of small branches]
	\label{lem:uirotation}	
	Let $\gcd(a,c)=1$. Then there exists a permutation~$\sigma$ of $\{1,\ldots,p\}$ without fix points and an integer $\kappa$ (satisfying   $\kappa a +1  \equiv  0 \mod p$) such that 
	\begin{align*}
		u_i(\omega z) &= \omega^{\kappa} u_{\sigma(i)} (z),
	\end{align*}
	for all $z \in \C$ with $|z| \leq \rho$ and $0 < \arg(z) < \pi-2\pi/p$.
\end{lemma}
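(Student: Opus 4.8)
The plan is to generalise the ``mysterious quantity $X$'' trick used in the proof of Lemma~\ref{lem:u1u2}. First I would put the kernel equation in a Lagrangean, manifestly $p$-periodic form: since $P(u)=u^{-a}+u^{c}=u^{-a}H(u^{p})$ with $H(u)=1+u$ and $p=a+c$, multiplying $1-zP(u)=0$ by $u^{a}$ gives, for $u\neq0$, the equivalent equation $u^{a}=z\,\phi(u)$ with $\phi(u):=1+u^{p}$. Here $\gcd(a,p)=\gcd(a,a+c)=\gcd(a,c)=1$, so $a$ is invertible modulo $p$ and an integer $\kappa$ with the congruence in the statement exists. I would also recall (from~\cite{BaFl02}, or directly from the Newton polygon of $zu^{p}-u^{a}+z$, whose small-root edge has slope $-1/a$) that the kernel equation has exactly $a$ small branches $u_{1},\dots,u_{a}$ and $c$ large ones, that the small branches are pairwise distinct analytic functions on the slit domain, and that $u_{i}(z)=\gamma_{i}z^{1/a}+\LandauO(z^{(p+1)/a})$ where $\gamma_{1},\dots,\gamma_{a}$ are the $a$ distinct $a$-th roots of unity.

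Second, I would show that rotation by $\omega$ permutes the small branches up to the scalar $\omega^{\kappa}$. Fix $i$ and set $g_{i}(z):=\omega^{\kappa}u_{i}(\omega z)$, for $z$ in the sector $0<\arg z<\pi-2\pi/p$, $|z|\le\rho$ (chosen so that $\omega z$ stays in the analyticity domain of $u_{i}$; the finitely many singularities $\zeta_{k}$ on $|z|=\rho$ are isolated and do not obstruct the argument). A one-line computation using $\omega^{p}=1$ gives $g_{i}(z)^{a}=\omega^{a\kappa+1}\,z\bigl(1+u_{i}(\omega z)^{p}\bigr)$ and $1+g_{i}(z)^{p}=1+u_{i}(\omega z)^{p}$; the congruence satisfied by $\kappa$ is exactly what makes $\omega^{a\kappa+1}=1$, so $g_{i}^{a}=z\,\phi(g_{i})$, i.e.\ $g_{i}$ again solves the kernel equation. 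Since $g_{i}$ is analytic at $0$ with $g_{i}(0)=0$, it is one of the $a$ small branches; defining $\sigma(i)$ by $g_{i}=u_{\sigma(i)}$ and extending by analytic continuation over the whole sector yields the rotation law $u_{i}(\omega z)=\omega^{-\kappa}u_{\sigma(i)}(z)$ of the statement.

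Third, I would check that $\sigma$ is a permutation of $\{1,\dots,a\}$: if $\sigma(i)=\sigma(j)$ then $u_{i}(\omega z)=u_{j}(\omega z)$ identically, hence $u_{i}=u_{j}$ and $i=j$ because the small branches are distinct, and an injective self-map of a finite set is bijective. Fourth, I would prove that $\sigma$ has no fixed point; this is the one point with no analogue in the slope $2/5$ case, where $\sigma$ is visibly the transposition $(1\,2)$. Suppose $\sigma(i)=i$, i.e.\ $u_{i}(\omega z)=\omega^{-\kappa}u_{i}(z)$. Comparing leading Puiseux coefficients at $z=0$: the left-hand side starts with $\gamma_{i}(\omega z)^{1/a}=\gamma_{i}\,e^{2\pi i/(ap)}z^{1/a}$ (principal branches, legitimate since $\arg(\omega z)\in(0,\pi)$), the right-hand side with $\omega^{-\kappa}\gamma_{i}z^{1/a}$. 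Hence $e^{2\pi i/(ap)}=e^{-2\pi i\kappa/p}$, i.e.\ $ap\mid(1+a\kappa)$; but $a$ divides both $ap$ and $a\kappa$, so $a\mid1$, contradicting $a\ge2$ (the case $a=1$, slope $1/c$, being the classical tree case of Section~\ref{sec:imaginary}). Therefore $\sigma$ is fixed-point-free, which would complete the proof.

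I expect the only genuinely delicate points to be the bookkeeping of branch cuts and domains in the second and fourth steps — justifying that $g_{i}$ agrees with a single small branch throughout the sector, and that the leading-term comparison is valid on the rotated sector — while the algebraic identity in the second step is a direct transcription of the computation already carried out for slope $2/5$, and the permutation and fixed-point arguments are short once the correct congruence $\gcd(a,c)=1$ is exploited both modulo $p$ and, for the fixed-point claim, modulo $ap$.
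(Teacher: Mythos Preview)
Your approach is essentially the paper's: both verify that $g_i(z)=\omega^\kappa u_i(\omega z)$ again satisfies the kernel equation $u^a=z(1+u^p)$ via the $p$-periodicity of $\phi$ and the congruence $\kappa a+1\equiv 0\pmod p$, and then observe that $g_i(0)=0$ forces $g_i$ to be one of the small branches. Where the paper dispatches the fixed-point-free claim in a single line (``it has to be a different root, as it has a different Puiseux expansion''), you make it explicit via the leading term $\gamma_i z^{1/a}$ and the divisibility obstruction $a\nmid(1+a\kappa)$ for $a\ge 2$, correctly flagging that the case $a=1$ must be excluded --- a welcome elaboration, but not a different route.
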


\begin{proof}
		We proceed as in the proof of Lemma~\ref{lem:u1u2}. Define $U(z) := \omega^\kappa u_i(\omega z)$ and a function $X(z) := U^a - z \phi(U)$ with $\phi(u) := u^a P(u)$. Then a straightforward computation shows that 
		\begin{align*}
			X(z) &= \left( \omega^\kappa u_i(\omega z) \right)^a - z \phi \left( \omega^\kappa u_i(\omega z) \right) 
			     = \omega^{\kappa a} u_i(\omega z)^a - z \phi(u_i(\omega z)),
		\end{align*}
		as $\phi(u)$ is $p$-periodic. Therefore, we get by the following transformation
		\begin{align*}
			\omega X(z/\omega) &=  \omega^{\kappa a+1} u_i(z)^a - z \phi(u_i(z)) = 0,
		\end{align*}
		if $\kappa a+1 \equiv 0 \mod p$, because of the kernel equation. Thus, $X = U^a - z\phi(U)=0$ and therefore $U(z)$ is a root of the kernel equation. It has to be a small root, as it is converging to $0$ if $z$ goes to $0$.
		Furthermore, it has to be a different root, as it has a different Puiseux expansion. By the analytic continuation principle (as long as we avoid the cut line $\arg(z)=-\pi$) the result follows.
\end{proof}

The last lemma allows us to state the following ``meta''-result:

\begin{theo}[Metatheorem/rule of thumb: enumeration and asymptotics of lattice paths]
Constrained lattice paths have an algebraic generating function, expressible in terms of Schur functions (a symmetric function involving the small branches of the kernel).
Singularity analysis gives its asymptotic behaviour, 
which is equal to the asymptotics at the dominant real singularity (times the periodicity whenever the rotation law holds).
\end{theo}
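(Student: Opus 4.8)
The statement is a synthesis of the machinery assembled in the previous sections, so the plan is to collect the pieces rather than to prove something genuinely new; I indicate where each ingredient is used and where the real subtlety lies.

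First, the \emph{algebraicity and the Schur-function form}. For a directed step set, encode the final altitude of a meander by a catalytic variable $u$; the step-by-step construction gives, exactly as for \eqref{eq:funceq}, a functional equation $(1-zP(u))F(z,u) = f_0(u) - z\sum_{i=0}^{a-1} u^{-a+i}F_i(z)$, where (after the bijection of Proposition~\ref{prop:bijgen} and a time reversal) $P(u)=u^{-a}+u^c$ and $F_i(z)$ counts meanders ending at altitude $i$. Clearing $u^{-a}$, the kernel $K(z,u)=1-zP(u)$ is a polynomial of degree $a+c$ in $u$, and a Newton-polygon / Pringsheim argument (as in~\cite{BaFl02}) shows it has exactly $a$ roots $u_1(z),\dots,u_a(z)$ vanishing at $z=0$. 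Substituting these $a$ small branches (legitimate since $F(z,u_i(z))$ remains convergent) produces $a$ linear equations in the $a$ unknowns $F_0,\dots,F_{a-1}$ with a Vandermonde-type matrix; solving by Cramer's rule and dividing out the Vandermonde determinant exhibits each $F_i$ as a ratio of two alternants, i.e.\ a Schur polynomial in $u_1,\dots,u_a$ --- this is Lemma~\ref{lem:closedformgfgeneral} and formula \eqref{eq:Figeneral}. Since each $u_i$ is an algebraic function of $z$, every symmetric (in particular every Schur) polynomial in them is algebraic, hence so is each $F_i$, and likewise the refinements built on top of them (area, prescribed endpoints, as in Section~\ref{sec:duchon}).

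Second, the \emph{location of the dominant singularities and local behaviour}. Each $u_i$ continues analytically and can become singular only where two branches collide, which forces $P'(u)=0$; as $P$ has periodic support, $P(u)=u^{-a}H(u^{p})$ with $p=a+c$, the candidate singularities are $\zeta_k=\rho\,\omega^k$, $\omega=e^{2\pi i/p}$, where $\rho=1/P(\tau)$ and $\tau$ is the positive root of $P'$. At each $\zeta_k$ exactly one small branch acquires a square-root branch point and the others stay analytic there, which in the slope-$2/5$ case is Lemma~\ref{lem:u1u2} and in general follows from the rotation law of Lemma~\ref{lem:uirotation}. Hence each $F_i$, being a polynomial in the $u_j$, has a Puiseux expansion in $(1-z/\zeta_k)^{1/2}$ at every $\zeta_k$, and the transfer theorem \cite[Thm.~VI.3]{flaj09} applies at each dominant singularity, giving $[z^n]F_i(z)=\sum_k \exzkn F_i(z) + o(C^{-n})$ for some $C>\rho$.

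Third, the \emph{periodicity collapse}. The rotation law $u_i(\omega z)=\omega^{\kappa}u_{\sigma(i)}(z)$ of Lemma~\ref{lem:uirotation} propagates to a rotation law $F_i(\omega z)=\omega^{m_i}F_i(z)$ for the Schur-polynomial combinations, with $m_i$ read off from the degree of the Schur polynomial modulo $p$; Proposition~\ref{prop:periodicasympt} then collapses the sum over the $p$ dominant singularities to $p$ times the contribution at the real singularity $\rho$ (along the progression $n\equiv m_i\bmod p$, the other coefficients being $0$). Feeding the local data at $\rho$ --- obtained from the expansion of the colliding branch at $\tau$, exactly as in Section~\ref{sec:asymptotics} --- into the transfer theorem yields $[z^n]F_i(z)\sim C'\,\rho^{-n}n^{-3/2}$ with a computable constant, which is the assertion; when $\gcd(a,c)=d>1$ one first reduces by the obvious $z\mapsto z^{1/d}$-type substitution, and when the support is aperiodic the single real singularity already dominates and the parenthetical ``times the periodicity'' simply does not occur. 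The main obstacle, and the reason the statement is labelled a rule of thumb, is the genericity implicit in the second step: one must know that no cancellation in the Schur-polynomial combination destroys the dominant singularity (so that $[z^n]F_i$ genuinely grows like $\rho^{-n}$ rather than being $o$ of it), that the colliding branches meet with a true square root and not a higher-order contact (which would alter the exponent $n^{-3/2}$, as the vanishing $P^{(3)}(\tau)=0$ already shows at the level of the sub-dominant term), and that the initial polynomial $f_0(u)$ does not conspire to annihilate the leading coefficient; establishing these points requires the structural study of the kernel roots carried out above, after which the remainder is routine transfer-theorem bookkeeping.
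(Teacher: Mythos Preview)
Your proposal is appropriate and, in fact, goes further than the paper does. The paper does not offer a proof of this statement at all: immediately after stating it, the authors explain that ``we call this a metatheorem because it is rather informal in the description of the constraints allowed'', list a variety of settings (positivity, prescribed endpoints, cones, rational-slope barriers, Markovian behaviour, \dots) in which ``the spirit of the kernel method and analytic combinatorics should give the enumeration and the asymptotics'', and then simply point to \cite{BanderierDrmota,BaFl02,BaGi06,bani10,BousquetMelou08} for concrete incarnations. In other words, the paper treats the theorem as a summary slogan rather than a result to be proved.

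Your write-up is therefore not a competing proof but a faithful expansion of what the slogan packages: you correctly trace the algebraicity and Schur-polynomial form to Lemma~\ref{lem:closedformgfgeneral}, the singularity structure to Lemmas~\ref{lem:u1u2} and~\ref{lem:uirotation}, and the periodic collapse to Proposition~\ref{prop:periodicasympt}, and you rightly flag the genericity caveats (no cancellation of the dominant singular term, genuine square-root contact, no conspiracy from $f_0(u)$) as the reason the statement remains a rule of thumb rather than a theorem. That last paragraph is exactly the honest disclaimer the paper's own commentary implies; nothing in your synthesis is wrong or missing relative to what the paper provides.
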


We call this a metatheorem because it is rather informal in the description of the constraints allowed (it could be positivity, prescribed starting or ending points, 
to live in a cone, to stay below a line of rational slope, to have some additional Markovian behaviour, to be multidimensional with one border, or in bijection with any of these constraints...), 
 in all these cases the spirit of the kernel method and analytic combinatorics should give the enumeration and the asymptotics.
Different incarnations of this rule of thumb appear in~\cite{BanderierDrmota,BaFl02,BaGi06,bani10,BousquetMelou08}, and no doubt that many new lattice problems 
on the one hand, and many new combinatorial problems involving some type of periodicity on the other hand,
will offer additional incarnations of this metatheorem.

\begin{figure}[!ht] \begin{center}
\begin{tabular}{cc}
\begin{tabular}{c}
\includegraphics[width=5cm]{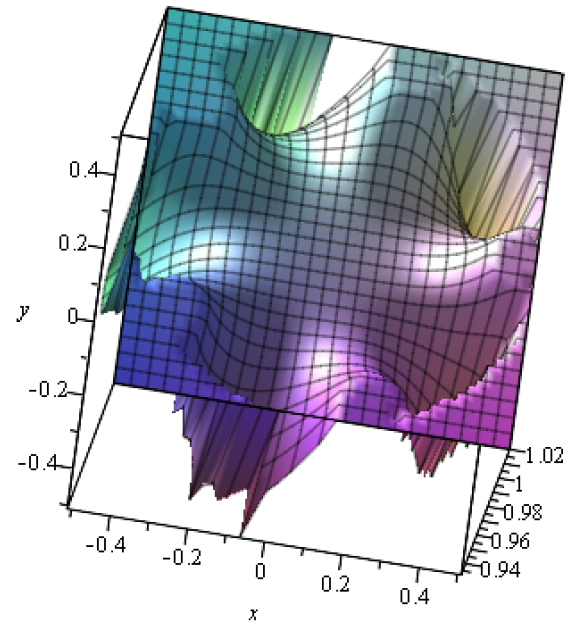}
\end{tabular}
&\begin{tabular}{c}
\begin{minipage}{0.5\textwidth}
This is the landscape in the complex plane of $|F(z)|$, where $F$ is here the generating function of Duchon's club excursions. One can see the five dominant singularities. It is enough to know the local behaviour 
near the real positive singularity, the rotation law implies the same behaviour at the other dominant singularities.\end{minipage}
\end{tabular}
\end{tabular}
\end{center}
\vspace{-1\baselineskip}
\caption{Landscape in the complex plane of the generating function of lattice paths.}
\label{duchon3d}
\end{figure}

 \pagebreak
\section{Conclusion}
\label{sec:conclusion}
In this article, we analysed some models of directed lattice paths below a line of rational slope.
As a guiding thread, we first illustrated our method on Dyck paths below the line of slope $2/5$.
Beside the (pleasant) satisfaction of answering a problem of Don Knuth,
this sheds light on properties of constrained lattice paths, including the delicate case (for analysis) of a periodic  
behaviour.

	We can shortly recall the main methods used in this article to attack lattice path problems:
	
	Firstly, the method of choice of Nakamigawa and Tokushige was the \emph{cycle lemma}. 
	It is a classical result for lattice paths which uses the geometry of the problem. However, its applications are limited to certain cases. 
	
	Secondly, a more general result is given in Theorem~\ref{theo:genclosedform} (General closed-forms for lattice paths below a rational slope $y=\frac{a}{c} x + \frac{b}{c}$),
	 via the \emph{Lagrange inversion}.  This directly gives the sought closed-form. 
	 However, it does not give access to the asymptotics.
	
	Thus, thirdly, we used the \emph{kernel method} to express the generating functions explicitly in terms of (known) algebraic functions. 
	This gave us access to the asymptotics, and is an alternative way to access the closed-forms.
	Our Proposition{~\ref{prop:periodicasympt} (Periodic  rule of thumb) explains in which way  the asymptotic expansions are modified in the case of a periodic behaviour
	(via some local asymptotics extractor and the rotation law);  we expect this approach to be reused in many other problems.
	
Also, the method of \emph{holonomy theory} used in Theorem~\ref{theo:closedformcoeff} (Closed-form for the sum of coefficients) shows the possible usage of computer algebra to {\em prove} such {\em conjectured} identities. This is probably the fastest technique for checking given identities, and can be automatized to a great extent.  The interested reader is referred to the nicely written introductions \cite{pewz96, KauersPaule11}.

Our approach extends to any lattice path (with any set of jumps of positive coordinates) below a line of (ir)rational slope  (see~\cite{BanderierWallner16}).
This leads to some nice universal results for the enumeration and asymptotics.
As an open question, it could be natural to look for similar results for lattice paths (with any set of jumps with positive and negative coordinates, and not just jumps to the nearest neighbours) in a cone given by two lines of rational slope.
This is equivalent to the enumeration of non-directed lattice paths in dimension $2$. 
Despite the nice approach from the probabilist 
school~\cite{Fayolle99,DenisovWachtel15} and from the combinatorial 
school~\cite{BousquetMelouMishna10} via the iterated kernel method, this remains a terribly simple problem (to state!),
but a challenge for the mathematics of this century.

\medskip

\textbf{Acknowledgments:}
\label{sec:ack}
This work is the result of a collaboration founded by the SFB project F50 ``Algorithmic and Enumerative Combinatorics'' 
and the Franco-Austrian PHC ``Amadeus''. Michael Wallner is supported by the Austrian Science Fund (FWF) grant SFB F50-03 and by \"OAD, grant F04/2012. 
A preliminary version of this work~\cite{BanderierWallner15} was presented at the conference ANALCO'15 (San Diego, January 2015) and at the 
8th International Conference on Lattice Path Combinatorics \& Applications (Pomona, August 2015).
Last but not least, we thank Don Knuth, Ernst Schulte-Geers, and Manuel Kauers for exchanging references on this problem, 
and the referee for the detailed feedback!

\addcontentsline{toc}{chapter}{References}
\bibliographystyle{plain}
\bibliography{knuthslopearxiv}
\label{sec:biblio}

\end{document}